\pgfplotsset{width=8cm,compat=newest}
\def\colorful{0}
\newcommand{\violet}[1]{{\color{violet}{#1}}}
\newcommand{\violet}[1]{{{#1}}}
\newtheorem{hypothesis}{Hypothesis}
\newlist{enumprop}{enumerate}{1} 
\setlist[enumprop]{label=\arabic*.,ref=\theproposition.\arabic*}
\newtheorem*{rep@theorem}{\rep@title}
\newcommand{\newreptheorem}[2]{
\newenvironment{rep#1}[1]{
 \def\rep@title{#2 \ref{##1}}
 \begin{rep@theorem}\itshape}
 {\end{rep@theorem}}}
\newcommand{\BlockwisePar}{\textnormal{BlockwisePar}}
\begin{document}


\title{Fast decision tree learning solves hard coding-theoretic problems\vspace{10pt}}

\author{ 
Caleb Koch \vspace{6pt} \\ 
{{\sl Stanford}} \and 
\hspace{5pt} Carmen Strassle \vspace{6pt} \\
\hspace{5pt} {{\sl Stanford}} \vspace{10pt} \and 
Li-Yang Tan \vspace{6pt}  \\
\hspace{-10pt} {{\sl Stanford}}
}

\date{\small{\today}}

 \maketitle

 \begin{abstract}
We connect the problem of properly PAC learning decision trees to the parameterized {\sc Nearest Codeword Problem} ($k$-NCP). Despite significant effort by the respective communities, algorithmic progress on both problems has been stuck: the fastest known algorithm for the former runs in quasipolynomial time (Ehrenfeucht and Haussler 1989) and the best known approximation ratio for the latter is $O(n/\log n)$ (Berman and Karpinsky 2002; Alon, Panigrahy, and Yekhanin 2009). Research on both problems has thus far proceeded independently with no known connections. 

We show that {\sl any} improvement of Ehrenfeucht and Haussler’s algorithm will yield $O(\log n)$-approximation algorithms for $k$-NCP, an exponential improvement of the current state of the art. This can be interpreted either as a new avenue for designing algorithms for $k$-NCP, or as one for establishing the optimality of Ehrenfeucht and Haussler’s algorithm. Furthermore, our reduction along with existing inapproximability results for $k$-NCP already rule out polynomial-time algorithms for properly learning decision trees. A notable aspect of our hardness results is that they hold even in the setting of {\sl weak} learning whereas prior ones were limited to the setting of strong learning. 
\end{abstract} 

 \thispagestyle{empty}
 \newpage

 \thispagestyle{empty}
 \setcounter{tocdepth}{2}
 \tableofcontents
 \thispagestyle{empty}
 \newpage 

 \setcounter{page}{1}

\section{Introduction}

This paper connects two fundamental problems from two different areas, learning theory and coding theory.

\medskip

\begin{tcolorbox}[colback = white,arc=1mm, boxrule=0.25mm]
\paragraph{Properly PAC Learning Decision Trees ({\sc DT-Learn}).} Given random examples generated according to a distribution $\mathcal{D}$ and labeled by a function $f$, find a small decision tree that well-approximates~$f$. 
\end{tcolorbox}
\medskip 

The fastest known algorithm for this problem is due to Ehrenfeucht and Haussler from 1989 and runs in quasipolynomial time:

\newtheorem*{theorem*}{Theorem}

\begin{theorem*}[\cite{EH89}]
 There is an algorithm that, given random examples $(\bx,f(\bx))$ where $f : \zo^n \to \zo$ is a size-$s$ decision tree and $\bx$ is drawn from a distribution $\mathcal{D}$ over $\zo^n$, runs in $\poly(n^{\log s},1/\eps)$ time and returns a decision tree $T$ that is $\eps$-close to $f$ under~$\mathcal{D}$. \end{theorem*}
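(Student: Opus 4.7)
The plan is to set up a dynamic program over pairs $(\pi, t)$, where $\pi \in \{0,1,*\}^n$ is a partial assignment (interpreted as a root-to-node path) and $t$ is a size budget. Define $V(\pi, t)$ to be the minimum empirical error over size-$\le t$ decision trees on the sub-sample consistent with $\pi$. The natural recurrence is
\[
V(\pi, t) = \min\!\Bigl(E_{\mathrm{leaf}}(\pi),\ \min_{x_i,\, t_0+t_1=t-1} V(\pi \cup \{x_i{=}0\}, t_0) + V(\pi \cup \{x_i{=}1\}, t_1)\Bigr),
\]
where $E_{\mathrm{leaf}}(\pi)$ is the best-constant-leaf error on the sub-sample; I would fill this bottom-up in $(|\pi|, t)$ order and recover the minimizing tree via back-pointers.

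To reach the claimed runtime I would cap $|\pi| \le \log s$ throughout the DP. With this cap, the number of partial assignments to memoize is $\binom{n}{\le \log s}\cdot 2^{\log s} = n^{O(\log s)}$, and each of the $O(n s)$ transitions can be evaluated with $\poly(m)$ bookkeeping for the sample size $m$. A standard uniform-convergence argument over the induced hypothesis class of depth-$\log s$ decision trees (VC dimension $\poly(s \log n)$) bounds the required sample size at $m = \poly(n^{\log s}, 1/\eps)$, so the overall runtime is $\poly(n^{\log s}, 1/\eps)$ as claimed.

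The main obstacle is justifying the cap $|\pi| \le \log s$. Concretely, one must prove a structural lemma: every size-$s$ decision tree $f$ admits, under any distribution $\mathcal{D}$, an $\eps$-close decision-tree approximator whose root-to-leaf paths all have length $O(\log s)$. The natural approach is a two-stage pruning of the target $T^{\star}$. Stage one collapses every subtree whose reach-probability under $\mathcal{D}$ is below $\eps/s$ into a constant leaf, which loses at most $\eps$ total probability by union bound over the $\le s$ leaves. Stage two rebalances the surviving heavy portion into a shallow structure of depth $O(\log s)$, exploiting that once the light parts are cut away the remaining probability mass is concentrated on few paths. Making the rebalancing rigorous---given the a priori highly unbalanced shape a size-$s$ decision tree can have---is the delicate technical heart of Ehrenfeucht and Haussler's analysis; the rest of the proof is a routine DP plus sample-complexity calculation.
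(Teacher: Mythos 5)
The decisive step of your argument---the structural lemma justifying the cap $|\pi|\le \log s$---is false, and no amount of care in the ``rebalancing'' stage can repair it. Counterexample: let $f=x_1\vee\cdots\vee x_s$, realized as a size-$(s+1)$ decision tree (the path querying $x_1,x_2,\dots$ in order), and let $\mathcal{D}$ be uniform on the $s+1$ points $\{0^n,e_1,\dots,e_s\}$. Any decision tree $T$ of depth $d$ sends the input $0^n$ down a single root-to-leaf path querying some variables $x_{i_1},\dots,x_{i_d}$; every $e_j$ with $j\notin\{i_1,\dots,i_d\}$ is $0$ on all of those variables and hence follows the same path, so $T$ assigns the same label to $0^n$ and to all $s-d$ such $e_j$, whereas $f$ separates them. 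Thus $T$ errs on at least $\min(1,\,s-d)$ of the $s+1$ points, i.e.\ its error under $\mathcal{D}$ is at least $(s-d)/(s+1)$, which for $d=O(\log s)$ is $1-o(1)$ rather than $\eps$. Note also that your stage-one pruning removes nothing in this example (every leaf has mass $1/(s+1)\ge \eps/s$), so the failure is not an artifact of the pruning threshold; the ``delicate technical heart'' you defer to simply does not exist.

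What Ehrenfeucht and Haussler actually control is not the depth but the \emph{rank} of the tree (the Strahler number: a leaf has rank $0$; an internal node whose children have ranks $r_0,r_1$ has rank $\max(r_0,r_1)$ if $r_0\neq r_1$ and $r_0+1$ otherwise). It is a purely syntactic fact---no approximation and no distributional argument needed---that every size-$s$ tree has rank at most $\log_2 s$; in the example above the path-shaped tree has rank $1$ despite depth $s$. Your DP skeleton is salvageable if the budget is a rank budget rather than a size budget with a depth cap: the recursive search for a rank-$r$ tree consistent with the sample branches over the root variable and over which child receives rank $r-1$, the recursion is bounded by $n^{O(r)}\cdot\poly(m)$, and Occam's razor applied to the (small) class of rank-$r$ trees gives $m=\poly(n^{\log s},1/\eps)$. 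With depth in place of rank, the algorithm searches the wrong hypothesis class and cannot achieve error $\eps$ on distributions like the one above.
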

 
There are no known improvements to~\cite{EH89}'s algorithm even in the setting of {\sl weak} learning where $T$ only has to be mildly correlated with $f$ (i.e.~for values of $\eps$ close to $\frac1{2}$).  

\medskip

\begin{tcolorbox}[colback = white,arc=1mm, boxrule=0.25mm]
\paragraph{Parameterized Nearest Codeword Problem ({\sc $k$-NCP}).} Given the generator matrix of a linear code of dimension $n$, a received word $z$, and a parameter $k$, decide if there is a codeword within Hamming distance $k$ of $z$. 
\end{tcolorbox}
\medskip 

This problem is $\mathrm{W}[1]$-hard~\cite{DFVW99},  so it is natural to seek approximation algorithms. The current best algorithms achieve an approximation ratio of $O(n/\log n)$:

\begin{theorem*}[\cite{BK02,APY09}]
\label{thm:approx algs for k-NCP}
There is an algorithm that, given the generator matrix of a linear code $\mathcal{C}$ of dimension~$n$, a received word $z$, a parameter $k$, and the promise that there is a codeword of $\mathcal{C}$ within distance $k$ of $z$, runs in polynomial time and returns a codeword within distance $\alpha k$ of~$z$ where $\alpha = O(n/\log n)$.  
\end{theorem*}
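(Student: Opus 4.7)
The plan is to design a polynomial-time algorithm that combines a trivial bound when $k$ is large with an enumeration-based strategy when $k$ is small. The main technical ingredient is enumeration over $O(\log n)$-bit guesses, each of which reduces the problem to a linear-algebraic subproblem solvable via Gaussian elimination.

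First, I would handle the case $k \geq n/\log n$ with a trivial algorithm: outputting the all-zero codeword (or any fixed codeword) gives distance at most the block length $N$ from $z$, where without loss of generality one can assume $N = O(n)$ after removing redundant columns from the generator matrix. The resulting approximation ratio is then $O(\log n)$, which lies within the target $O(n/\log n)$. So the real work lies in the small-$k$ regime.

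For $k < n/\log n$, the optimal codeword $c^\star$ agrees with $z$ on a $(1-o(1))$-fraction of coordinates. The algorithm enumerates polynomially many subsets $T \subseteq [N]$ of size $\log n$; for each $T$, it enumerates all $2^{|T|} = n$ possible values of $c|_T$, and for each assignment uses Gaussian elimination to find the affine subspace of codewords consistent with that boundary condition. It then selects the codeword from this subspace minimizing distance to $z$ (via a further enumeration or greedy step), and the final output is the best candidate across all enumerations. The total runtime is $\poly(n, N)$.

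The approximation analysis rests on a probabilistic argument: a uniformly random $T$ of size $\log n$ lies entirely inside the $N - k$ positions where $c^\star$ and $z$ agree with non-negligible probability, so the induced linear system admits $c^\star$ as a solution, and with good probability a polynomial-sized collection of $T$'s will hit this event. The main obstacle I expect is controlling the distance blow-up from the residual $(n - \log n)$-dimensional solution space: one must argue that selecting a near-optimal codeword within this subspace costs at most a multiplicative $O(n/\log n)$ factor over $k$. This likely requires a secondary approximation primitive --- such as an LP relaxation or randomized rounding restricted to the residual code --- whose compounded ratio with the $\log n$ savings from the outer enumeration yields the final $O(n/\log n)$ bound.
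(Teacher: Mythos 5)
This theorem is quoted from prior work \cite{BK02,APY09} and is not proved in the paper, so there is no in-paper argument to compare against; judged on its own terms, your proposal has two genuine gaps. First, the base case is unsound: you cannot assume ``WLOG $N = O(n)$ after removing redundant columns.'' The block length $N$ of a code of dimension $n$ can be arbitrarily large (even exponential in $n$), and deleting coordinate positions changes all Hamming distances, so this is not a harmless normalization. The correct trivial regime is $k \geq N\log n/(Cn)$, where outputting any codeword gives ratio $N/k \leq Cn/\log n$; the interesting regime is when the \emph{relative} error rate $k/N$ is below $\log n / n$, so the case split must be phrased in terms of $k/N$, not $k$ alone.

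Second, and more fundamentally, your main step does not actually make progress on the problem. Correctly guessing $c^\star$ on $\log n$ positions pins down at most $\log n$ of the $n$ dimensions of the code; the ``affine subspace of codewords consistent with that boundary condition'' is a coset of a code of dimension $n - \Theta(\log n)$, and ``selecting the codeword from this subspace minimizing distance to $z$'' is again an instance of NCP of essentially the original size. You defer this to an unspecified ``secondary approximation primitive --- such as an LP relaxation or randomized rounding,'' but no such primitive with a nontrivial ratio is known for NCP over $\F_2$ (and if one existed, the outer enumeration would be superfluous); the ``compounded ratio'' reasoning is not a proof. The missing idea in the actual algorithms is \emph{iteration}: one runs $\Theta(n/\log n)$ rounds, each round sampling positions among those not yet spanned to fix $\Theta(\log n)$ further dimensions, and the analysis bounds the distance contributed per round by $O(k)$ (in expectation, or with noticeable probability per round), giving total distance $O(k \cdot n/\log n)$; \cite{APY09} then derandomizes this sampling. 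A single round of your scheme, followed by an unspecified solver for the residual code, cannot yield the claimed bound.
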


Berman and Karpinsky's algorithm is randomized whereas Alon, Panigrahy, and Yekhanin's is deterministic. Note that $k$-NCP can be solved exactly (i.e.~with $\alpha = 1$) in time $n^{O(k)}$. There are no known algorithms that run in time $n^{o(k)}$ and achieve an approximation ratio of $\alpha = o(n/\log n)$.

\subsection{Motivation for both problems}

Both problems are central and well-studied in their respective fields of learning theory and coding theory. Part of the theoretical interest in {\sc DT-Learn}---specifically, {\sl proper} learning of decision trees---stems from the role that decision trees play in machine learning practice. They are the prime example of an interpretable hypothesis, and a recent survey of interpretable machine learning~\cite{RCCHSZ22} lists the problem of constructing optimal decision tree representations of data as the very first of the field's “10 grand challenges”. 

\cite{EH89}’s algorithm was one of the earliest PAC learning algorithms, coming on the heels of Valiant’s introduction of the model~\cite{Val84}. Numerous works have since obtained faster algorithms for variants of the problem~\cite{Bsh93,KM93,SS93,JS05,OS07,GKK08,KST09,BLT-ITCS,BLQT22,Boi24}, but~\cite{EH89}’s algorithm for the original problem has  resisted improvement.  Indeed, faster algorithms for {\sc DT-Learn} are known to have significant consequences within learning theory. Even just under the uniform distribution, {\sc DT-Learn} contains as a special case the {\sl junta problem}~\cite{Blu94,BL97}, which itself has been called “the most important problem in uniform distribution learning”~\cite{MOS04}. Since every $k$-junta is a decision tree of size $s \le 2^k$, an $n^{o(\log s)}$ time algorithm for {\sc DT-Learn} gives an $n^{o(k)}$ time algorithm for learning $k$-juntas---this would be a breakthrough, as the current fastest algorithms run in $n^{ck}$ time for some constant $c < 1$~\cite{MOS04,Val15}. Far less is known about connections between {\sc DT-Learn} and problems {\sl outside} of learning theory. 

The {\sc Nearest Codeword Problem} (NCP), also known as {\sc Maximum Likelihood Decoding}, has been called “probably the most fundamental computational problem on linear codes”~\cite{Mic}. While   specific codes are often designed in tandem with fast decoding algorithms, results on the general problem have skewed heavily towards the side of hardness. NCP was proved to be NP-complete by Berlekamp, McEliece, and van Tilborg in 1978~\cite{BMvT78}. Various aspects of its complexity has since been further studied in multiple lines of work, including the hardness of approximation~\cite{ABSS97,DKS98,DKRS03,DMS03,Ale11}; hardness with preprocessing~\cite{BN90,Lob90,Reg03,FM04,GV05,AKKV11,KPV14}; hardness under ETH and SETH~\cite{BIWX11,SDV19}; and most relevant to this work, hardness in the parameterized setting~\cite{DFVW99,ALW14,BELM16,BGKM18,Man20,LRSW22,BCGR23,LLL24,GLRSW24}. On the other hand, the only known algorithms are those of~\cite{BK02,APY09}.

\section{Our results}

We show how algorithms for {\sc DT-Learn} yield approximation algorithms for {\sc $k$-NCP}. Before stating our result in its full generality (\Cref{thm:most general} below), we first list a couple of its consequences. One instantiation of parameters shows that {\sl any} improvement of~\cite{EH89}'s runtime, even in the setting of weak learning, will give new approximation algorithms for $k$-NCP with exponentially-improved approximation ratios: 

\begin{corollary}
\label{cor:improving EH gives new NCP algorithms} 
Suppose there is an algorithm that given  random examples generated according to a distribution $\mathcal{D}$ over $\zo^n$ and labeled by a size-$s$ decision tree runs in time $n^{o(\log s)}$ and w.h.p.~outputs a decision tree with accuracy $\frac{1}{2}+\frac{1}{\poly(n)}$ under $\mathcal{D}$. Then for $k =  \Theta(\log s)$ there is a randomized algorithm running in time $n^{o(k)}$ which solves $O(\log n)$-approximate $k$-\textnormal{NCP}. 
\end{corollary}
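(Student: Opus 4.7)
The plan is to reduce $k$-\textsc{NCP} to DT-learning by using the syndrome form of decoding together with the observation that an $\mathbb{F}_2$-parity $\chi_e(v) = \langle v, e \rangle \bmod 2$ supported on $k$ coordinates is computable by a decision tree of size $2^k$. Given an NCP instance $(G, z, k)$ with $G \in \mathbb{F}_2^{m \times n}$, first compute a parity-check matrix $H \in \mathbb{F}_2^{(m-n) \times m}$ with $HG = 0$ and set the syndrome $\mathbf{s} = Hz$; the problem becomes that of finding $e \in \{0,1\}^m$ with $|e| \le k$ and $He = \mathbf{s}$. Now build a DT-learning instance over $\{0,1\}^m$: to sample, draw $\mathbf{a} \sim \{0,1\}^{m-n}$ uniformly and emit the example $\mathbf{v} = \mathbf{a}^\top H$ with label $\langle \mathbf{a}, \mathbf{s} \rangle$. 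Because $\langle \mathbf{a}, \mathbf{s} \rangle = \langle \mathbf{a}, H e^* \rangle = \langle \mathbf{v}, e^* \rangle$ for the minimum-weight solution $e^*$, the label function on the support of this distribution coincides with the parity $\chi_{e^*}$, a decision tree of size at most $2^k$. The distribution on $\mathbf{v}$ is uniform on the rowspan of $H$, a subspace $V$ of dimension $m - n$.

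Calling the hypothesized learner with variable count $N = m$ (which may be taken to be $\poly(n)$ without loss of generality) and size bound $s = 2^k$ returns, in time $N^{o(\log s)} = n^{o(k)}$, a decision tree $T$ with accuracy $\tfrac{1}{2} + 1/\poly(n)$ against the parity $\chi_{e^*}$. The next step is to turn $T$ into a low-weight $e'$ with $He' \approx \mathbf{s}$. The intended mechanism is a Fourier/Plancherel argument on the subspace $V = \mathrm{rowspan}(H)$: the characters of $V$ are indexed by cosets of $V^\perp = \mathcal{C}$, so any function correlated with $\chi_{e^*}|_V$ identifies the coset $e^* + \mathcal{C}$. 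Either (i) estimate a heaviest Fourier coefficient on $V$ Goldreich--Levin-style using $T$ as the oracle, then pick a low-weight coset representative; or (ii) exploit the bounded size of $T$ to enumerate weight-$O(k \log n)$ candidate supports and solve $He' = \mathbf{s}$ for each, keeping the best.

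The $O(\log n)$ approximation factor should emerge from amplifying the $1/\poly(n)$ weak-learning advantage---at the cost of multiplying the allowed weight by $O(\log n)$---either by iteratively calling the learner on the residual syndrome $\mathbf{s} - He^{(i)}$, with each of $O(\log n)$ rounds contributing at most weight $O(k)$, or by a boosting-style aggregation of $O(\log n)$ weakly correlated trees. I expect the main obstacle to be the extraction step: transferring the standard Fourier argument from uniform on $\{0,1\}^m$ to uniform on the subspace $V$, with the added wrinkle that the identified coset has many representatives and the minimum-weight one is itself an NCP. A plausible resolution is to trade weight for recoverability---output any coset representative whose weight is within an $O(\log n)$ factor of optimal---which is exactly what the $O(\log n)$-approximation target allows.
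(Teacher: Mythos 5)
Your setup---passing to the syndrome view, sampling $\mathbf{v}=\mathbf{a}^{\top}H$ uniformly from the rowspan with label $\langle \mathbf{a},\mathbf{s}\rangle$, and observing that the target is the $k$-parity $\chi_{e^*}$---is exactly the paper's first step (the linear extension $f^{\mathrm{ext}}$ on $\mathrm{Span}(D)$, \Cref{lem:boosting}). The genuine gaps are in the extraction. First, you are missing a uniformizing gadget: the paper composes the distribution with a blockwise-parity substitution (\Cref{def:parity-sub}, \Cref{prop:uniform-like}) precisely so that a size-$s'$ hypothesis tree can be truncated at depth $O(\log s')$ while losing only $\mathrm{poly}(1/s')$ in accuracy (\Cref{claim:pruning-depth}). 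Under $\mathrm{Unif}(V)$ for an arbitrary subspace $V$ this fails: a path of length $d$ can be followed with probability far exceeding $2^{-d}$ (coordinates can be constant or heavily correlated on $V$), so truncation can destroy the entire $1/\poly(n)$ advantage, and without truncation the Fourier support of $T$ contains sets as large as its depth, up to $s'-1=n^{o(k)}$, so neither of your extraction routes yields a \emph{sparse} parity. Second, your route (i) is circular: Goldreich--Levin over $V$ identifies the coset $e^*+\mathcal{C}$ but gives no sparse representative, and ``pick a low-weight coset representative'' is verbatim the NCP instance you started with. Your route (ii), enumerating all weight-$O(k\log n)$ supports, costs $m^{\Theta(k\log n)}$ time, vastly exceeding the $n^{o(k)}$ budget. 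The paper's resolution is to read the sparse support off the \emph{truncated tree itself}: a depth-$d$ tree has at most $4^{d}$ nonzero Fourier coefficients, all supported on variables queried along paths, so one of them is a $d$-parity with correlation at least $\gamma 4^{-d}$ (\Cref{lem:extraction}); the exact $0$-vs-$\tfrac12$ dichotomy of \Cref{lem:boosting} then upgrades any positive correlation to exact agreement, i.e., an actual solution $He'=\mathbf{s}$ with $|e'|\le d/\ell$.

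Relatedly, your account of where the $O(\log n)$ factor comes from is not the right mechanism. No iteration on residual syndromes and no boosting is needed (boosting would also wreck properness by producing a majority of trees). The learner is called once; the factor $O(\log n)$ is simply the ratio between the truncation depth $\Theta(\log s') = \Theta(k\log n)$ forced by the hypothesis size $s'\le n^{O(k)}$ and the target sparsity $k$. The constraints that make this work are that the learner's advantage $1/\poly(n)$ must dominate both the truncation loss $\approx 2^{-\Omega(k\log n)}=n^{-\Omega(k)}$ and survive the $4^{-d}$ dilution in the Fourier-support argument---both of which hold exactly when $\alpha=\Theta(\log n)$. Your iterative scheme has no progress guarantee: after subtracting $He^{(i)}$ the residual instance's nearest codeword need not be closer, and the weak learner's guarantee is only for targets that \emph{are} small decision trees, which the residual labeling need not be.
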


A different instantiation of parameters shows that a {\sl polynomial-time} algorithm for properly learning decision trees, again even in the setting of weak learning, will give {\sl constant-factor} approximation algorithms for $k$-NCP. Since the latter has been ruled out under standard complexity-theoretic assumptions~\cite{BELM16,Man20,LLL24}, we get: 

\begin{corollary}
\label{cor:W1 hardness of polytime}
    Assuming $\mathrm{W}[1] \ne \mathrm{FPT}$, there is no polynomial-time algorithm for properly learning decision trees, even in the setting of weak learning. 
\end{corollary}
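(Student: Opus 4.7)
The plan is to derive Corollary \ref{cor:W1 hardness of polytime} by instantiating the general reduction of \Cref{thm:most general} at parameters that translate a hypothetical polynomial-time weak learner for decision trees into an FPT constant-factor approximation algorithm for $k$-\textnormal{NCP}, and then invoking the known $\mathrm{W}[1]$-hardness of the latter.

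Concretely, I would proceed in three steps. First, suppose for contradiction that there is a polynomial-time algorithm which, given random examples labeled by a size-$s$ decision tree, outputs with high probability a hypothesis of accuracy $\frac{1}{2}+\frac{1}{\poly(n)}$. Second, I would feed this algorithm into the general reduction (\Cref{thm:most general}), tuning parameters analogously to---but more aggressively than---\Cref{cor:improving EH gives new NCP algorithms}: since polynomial time is much faster than the $n^{o(\log s)}$ threshold used there, one can afford a value of $s$ polynomial in $n$ (so that $k = \Theta(\log s) = \Theta(\log n)$) while still having the reduction output a codeword within a \emph{constant} factor of the optimum distance $k$. The overall algorithm would then run in $\poly(n)$ time, which is in particular FPT in $k$. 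Third, I would invoke \cite{BELM16,Man20,LLL24}, which rule out FPT constant-factor approximation of $k$-\textnormal{NCP} assuming $\mathrm{W}[1]\ne\mathrm{FPT}$, yielding the desired contradiction. The weak-learning qualifier in the corollary is inherited directly: the learner need only achieve $\frac{1}{2}+\frac{1}{\poly(n)}$ advantage, which is exactly what \Cref{thm:most general} exploits in its decoding step.

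The main obstacle---essentially the content of \Cref{thm:most general}---is to show that a $\frac{1}{2}+\frac{1}{\poly(n)}$ weak-learning advantage under the reduction's constructed distribution suffices for the decoding step to recover a codeword within a constant factor of optimum. Put differently, the reduction must be delicate enough that even a barely-better-than-random hypothesis yields a meaningful Hamming-distance guarantee, and the mapping between the learner's running time budget and the attainable approximation ratio has to be tight enough that $\poly(n)$ time yields $O(1)$-approximation rather than, say, $O(\log n)$-approximation. Once that mechanism is established, deducing Corollary \ref{cor:W1 hardness of polytime} reduces to the parameter substitution outlined above together with the off-the-shelf $\mathrm{W}[1]$-hardness of approximate $k$-\textnormal{NCP}.
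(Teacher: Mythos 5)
Your overall strategy is exactly the paper's: instantiate \Cref{thm:most general} with a \emph{constant} approximation factor $\alpha = c$, observe that a $\poly(n)$-time weak learner then yields a $\poly(n)$-time (hence FPT) algorithm for $c$-approximate $k$-NCP, and contradict the $\mathrm{W}[1]$-hardness of constant-factor approximation (\Cref{thm:w1-hardness}). The parameter checks you defer are routine: with target size $n$ and hypothesis error $\frac12 - n^{-c'}$, taking $c$ large enough relative to $c'$ makes the hypothesis size fit inside $2^{O(c\ell k)} = n^{O(c)}$ and the error fit inside $\frac12 - 2^{-\Omega(c\ell k)} = \frac12 - n^{-\Omega(c)}$.

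There is one concrete gap in how you set parameters. You write that one takes ``$k = \Theta(\log s) = \Theta(\log n)$,'' i.e.\ you tie the NCP parameter to $n$. But \Cref{thm:w1-hardness} is a statement about the \emph{parameterized} problem: to contradict it you must solve $c$-approximate $k$-NCP in time $\Phi(k)\cdot\poly(n)$ for \emph{every} value of the parameter $k$, and the hard instances have $k$ fixed (or growing arbitrarily slowly) while $n\to\infty$. An algorithm that only handles $k = \Theta(\log n)$ does not refute $\mathrm{W}[1]$-hardness. The missing ingredient is the padding parameter $\ell$ in \Cref{thm:most general}, which you never invoke: the paper sets $\ell = (\log n)/k$ (assuming $\log n \ge k$), so that the target decision tree has size $2^{\ell k} = n$ \emph{regardless of how small $k$ is}, the learner's $\poly(n)$ runtime and $n^{-c'}$ advantage are measured against this padded instance, and the reduction's overhead $O(\ell n^2)\cdot t + \poly(n,\ell,2^{c\ell k})$ stays polynomial in $n$. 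With that substitution your argument matches the paper's proof.
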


That is, there is no algorithm that given random examples generated according to a distribution~$\mathcal{D}$ over $\zo^n$ and labeled by a size-$n$ decision tree, runs in $\poly(n)$ time and w.h.p.~outputs a decision tree hypothesis that achieves accuracy $\frac1{2} + \frac1{\mathrm{poly}(n)}$ under~$\mathcal{D}$. Prior to our work, there were no results ruling out polynomial-time algorithms achieving error $\eps =  0.01$, much less $\eps = \frac1{2} - o(1)$. \violet{See \Cref{fig:results} for an illustration of our results.}

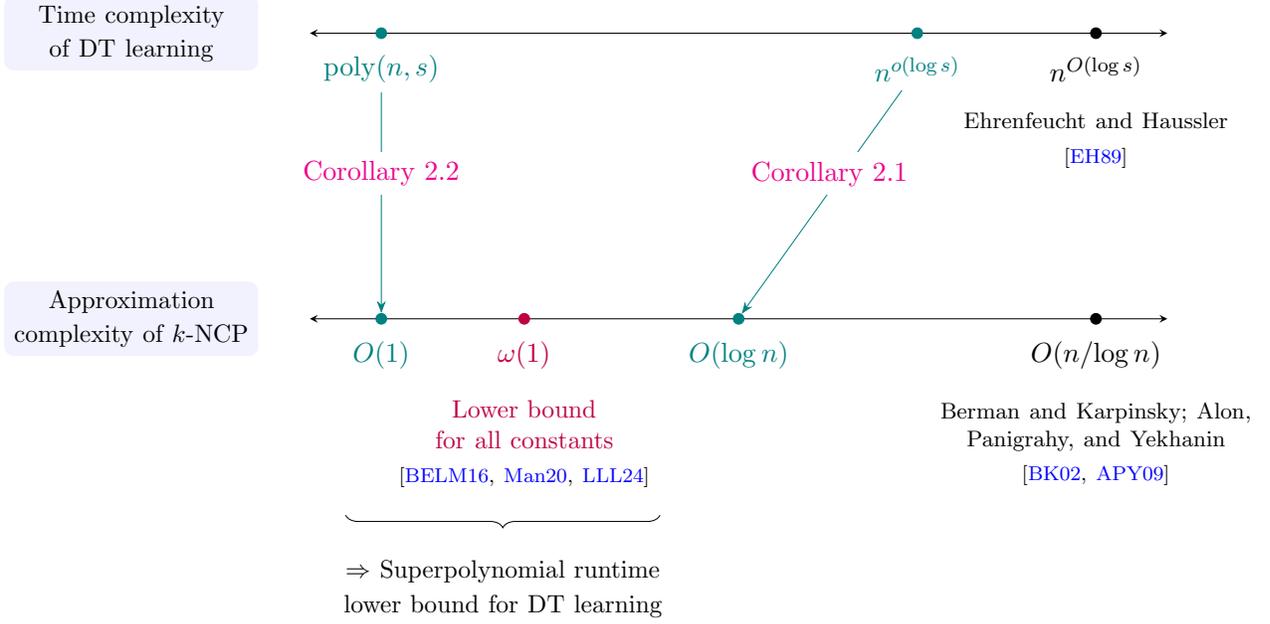
\begin{figure}[t]
    \centering
    \scalebox{0.95}{
    \begin{tikzpicture}[]
        \def\x{6}
        \def\xh{1.83}
        \def\xt{3.67}
        \def\xf{7.5}
        \def\c{0}
        \def\l{-0.1}
        \def\r{0.1}

        \draw[black,stealth-stealth] (-\x,0) node[left] {{}} -- (\x,0) node[left]{{}};
        \draw[black,stealth-stealth] (-\x,4) node[left] {{}} -- (\x,4) node[left]{{}};
     
        
        \draw[xshift=-1 cm] (-\xf,0)  node [rounded corners, text centered, fill=blue!5, text width=3.3 cm] {\small {Approximation complexity of $k$-NCP} };
        
        \draw[xshift=-1 cm] (-\xf,4)  node [rounded corners, text centered, fill=blue!5, text width=3.3 cm] {\small {Time complexity of  DT learning} };

        \node[draw,circle,fill=black,inner sep=1.5pt] (EH) at (5,4) {};
        \node[draw,circle,fill=black,inner sep=1.5pt,color=teal] (GETH) at (2.5,4) {};
        \node[draw,circle,fill=black,inner sep=1.5pt,color=teal] (POLY) at (-5,4) {};

        \node[draw,circle,fill=black,inner sep=1.5pt] (N) at (5,0) {};
        \node[draw,circle,fill=black,inner sep=1.5pt,color=teal] (LOGN) at (0,0) {};
        \node[draw,circle,fill=black,inner sep=1.5pt,color=teal] (CON1) at (-5,0) {};
        \node[draw,circle,fill=black,inner sep=1.5pt,color=purple] (CON2) at (-3,0) {};

        \draw[teal,-Stealth] {[yshift=-0.5 cm](2.5,4)} to node[midway,fill=white] {\Cref{cor:improving EH gives new NCP algorithms}} (LOGN);
        \draw[teal,-Stealth] {[yshift=-0.5 cm](-5,4)} to node[midway,fill=white] {\Cref{cor:W1 hardness of polytime}} (CON1);
        
        \draw[color=black] (EH) node[yshift=-0.5 cm,fill=white] {$n^{O(\log s)}$};
        \draw[color=teal] (GETH) node[yshift=-0.5 cm,fill=white] {$n^{o(\log s)}$};
        \draw[color=teal] (POLY) node[yshift=-0.5 cm,fill=white] {$\poly(n,s)$};

        \draw[color=black] (N) node[yshift=-0.5 cm,fill=white] {$O(n/{\log n})$};
        \draw[color=teal] (LOGN) node[yshift=-0.5 cm,fill=white] {$O({\log n})$};
        \draw[color=purple] (CON2) node[yshift=-0.5 cm,fill=white] {{$\omega(1)$}};
        \draw[color=teal] (CON1) node[yshift=-0.5 cm,fill=white] {$O(1)$};

        \draw[color=black] (EH) node[yshift=-1.5 cm,fill=white,text width=4 cm, text centered] {\footnotesize Ehrenfeucht and Haussler \\ \scriptsize{\cite{EH89}}};
        \draw[color=black] (N) node[yshift=-1.75 cm,fill=white,text width=4.5 cm, text centered] {\footnotesize Berman and Karpinsky; Alon, Panigrahy, and Yekhanin \\ {\scriptsize\cite{BK02,APY09}}};
        \draw[color=black] (CON2) node[yshift=-1.75 cm,fill=white,text width=4.5 cm, text centered] {\small {\color{purple}{Lower bound for all constants} \\ \color{black}{\scriptsize{\cite{BELM16,Man20,LLL24}}}}};

        \path[draw,decorate,decoration={brace,mirror,amplitude=5pt}] (-5.5,-2.75) -- (-1.1,-2.75) node[midway,text width=5.5 cm, text centered,below=0.5cm]{\small {$\Rightarrow$ Superpolynomial runtime lower bound for DT learning}};
        
    \end{tikzpicture}
    }
      \captionsetup{width=.9\linewidth}
    \caption{An illustration of the implications of our main result. The top axis denotes different runtimes for (weak) learning $n$-variable size-$s$ decision trees. The bottom axis denotes approximation factors for $k$-NCP. The right hand side of each axis plots the best known algorithms for each respective problem. Each arrow indicates how a decision tree learning algorithm with a particular runtime yields an algorithm for $k$-NCP with a corresponding approximation ratio. 
    }
    \label{fig:results}
\end{figure}


\subsection{Statement of our reduction}

\Cref{cor:improving EH gives new NCP algorithms,cor:W1 hardness of polytime} place no restrictions on the size $s'$ of the decision tree hypothesis that the algorithm is allowed to output, other than the obvious one of $s' \le t$ where $t$ is the algorithm's runtime.  The most general statement of our reduction decouples these two quantities. Algorithms that achieve small $s'$ (ideally, close to the size $s$ of the target decision tree) are of interest even if $t$ is not comparably small.  We show: \medskip

\begin{tcolorbox}[colback = white,arc=1mm, boxrule=0.25mm]
\begin{theorem}[Our reduction]
\label{thm:most general}
Suppose there is an algorithm that given random examples generated according to a distribution $\mathcal{D}$ over $\zo^n$ and labeled by a size-$s$ decision tree, runs in time $t(n,s,s',\eps)$ and w.h.p.~outputs a size-$s'$ decision tree hypothesis that achieves accuracy $1-\eps$ under $\mathcal{D}$. Then, for all $\ell\in \N$ there is a randomized algorithm  which solves $\alpha$-approximate $k$-\textnormal{NCP} running in time 
\[ O(\ell n^2)\cdot t(\ell n,2^{\ell k},2^{O(\alpha \ell k)},\eps)+\poly(n,\ell,2^{\alpha \ell k}) \ \text{  
where\ $\eps=\lfrac{1}{2}-2^{-\Omega(\alpha \ell k)}$}.\]  
\end{theorem}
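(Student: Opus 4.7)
The plan is to reduce $\alpha$-approximate $k$-\textnormal{NCP} to the hypothesized learner by encoding the unknown sparse error vector as the support of a parity function. Given an NCP instance $(G, z, k)$ with $G$ generating a linear code $\mathcal{C}$ of dimension $n$, let $x^* \in \zo^n$ be the unknown message (over $\mathbb{F}_2$) with error vector $e^* := z \oplus Gx^*$ of Hamming weight $\le k$. I will build a distribution $\mathcal{D}$ over $\zo^{\ell n}$ and a labeling function $f$ that is a decision tree of size at most $2^{\ell k}$ secretly encoding $e^*$; running the hypothesized learner on $(\mathcal{D}, f)$ and then decoding its output will yield a codeword within Hamming distance $\alpha k$ of $z$.

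For the construction, sample $y = (y^{(1)}, \ldots, y^{(\ell)})$ with each $y^{(j)}$ independent and uniform in the dual code $\mathcal{C}^\perp$ (viewed as a subset of $\zo^n$), and define the label $f(y) := \bigoplus_{j=1}^\ell \langle y^{(j)}, z \rangle$. Since each $y^{(j)} \in \mathcal{C}^\perp$ is orthogonal to every codeword, $\langle y^{(j)}, z\rangle = \langle y^{(j)}, Gx^*\rangle + \langle y^{(j)}, e^*\rangle = \langle y^{(j)}, e^*\rangle$, so on $\mathrm{supp}(\mathcal{D})$ the function $f$ coincides with the parity $\chi_{S^*}$, where $S^* \subseteq [\ell n]$ places a copy of $\mathrm{supp}(e^*)$ in each of the $\ell$ blocks. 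Consequently $|S^*| \le \ell k$ and $f$ is realizable as a decision tree of size $\le 2^{\ell k}$. Crucially, the labels are computable from $z$ and the sampled $y$ alone, without any knowledge of $e^*$ or $x^*$.

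Feeding these samples to the learner with target size $2^{\ell k}$, hypothesis size $s' = 2^{O(\alpha\ell k)}$, and accuracy $\tfrac{1}{2} + \eta$ for $\eta = 2^{-\Omega(\alpha \ell k)}$ produces a tree $T$ whose correlation with $f$ under $\mathcal{D}$ is at least $2\eta$. Because $\mathcal{D}$ is uniform on the linear subspace $(\mathcal{C}^\perp)^\ell$, the $\mathcal{D}$-correlation $\E_{\mathcal{D}}[T\cdot \chi_S]$ depends on $S$ only modulo $\mathcal{C}^{\oplus\ell}$, so the correlation identifies $S^*$ only up to the coset $S^* + \mathcal{C}^{\oplus\ell}$. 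I will then run a Goldreich--Levin-style enumeration tailored to this subspace-uniform distribution---exploiting the decision tree structure of $T$ by walking its leaves---to produce, in $\poly(n, \ell, 2^{\alpha\ell k})$ time, at most $O(1/\eta^2) = 2^{O(\alpha\ell k)}$ coset representatives covering every coset of $\mathcal{D}$-Fourier mass $\ge \eta$. For each representative $\hat S$, I extract a candidate error support by projecting to any block and solve the system $Gx \oplus z = e$ with $\mathrm{supp}(e)$ contained in that projection by Gaussian elimination in $\poly(n)$; I output any $x$ whose corresponding error has weight $\le \alpha k$. The true coset $S^* + \mathcal{C}^{\oplus\ell}$ lies in the list and witnesses $x^*$, guaranteeing success.

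The main obstacle is this Fourier-extraction step: standard Goldreich--Levin operates under the uniform distribution on $\zo^m$, while here $\mathcal{D}$ is uniform only on the subspace $(\mathcal{C}^\perp)^\ell$, and the correlation pins down $S^*$ only modulo $\mathcal{C}^{\oplus\ell}$. The technical heart of the proof will be (i) showing that the weak-learning correlation $\eta = 2^{-\Omega(\alpha\ell k)}$ still allows polynomial-time enumeration of these cosets via direct exploitation of the decision tree structure of $T$ (rather than black-box Goldreich--Levin), and (ii) arguing that among the representatives returned, the true coset admits a block-projection of size $\le \alpha k$, so Gaussian elimination recovers a codeword within distance $\alpha k$ of $z$ within the $\poly(n, \ell, 2^{\alpha\ell k})$ time budget.
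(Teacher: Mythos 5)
Your starting point is sound and in fact matches the paper's: sampling uniformly from the dual code and labeling by $\langle y,z\rangle=\langle y,e^*\rangle$ is exactly the paper's $\mathrm{Unif}(\mathrm{Span}(D))$ labeled by $f^{\mathrm{ext}}$, where $D$ is the rows of the parity-check matrix and $\mathrm{Span}(D)=\mathcal{C}^\perp$. The gaps are in your amplification gadget and your extraction step. The paper does not take $\ell$ independent copies of $\mathrm{Unif}(\mathcal{C}^\perp)$ and XOR the labels; it composes with a \emph{per-coordinate} $\ell$-bit parity gadget (\Cref{def:parity-sub}), and the sole purpose of that gadget is to make the distribution "uniform-like": every length-$d$ root-to-leaf path captures at most $2^{-d(1-1/\ell)}$ probability mass (\Cref{prop:uniform-like}). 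That is what allows the hypothesis tree of size $2^{O(\alpha\ell k)}$ to be truncated at depth $O(\alpha\ell k)$ with negligible loss of correlation (\Cref{claim:pruning-depth}), so that the correlated character found in its Fourier support is supported on the variables of a single short path and is therefore automatically $O(\alpha\ell k)$-sparse. Your product distribution $\mathrm{Unif}((\mathcal{C}^\perp)^\ell)$ has no such property: if coordinates of $\mathcal{C}^\perp$ are linearly dependent as functionals (e.g.\ $e_i+e_j\in\mathcal{C}$), a depth-$2$ path already retains probability $\frac12$, and in general a short path can carry constant mass. So you cannot prune $T$, the characters in its support may have weight up to $2^{O(\alpha\ell k)}$ rather than $O(\alpha\ell k)$, and "walking the leaves" enumerates nothing sparse.

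The second gap is that even a successful coset enumeration would be circular. Identifying the coset of $S^*$ modulo $\mathcal{C}^{\ell}$ and projecting to one block identifies the coset $e^*+\mathcal{C}=z+\mathcal{C}$; producing a weight-$\le\alpha k$ representative of that coset \emph{is} the $\alpha$-approximate $k$-NCP instance you set out to solve, so your step (ii) assumes the conclusion. The paper never has to choose a sparse coset representative: the candidate set $S$ is read directly off a short path of the truncated tree (hence sparse by construction), the parity substitution is undone to get $S^\star$ with $|S^\star|\le|S|/\ell$ (\Cref{lem:new parity amplification}), and the contrapositive of the boosting lemma (\Cref{lem:boosting}) shows that any parity with \emph{strictly positive} correlation with $f^{\mathrm{ext}}$ under $\mathrm{Unif}(\mathrm{Span}(D))$ must agree with $f$ on all of $D$, i.e.\ is an exact solution of $Hx=t$. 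To repair your argument you would need to replace the $\ell$-fold independent-copy XOR with the per-coordinate parity gadget (or some other device that forces short paths to have exponentially small mass) and then extract sparsity from the tree's structure rather than from a coset.
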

\end{tcolorbox}
\medskip 

(The parameter $\ell$ will be used to pad instances of $k$-NCP for small $k$ to get instances of learning size-$s$ decision trees for large $s$.) 

Decoupling $s'$ and $t$ allows us to show variants of~\Cref{cor:W1 hardness of polytime} where we obtain stronger time lower bounds at the price of stronger complexity-theoretic assumptions:

\begin{corollary}
    \label{cor:explicit superpoly}
    Suppose there is an algorithm which given random examples generated according to a distribution $\mathcal{D}$ and labeled by a size-$n$ decision tree w.h.p.~outputs a decision tree hypothesis of $\poly(n)$ size that achieves accuracy $\frac1{2} + \frac1{\mathrm{poly}(n)}$ under $\mathcal{D}$. Then: 
    \begin{enumerate}
    \item (\Cref{cor:W1 hardness of polytime} restated) If the algorithm runs in $\poly(n)$ time, then $\mathrm{W}[1]=  \mathrm{FPT}$. 
            \item If the algorithm runs in time $n^{(\log n)^\delta}$ for a sufficiently small constant $\delta$, then  \textnormal{ETH} is false.
        \item If the algorithm runs in time $n^{o(\log n)}$, then  \textnormal{Gap-ETH} is false.
    \end{enumerate}
\end{corollary}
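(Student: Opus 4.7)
The plan is to apply \Cref{thm:most general} three times, tuning the padding parameter $\ell$ in each case so that the reduction produces an algorithm for $O(1)$-approximate $k$-NCP whose running time falls into the regime ruled out by the relevant hardness assumption.

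Writing $N$ for the $k$-NCP input dimension, the reduction creates a DT learning instance of dimension $n = \ell N$, target size $s = 2^{\ell k}$, and output size $s' = 2^{O(\alpha \ell k)}$. To match the hypothesis of \Cref{cor:explicit superpoly}---a size-$n$ target, a $\poly(n)$-size hypothesis, and accuracy $\frac{1}{2} + 1/\poly(n)$---I choose $\alpha = O(1)$ and $\ell k = \Theta(\log N)$. Under these settings, $s \le n$, $s' = \poly(n)$, and the accuracy $\frac{1}{2} - 2^{-\Omega(\alpha \ell k)}$ coming out of \Cref{thm:most general} is exactly $\frac{1}{2} - 1/\poly(n)$. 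The reduction therefore delivers an $O(1)$-approximation algorithm for $k$-NCP running in time $\poly(N) \cdot t\bigl(\ell N,\, 2^{\ell k},\, \poly(N),\, \tfrac{1}{2} - 1/\poly(N)\bigr)$.

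For Part~1, I take $k$ to be any constant and $\ell = \Theta(\log N / k)$; plugging in $t = \poly$ yields a $\poly(N)$-time constant-factor $k$-NCP algorithm, which contradicts the $\mathrm{W}[1]$-hardness of constant-factor $k$-NCP established in \cite{BELM16,Man20,LLL24} unless $\mathrm{W}[1] = \mathrm{FPT}$. For Parts~2 and~3, I instead take $\ell$ to be a small constant so that $k = \Theta(\log N)$ and $n = \Theta(N)$; substituting $t(n) = n^{(\log n)^\delta}$ and $t(n) = n^{o(\log n)}$ produces $k$-NCP algorithms running in time $N^{(\log N)^{O(\delta)}}$ and $N^{o(\log N)} = N^{o(k)}$ respectively, which I then match against the corresponding \textnormal{ETH}- and \textnormal{Gap-ETH}-based quantitative lower bounds for constant-factor $k$-NCP from \cite{BELM16,Man20,LLL24}.

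The main obstacle is aligning the exponent of the resulting $k$-NCP runtime with the precise quantitative form of the cited lower bounds. Part~3 is the cleanest: a \textnormal{Gap-ETH} lower bound of the form $N^{\Omega(k)}$ for $O(1)$-approximate $k$-NCP is directly contradicted by the $N^{o(k)}$ algorithm that the reduction produces. Part~2 is more delicate because the \textnormal{ETH}-based hardness for this problem is quantitatively weaker (closer to $N^{k^{\Omega(1)}}$ than to $N^{\Omega(k)}$), which is what forces the \emph{sufficiently small} $\delta$ qualifier in the hypothesis and requires careful bookkeeping of the relation between $\delta$ and the \textnormal{ETH} exponent. The rest is a routine parameter calculation.
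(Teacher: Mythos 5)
Your Part~1 matches the paper's proof of \Cref{cor:W1 hardness of polytime}: pad with $\ell = \Theta(\log N / k)$ and invoke the $\mathrm{W}[1]$-hardness of constant-factor approximation. For Parts~2 and~3, however, there is a genuine gap in treating the ETH/Gap-ETH hardness of constant-factor $k$-NCP as a black box to ``match against.'' The algorithm you extract from the learner only solves $k$-NCP instances in which the target $2^{\ell k}$-size decision tree fits the corollary's hypothesis of a size-$n$ target over $n = \ell N$ variables, i.e.\ instances with $k = O(\log N)$; and for Part~3 you additionally need $k = \Theta(\log N)$ so that $N^{o(\log N)}$ can be read as $N^{o(k)}$ (if $k \ll \log N$, even after padding the learner's runtime is only $N^{o(\ell k)}$, which does not contradict an $N^{o(k)}$ lower bound). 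The black-box statements (\Cref{thm:eth-hardness,thm:gap-eth-hardness}) quantify over all $k$ and give you no control over the $k$-versus-$N$ relationship in the hard instances, so you cannot conclude a contradiction from an algorithm that only handles $k = \Theta(\log N)$.

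The paper closes this gap by unrolling the hardness proofs back to (gapped) 3-SAT: \Cref{thm:manurangsi reduction} produces Gap-ETH-hard instances with $k = \sqrt{m}$ and $N = 2^{O(\sqrt{m})}$, hence $k = \Theta(\log N)$ exactly in the regime where the learner applies, and the contradiction is with the $2^{o(m)}$ runtime bound of Gap-ETH rather than with an $N^{o(k)}$ statement. For ETH, \Cref{thm:lll reduction} with $k = m^{\lambda}$, $\lambda < 1/(2q)$, gives instances with parameter $k^q = (\log N)^{\beta}$ for some $\beta < 1$ strictly --- \emph{not} $\Theta(\log N)$ as in your sketch (the constraint that the target fit in $2N$ variables forbids $\lambda q \geq 1-\lambda$ for $q>1$). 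The ``sufficiently small $\delta$'' then arises from requiring $(2N)^{(\log 2N)^{\delta}} = 2^{O(m^{(1-\lambda)(1+\delta)})} = 2^{o(m)}$, i.e.\ $(1-\lambda)(1+\delta) < 1$, rather than from comparing the exponent $(\log N)^{O(\delta)}$ against $k^{\delta'}$. To repair your argument you would need to import these explicit reductions and redo the parameter bookkeeping in terms of the 3-SAT clause count $m$.
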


\paragraph{Addressing the main open problem from~\cite{EH89}.} Paraphrasing the very first open problem of~\cite{EH89}, the authors ask: 

\begin{quote} 
{\sl For the concept class of polynomial-size decision trees $($i.e.~$s = \mathrm{poly}(n))$, can one design algorithms that run in polynomial time $($i.e.~achieve $t \le \mathrm{poly}(n))$? Failing that, can one at least design algorithms that take superpolynomial time as those given here, but return polynomial-size decision tree hypotheses $($i.e.~achieve $s' \le \mathrm{poly}(n))$?”} 
\end{quote}

\Cref{cor:W1 hardness of polytime} provides a negative answer to the first question and~\Cref{cor:explicit superpoly} provides negative answers to the second, with both holding even in the setting of weak learning.

\subsection{Comparison with prior work}

While we view the connection between {\sc DT-Learn} and $k$-{\sc NCP} as our main contribution, the new lower bounds that we obtain (i.e.~\Cref{cor:W1 hardness of polytime,cor:explicit superpoly}) also compare favorably with existing ones.

\paragraph{Inverse-polynomial error.} There is a long line of work on the hardness of {\sc DT-Learn}   in the regime of inverse-polynomial error, $\eps=1/\poly(n)$. Pitt and Valiant~\cite{PV88} first showed, via a simple reduction from {\sc Set Cover}, that properly learning decision size-$s$ decision trees (where $s=n$) to such an accuracy is NP-hard---{\sl if} the algorithm is additionally required to output a hypothesis whose size $s'$ exactly matches that of the target (i.e.~$s' = s$). Hancock, Jiang, Li, and Tromp~\cite{HJLT96} subsequently ruled out polynomial-time algorithms that are required to return a hypothesis of size $s' \le s^{1+o(1)}$, under the assumption that SAT cannot be solved in randomized quasipolynomial time. Alekhnovich, Braverman, Feldman, Klivans, and Pitassi~\cite{ABFKP09} then ruled out polynomial-time algorithms, now with no restrictions on hypothesis size, under the randomized Exponential Time Hypothesis (ETH). Koch, Strassle, and Tan~\cite{KST23soda} improved~\cite{ABFKP09}'s lower bound to $n^{\Omega(\log \log n)}$ under the randomized ETH, and to $n^{\Omega(\log n)}$ under a plausible conjecture on the complexity of {\sc Set Cover}.   

\paragraph{Constant error.} The above line of work is built successively on~\cite{PV88}'s reduction from {\sc Set Cover}, which appears limited to the setting where $\eps = 1/\poly(n)$.  Recent work of Koch, Strassle, and Tan~\cite{KST23} showed, via a new reduction from {\sc Vertex Cover}, that the problem is NP-hard even for $\eps$ being a small absolute constant $(\eps = 0.01)$. However, their result again only holds if the algorithm is required to output a hypothesis of size $s' = s$, like in the original result of~\cite{PV88}. (The focus of~\cite{KST23}'s work was in giving the first lower bounds against {\sl query} learners, whereas none of the prior work, or ours, applies to such learners.)

\backrefsetup{disable}
\begin{table}[t]
\begin{adjustwidth}{-3.6em}{}
\renewcommand{\arraystretch}{2}
\centering
\begin{tabular}{|c|c|c|c|c|}
\hline
~~~Reference~~~ & 
  \begin{tabular}{@{}c@{}}
~~Restriction on~~ \vspace{-12pt} \\
~~hypothesis size $s'$~~ \end{tabular} 
 &~~~Error $\eps$~~~& ~~ Runtime lower bound~~ & Hardness assumption \\
\hline 
\hline 
\cite{PV88} & $s' = s$ & $ 1/\poly(n)$ & $n^{\omega(1)}$ & $\mathrm{SAT}\notin \mathsf{RP}$  \\ \hline 
\cite{HJLT96} & $s'\le s^{1+o(1)}$ & $ 1/\poly(n)$ & $n^{\omega(1)}$ & ~~$\mathrm{SAT}\notin \mathsf{RTIME}(n^{\polylog(n)})$~~  \\ \hline 
\cite{ABFKP09} & None &  $ 1/\poly(n)$ & $n^{\omega(1)}$ & ETH \\ \hline   \cite{KST23soda} & None &  $ 1/\poly(n)$ & $n^{\Omega(\log\log n)}$ & ETH \\ \hline 
\cite{KST23} & $s' = s$ &  $0.01$ & $n^{\omega(1)}$ & $\mathrm{SAT}\notin \mathsf{RP}$  \\ \hline   \hline 
\Cref{cor:W1 hardness of polytime} & None & $\frac1{2} - \frac1{\poly(n)}$ & $n^{\omega(1)}$ & $\mathrm{W}[1]\ne \mathrm{FPT}$ \\ \hline   
\Cref{cor:explicit superpoly} & $s' \le \poly(s)$ & ~~$\frac1{2} - \frac1{\poly(n)}$~~ & $n^{(\log n)^{\Omega(1)}}$ &  ETH \\ \hline 
\Cref{cor:explicit superpoly} & $s' \le \poly(s)$ & ~~$\frac1{2} - \frac1{\poly(n)}$~~ & $n^{\Omega(\log n)}$ &  Gap-ETH \\ \hline 
\end{tabular} 
\end{adjustwidth}
  \captionsetup{width=.9\linewidth}
\caption{Lower bounds for properly learning $n$-variable size-$s$ decision trees under standard complexity-theoretic assumptions. All of them hold for $s=n$.}
\label{table}
\end{table} 
\backrefsetup{enable}

\paragraph{Summary.}  Prior lower bounds either held for $\varepsilon = 1/\mathrm{poly}(n)$, or for $\varepsilon = 0.01$ under the restriction that $s' = s$. For constant $\eps$ there were no lower bounds for general polynomial-time algorithms (i.e.~ones without any restriction on their hypothesis size), and for $\varepsilon = \frac1{2} - o(1)$, there were no lower bounds even under the strictest possible restriction that $s' = s$. See~\Cref{table}.

As we will soon discuss, the linear-algebraic nature of $k$-NCP is crucial to our being able achieve hardness in the regime of $\eps = \frac1{2}-o(1)$. While we cannot rule out the possibility that the previous reductions from {\sc Set Cover} and {\sc Vertex Cover} can be extended to this regime, we were unable to obtain such an extension despite our own best efforts---it seems that a fundamentally different approach is necessary.

In general, results basing the hardness of weak learning (of any learning task) on {\sl worst-case} complexity-theoretic assumptions remain relatively rare. One reason is because the setting of weak learning corresponds to that of average-case complexity, and so any such result will have to amplify worst-case hardness into average-case hardness within the confines of the learning task at hand.\footnote{While boosting establishes an equivalence of weak and strong learning, boosting algorithms do not preserve the structure of the hypothesis. For example, boosting a weak learner that returns a decision tree hypothesis yields a strong learner that returns a hypothesis that is the majority of decision trees. Therefore, the hardness of properly learning decision trees in the setting of strong learning does not immediately yield hardness in the setting of weak learning.}

\section{Discussion}

\paragraph{Two interpretations of our results.} The existing literature on properly learning decision trees is split roughly evenly between algorithms and hardness, and there is no consensus as to whether~\cite{EH89}'s algorithm is optimal. As for the approximability of $k$-NCP, there is a huge gap between the $O(n/\log n)$ ratio achieved by the algorithms of~\cite{BK02,APY09} and the constant-factor inapproximability results of~\cite{BELM16,Man20,LLL24}, and there is likewise no consensus as to what the optimal ratio is within this range.

\Cref{cor:improving EH gives new NCP algorithms} can be viewed either as a new avenue for designing approximation algorithms for $k$-NCP or as one for showing that~\cite{EH89}’s algorithm is optimal. With regards to the former perspective, as already mentioned~\cite{EH89}'s quasipolynomial-time algorithm has been improved for variants of the problem---for example, we have polynomial-time algorithms that return hypotheses that are slightly more complicated than decision trees~\cite{Bsh93} and almost-polynomial-time query algorithms for the uniform distribution~\cite{BLQT22}.  A  natural avenue for future work is to see if the ideas in these works can now be useful for $k$-NCP or its variants. 
As for the latter perspective, the $O(n/\log n)$-versus-constant gap in our understanding of the approximability of $k$-NCP is especially stark when compared to the {\sl unparameterized} setting, where NCP has long been known to be NP-hard to approximate to almost-polynomial ($n^{\Omega(1/\log\log n)}$) factors~\cite{DKS98,DKRS03}. We hope that our work provides additional motivation for getting lower bounds in the parameterized setting “caught up” with those in the unparameterized setting. 

More broadly, recent years have seen a surge of progress on  parameterized inapproximability; see~\cite{FKLM20} for a survey.  Notably, for example, a recent breakthrough of Guruswami, Lin, Ren, Sun, Wu~\cite{GLRSW24} establishes the parameterized analogue of the PCP Theorem.\footnote{Their work also carries new implications for $k$-NCP, though the parameters achieved by~\cite{BELM16,Man20,LLL24} are quantitatively stronger for our purposes.} The framework of parameterized inapproximability syncs up especially nicely with the setup of learning theory: the parameterized setting is relevant because it allows us to control the size of the target function, and the inapproximability ratio corresponds to the gap in sizes between the target and hypothesis. We believe that there is much more to be gained, both in terms of algorithms and hardness, by further exploring connections between these two fields. 

\paragraph{Decision trees and weak learning in practice.} Our interest in the setting of weak learning is motivated in part by a specific use case of decision trees in practice. {\sl Tree ensemble methods} such as XGBoost~\cite{CG16} have emerged as powerful general-purpose  algorithms that achieve state-of-the-art performance across a number of settings (especially on tabular data where they often  outperform deep neural nets~\cite{SA22,GOV22}). Roughly speaking, these methods first construct an ensemble of decision trees, each of which is mildly correlated with the data, and then aggregate the predictions of these trees into an overall prediction. 

 Our results provide a theoretical counterpoint to the empirical success of these methods. We show that the task of finding even a single small single decision tree that is mildly correlated with the data---the task that is at the very heart of these ensemble methods---is intractable. Indeed,~\Cref{cor:W1 hardness of polytime,cor:explicit superpoly} show that this is the case even if the data is {\sl perfectly} labeled by a small decision tree---a strong stylized assumption that real-world datasets almost certainly do not satisfy. 

\paragraph{LPN hardness of uniform-distribution learning?} A criticism that can be levied against all existing lower bounds for properly learning decision trees, including ours, is that they only hold if the examples are distributed according to a worst-case distribution. It would therefore be interesting to establish the hardness of learning under ``nice" distributions, the most canonical one being the uniform distribution. Our work points to the possibility of basing such hardness on the well-studied {\sl Learning Parities with Noise} problem~\cite{BFKL93,BKW03} (LPN), a distributional variant of NCP where the input is a random linear code instead of a worst-case code. Unfortunately, our reduction does not preserve the uniformity of distributions---i.e.~it translates the hardness of LPN into the hardness of learning under a non-uniform distribution---but perhaps a modification of it can.

\section{Technical Overview for~\Cref{thm:most general}}

\subsection{Warmup: {\sc DT-Learn} solves  decisional approximate {\sc $k$-NCP}}
\label{sec:warmup}

We first show, as a warmup, how algorithms for {\sc DT-Learn} can be used to solve the {\sl decision version} of approximate $k$-NCP:

\begin{definition}[Decisional $\alpha$-approximate $k$-NCP]
\label{def:decisional NCP}
 Given as input the generator matrix $G \in \F_2^{n\times d}$ of a code $\mathcal{C}$, a received word $z\in \F_2^n$, a distance parameter $k\in \N$, and an approximation parameter $\alpha \ge 1$, distinguish between:     
\begin{itemize}
\item[$\circ$] \textnormal{Yes}: there is a codeword $y\in\mathcal{C}$ within Hamming distance $k$ of $z$;
\item[$\circ$] \textnormal{No}: the Hamming distance between $z$ and every codeword $y\in\mathcal{C}$  is greater than $\alpha k$.
\end{itemize}    
\end{definition}

\begin{theorem}[\Cref{thm:most general} for {\sl decisional} approximate $k$-NCP]
\label{thm:decision version of main result}
Suppose there is an algorithm that given random examples distributed according to a distribution $\mathcal{D}$ over $\zo^n$ and labeled by a size-$s$ decision tree, runs in time $t(n,s,s',\eps)$ and outputs a size-$s'$ decision tree hypothesis that achieves accuracy $1-\eps$ under $\mathcal{D}$. Then, for all $\ell\in \N$ there is an algorithm which solves \emph{decisional} $\alpha$-approximate $k$-\textnormal{NCP} running in time \[ O(\ell n^2)\cdot t(\ell n,2^{\ell k},2^{O(\alpha \ell k)},\eps)+\poly(n,\ell,2^{\alpha \ell k}) \text{\ where \ $\eps=\lfrac{1}{2}-2^{-\Omega(\alpha \ell k)}$.}\] 
\end{theorem}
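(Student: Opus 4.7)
The plan is to establish \Cref{thm:decision version of main result} via a reduction from decisional $\alpha$-approximate $k$-\textnormal{NCP} to {\sc DT-Learn}. Given an instance $(G, z, k)$, let $\mathcal{C} \subseteq \F_2^n$ denote the code generated by $G$ and $\mathcal{C}^\perp$ its dual; a basis for $\mathcal{C}^\perp$ is computable in polynomial time. I construct a learning problem on $\F_2^{\ell n}$ in which the distribution $\mathcal{D}$ samples $\bx = (\bx^{(1)},\dots,\bx^{(\ell)})$ with each $\bx^{(j)}$ drawn independently and uniformly from $\mathcal{C}^\perp$, and the labeling function is
\[
f(\bx) \;=\; \bigoplus_{j=1}^{\ell}\, \langle \bx^{(j)},\, z \rangle.
\]
Each sample is produced in $O(\ell n^2)$ time, which accounts for the $O(\ell n^2)\cdot t(\cdots)$ factor in the stated runtime.

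In the YES case, $z = y^\star + e$ with $y^\star \in \mathcal{C}$ and $|e| \le k$, so $\langle \bx^{(j)}, z \rangle = \langle \bx^{(j)}, e \rangle$ because $\bx^{(j)} \in \mathcal{C}^\perp$. Thus $f$ is a parity over at most $\ell k$ specific coordinates of $\bx$, realized exactly by a decision tree of size $2^{\ell k}$; by the assumed guarantee, the learner returns a size-$2^{O(\alpha \ell k)}$ hypothesis of accuracy $\ge 1-\eps = \tfrac{1}{2}+2^{-\Omega(\alpha \ell k)}$ w.h.p. In the NO case, every representative of the coset $z + \mathcal{C}$ has weight $> \alpha k$; fix any such $e^\star$. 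For any DT $T$ of size $s' = 2^{O(\alpha \ell k)}$, decomposing $T$ over leaves and factoring by independence across blocks gives
\[
\E\bigl[(-1)^T (-1)^f\bigr] \;=\; \sum_{v} b_v \prod_{j=1}^{\ell} \E\bigl[\mathbf{1}[\mathrm{path}_{v,j}] \cdot \chi_{e^\star}(\bx^{(j)})\bigr].
\]
Conditioned on the queries along $\mathrm{path}_{v,j}$, the vector $\bx^{(j)}$ is uniform on an affine subspace of $\mathcal{C}^\perp$, and $\chi_{e^\star}$ has nonzero expectation there iff $e^\star\big|_{[n]\setminus S_{v,j}}$ equals the restriction of some codeword---equivalently, iff some coset representative of $z+\mathcal{C}$ is supported entirely on $S_{v,j}$. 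The NO-case hypothesis forces $|S_{v,j}| > \alpha k$ in every block, so only leaves of depth $> \alpha \ell k$ contribute.

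The distinguisher runs the DT-Learn algorithm on $\mathcal{D}$-labeled samples, estimates the accuracy of the returned hypothesis on $\poly(n,\ell,2^{\alpha \ell k})$ fresh samples, and outputs YES iff the estimate exceeds $\tfrac{1}{2} + 2^{-\Omega(\alpha \ell k)}$. The main obstacle is completing the NO-case advantage bound $\sum_{v\text{ good}} p_v \le 2^{-\Omega(\alpha \ell k)}$: under uniform $\bx^{(j)}\in \mathcal{C}^\perp$, a path of length $d$ has probability $2^{-(|S| - \dim(\mathcal{C}\cap \F_2^S))}$, which can substantially exceed $2^{-d}$ when $\mathcal{C}$ harbors low-weight codewords supported on the queried coordinates. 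The plan to overcome this is to combine a union bound over the at most $s'$ good leaves with a finer accounting of this projection rank---possibly preceded by an augmentation of $G$ (e.g.\ a random coordinate permutation or concatenation with a high-distance auxiliary code) that forces projections of $\mathcal{C}$ onto small subsets to be nearly full rank---so that the exponential gap between $s' = 2^{O(\alpha \ell k)}$ and the per-leaf probability yields the required bound.
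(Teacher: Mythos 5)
Your reduction skeleton and Yes-case analysis are fine, and after translating to the parity-check view they essentially coincide with the paper's: your label $\langle x, z\rangle$ on $\mathcal{C}^\perp$ is exactly the paper's linear extension $f^{\mathrm{ext}}$ on $\mathrm{Span}(D)$ (the rows of the parity-check matrix span $\mathcal{C}^\perp$), and your observation that the No case kills the correlation with $\chi_{e^\star}$ unless every block queries a set supporting an element of $z+\mathcal{C}$ is the analogue of the paper's \Cref{lem:boosting} plus \Cref{lem:parity-to-degree}. The divergence is in the gadget used to go from "uncorrelated with sparse parities" to "uncorrelated with small decision trees," and that is where the proposal has a genuine gap --- one you correctly flag but do not close, and whose proposed fixes do not work. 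Taking $\ell$ independent copies of $\mathrm{Unif}(\mathcal{C}^\perp)$ and XORing the labels adds only \emph{global} entropy: the marginal of each $\bx^{(j)}$ is still the worst-case distribution $\mathrm{Unif}(\mathcal{C}^\perp)$, so a path querying $S$ in block $j$ is followed with probability $2^{-(|S|-\dim(\mathcal{C}\cap\F_2^S))}$, which need not decay with the length of the path at all (the No hypothesis constrains the weights of $z+\mathcal{C}$, not of $\mathcal{C}$ itself, so $\mathcal{C}$ may contain a large coordinate subcube). Consequently a single contributing leaf can already have advantage about $2^{-\alpha\ell k}$, and the union bound over $s'=2^{O(\alpha\ell k)}$ leaves --- with an unspecified constant in the exponent coming from the learner's hypothesis size --- gives nothing. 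Neither proposed repair helps: a random coordinate permutation leaves the weight distribution of $\mathcal{C}$ unchanged, and concatenating with an auxiliary code changes the instance in a way that does not obviously preserve the $k$-sparse Yes witness.

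The missing idea is the paper's \emph{per-coordinate} parity gadget (\Cref{def:parity-sub}): rather than $\ell$ independent samples from $\mathcal{C}^\perp$, one draws a single $\bx\sim\mathrm{Unif}(\mathcal{C}^\perp)$ and replaces each coordinate $\bx_i$ by an $\ell$-bit block uniform among strings XORing to $\bx_i$. This injects $\ell-1$ bits of \emph{local} entropy into every block regardless of the structure of $\mathcal{C}$, yielding the uniform-like bound $\Pr[\by_R=r]\le 2^{-|R|(1-1/\ell)}$ for every restriction (\Cref{prop:uniform-like}). That bound is what lets one prune any size-$s'$ tree at depth $c\log s'$ while losing only $s'^{\,1-c(1-1/\ell)}$ in error (\Cref{claim:pruning-depth}), after which the truncated tree is a degree-$(c\log s')$ polynomial and the zero-correlation statement for low-degree polynomials finishes the No case (\Cref{lem:deg-to-size lifting}). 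Note that the exponential decay here is in the \emph{actual} depth $c\log s'$ of the pruned paths, which can be taken a large constant factor above $\alpha\ell k$ to beat the union bound over leaves --- precisely the slack your construction cannot manufacture. The parameters of your reduction ($\ell n$ variables, $\ell k$-parity target, size-$2^{\ell k}$ target tree) match the paper's, so the proposal can be repaired by swapping your product gadget for the blockwise one; as written, however, the No case is not established.
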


There are no known search-to-decision reductions for approximate $k$-NCP, but in~\Cref{sec:search} we will explain how our proof of~\Cref{thm:decision version of main result} can be upgraded to show that algorithms for {\sc DT-Learn} in fact be used to solve the actual {\sl search} version of approximate $k$-NCP, thereby yielding~\Cref{thm:most general}.

\paragraph{Dual formulation.} 
 We begin by transforming~\Cref{def:decisional NCP} into its dual formulation where the algorithm is given as input the {\sl parity check matrix} of a code instead of its generator matrix:

\begin{definition}[Parity check view of decisional $\alpha$-approximate $k$-\textnormal{NCP}]
\label{def:parity-check-view}
    Given as input the parity check matrix $H \in \mathbb{F}_2^{m\times n}$ of a linear code and a target vector $t \in \mathbb{F}_2^m$, distinguish between: 
    \begin{itemize}
        \item[$\circ$] \textnormal{Yes}: there is a $k$-sparse vector $x\in\F_2^n$ such that $Hx=t$
        \item[$\circ$] \textnormal{No}: there does not exist a $\violet{\alpha k}$-sparse $x\in\F_2^n$ such that $Hx=t$.
    \end{itemize}
\end{definition}

This view of NCP is also known as {\sl syndrome decoding} in coding theory. The fact that one can efficiently switch between the two views of NCP is standard and follows by elementary linear algebra. The parity check view aligns especially well with the task of testing and learning an unknown function $f:\F_2^n\to\F_2$\footnote{For the rest of the paper, we switch to viewing Boolean functions as mapping vectors in $\F_2^n$ to $\F_2$ since this aligns well with the linear-algebraic nature of NCP and our proofs.} since it can be equivalently stated as follows.

\begin{definition}
\label{def:parity-consistent-view}
    Given as input a set $D = \{x^{(1)},\ldots,x^{(m)}\}\sse \F_2^n$ and a partial function $f: D \to \F_2$, distinguish between: 
        \begin{itemize}
        \item[$\circ$] \textnormal{Yes}: $f$ is a  $k$-parity  
        \item[$\circ$] \textnormal{No}: $f$ disagrees with every $\violet{\alpha k}$-parity on at least one input $x\in D$.
    \end{itemize}
\end{definition}

We have reformulated decisional $\alpha$-approximate $k$-\textnormal{NCP} as the problem of distinguishing between $f : \F_2^n \to \F_2$ being a $k$-parity under $\mathrm{Unif}(D)$ versus $\frac1{m}$-far from all $\violet{\alpha k}$-parities under $\mathrm{Unif}(D)$. 

\subsubsection{Our strategy}

Proving~\Cref{thm:decision version of main result} therefore amounts to amplifying the gap between the Yes and No cases in such a way that~$f$ remains a sparse parity in the Yes case, and yet becomes $(\frac1{2}-2^{-\Omega(\violet{\alpha k})})$-far from all decision trees of size $2^{\Omega(\violet{\alpha k})}$ in the No case.  We do so incrementally in three steps. See \Cref{fig:amplification} for an illustration \violet{of these steps and \Cref{fig:inclusions} for an illustration of the inclusions of the different function classes we consider}.

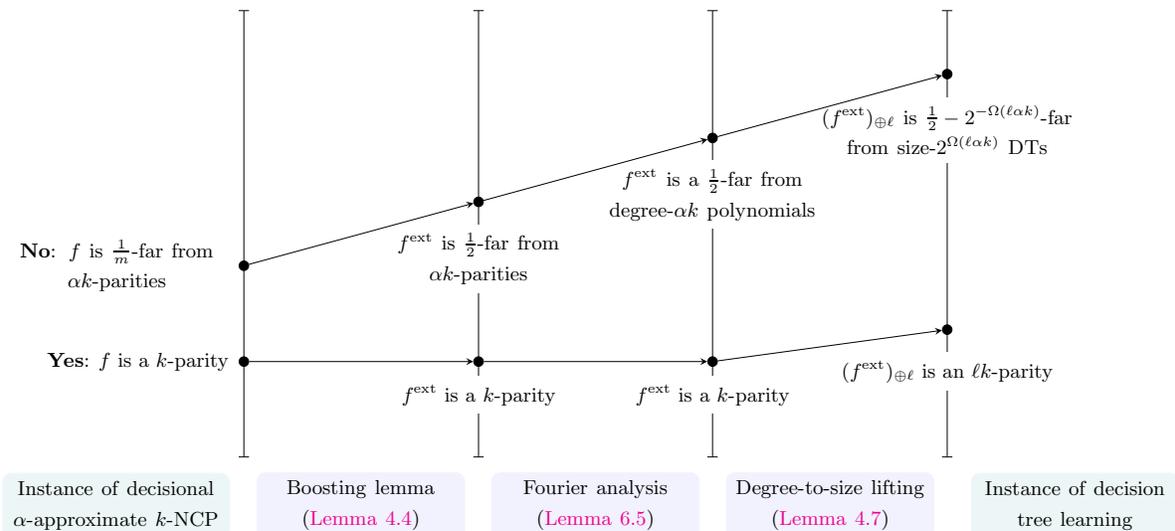
\begin{figure}[t]
    \centering
    \scalebox{0.85}{
    \begin{tikzpicture}[]
        \def\x{5.5}
        \def\xh{1.83}
        \def\xt{3.67}
        \def\xf{7.5}
        \def\c{0}
        \def\l{-0.1}
        \def\r{0.1}
        \def\inc{1}
        \def\lef{-7}
        \def\cat{3.5}
        \def\h{4}
        \def\yt{0}
        \def\yb{-1.5}
        \def\tw{3.0}
        \draw[black,xshift=-\x cm,|-|] (\c,-3) node[left] {{}} -- (\c,\h) node[left]{{}};
        \draw[black,xshift=-\xh cm,|-|] (\c,-3) node[left] {{}} -- (\c,\h) node[left]{{}};
        \draw[black,xshift=\xh cm,|-|] (\c,-3) node[left] {{}} -- (\c,\h) node[left]{{}};
        \draw[black,xshift=\x cm, |-|] (\c,-3) node[right] {{}} -- (\c,\h) node[right]{{}};
        \node[draw,circle,fill=black,inner sep=1.5pt,xshift=-\x cm] (k) at (\c,\yb) {};        
        \node[draw,circle,fill=black,inner sep=1.5pt,xshift=-\x cm] (k') at (\c,\yt) {};
        \node[draw,circle,fill=black,inner sep=1.5pt,xshift=-\xh cm] (k'LM) at ($(\c,\yt) + (0,\inc)$) {};
        \node[draw,circle,fill=black,inner sep=1.5pt,xshift=-\xh cm] (kLM) at (\c,\yb) {};

        \node[draw,circle,fill=black,inner sep=1.5pt,xshift=\xh cm] (k'RM) at ($(\c,\yt) + (0,2*\inc)$) {};
        \node[draw,circle,fill=black,inner sep=1.5pt,xshift=\xh cm] (kRM) at (\c,\yb) {};

        \node[draw,circle,fill=black,inner sep=1.5pt,xshift=\x cm] (k'R) at ($(\c,\yt) + (0,3*\inc)$) {};
        \node[draw,circle,fill=black,inner sep=1.5pt,xshift=\x cm] (kR) at (\c,-1) {};
        
        \draw[xshift=-\xf cm] (0,-3.75)  node [rounded corners, text centered, fill=teal!7, text width=3.3 cm] {\footnotesize {Instance of decisional \\ $\alpha$-approximate $k$-NCP} };
        \draw[xshift=-\xt cm] (0,-3.75)  node [text centered, text width=\tw cm, rounded corners,fill=blue!5] {\footnotesize {Boosting lemma} \\ (\Cref{lem:boosting})};
        \draw[xshift=0 cm] (0,-3.75) node [text centered, text width=\tw cm, rounded corners,fill=blue!5] {\footnotesize Fourier analysis \\ (\Cref{lem:parity-to-degree})};
        \draw[xshift=\xt cm] (0,-3.75) node [text centered, text width=\tw cm, rounded corners,fill=blue!5] {\footnotesize Degree-to-size lifting \\ (\Cref{lem:deg-to-size lifting})};
        \draw[xshift=\xf cm] (0,-3.75) node [text centered, text width=\tw cm, rounded corners,fill=teal!7] {\footnotesize Instance of decision \\ tree learning};
        
        \draw[color=black,xshift=-\x cm] (\l,\yt) node[text width=3.5cm,text centered, left,fill=white] {\footnotesize\color{black} \textbf{No}: $f$ is $\frac{1}{m}$-far from \\ $\alpha k$-parities};
        \draw[color=black,xshift=-\x cm] (\l,\yb) node[left,fill=white] {\footnotesize {\color{black}\textbf{Yes}: $f$ is a $k$-parity}};

        \draw[] ([yshift=-.35cm,xshift=0cm]k'LM) node[anchor=north,text width=4cm,fill=white,text centered] {\footnotesize{$f^{\mathrm{ext}}$ is $\frac{1}{2}$-far from\\ $\alpha k$-parities}};
        \draw[] ([yshift=-.22cm,xshift=0cm]kLM) node[anchor=north,fill=white,text centered] {\footnotesize{$f^{\mathrm{ext}}$ is a $k$-parity}};

        \draw[] ([yshift=-.35cm,xshift=0cm]k'RM) node[anchor=north, fill=white,text centered, text width=4cm] {\footnotesize{$f^{\mathrm{ext}}$ is a $\frac{1}{2}$-far from degree-$\alpha k$ polynomials}};
        \draw[] ([yshift=-.22cm,xshift=0cm]kRM) node[anchor=north,fill=white] {\footnotesize{$f^{\mathrm{ext}}$ is a $k$-parity}};

        \draw[] ([yshift=-.35cm,xshift=0cm]k'R) node[anchor=north,fill=white,text centered,text width=4cm] {\footnotesize{$(f^{\mathrm{ext}})_{\oplus \ell}$ is $\frac{1}{2}-2^{-\Omega(\ell \alpha k)}$-far from size-$2^{\Omega(\ell \alpha k)}$ DTs}};   
        \draw[] ([yshift=-.35cm,xshift=0cm]kR) node[anchor=north,fill=white] {\footnotesize{$(f^{\mathrm{ext}})_{\oplus \ell}$ is an $\ell k$-parity}};
             
        \draw[black,-stealth] (kRM) to node[midway,below,sloped] {} (kR);
        \draw[black,-stealth] (k'RM) to node[midway,above,sloped] {} (k'R);
        \draw[black,-stealth] (k'LM) to node[midway,above,sloped] {} (k'RM);
        \draw[black,-stealth] (kLM) to node[midway,above,sloped] {} (kRM);
        \draw[black,-stealth] (k') to node[midway,above,sloped] {} (k'LM);
        \draw[black,-stealth] (k) to node[midway,below,sloped] {} (kLM);

    \end{tikzpicture}
    }
      \captionsetup{width=.9\linewidth}
    \caption{An illustration of~\Cref{thm:decision version of main result} as a series of gap amplification steps. Starting with an instance of $k$-NCP on the left, we perform a series of transformations to obtain an instance of the distinguishing problem on the right. 
    Due to space constraints we have omitted descriptions of the corresponding distributions from the figure. These distributions also go through a series of transformations, from $\mathrm{Unif}(D)$ on the left to $\mathrm{Unif}(\mathrm{Span}(D))_{\oplus \ell}$ on the right.} 
    \label{fig:amplification}
\end{figure}


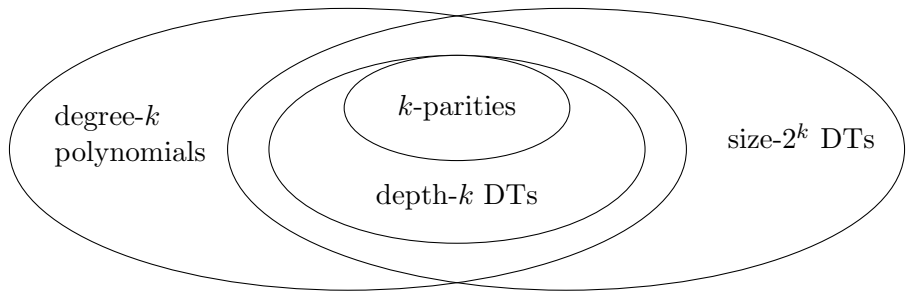
\begin{figure}[h!]
    \centering
    \begin{tikzpicture}[tips=proper]
        \node[ellipse,draw = black,text = black,minimum width = 5cm,minimum height = 2.5cm,fill=white!10,label={[shift={(0,-2.2)}]{depth-$k$ DTs}}] (depth) at (0,-.5) {};
        \node[ellipse,draw = black,text = black,minimum width = 3cm,minimum height = 1.4cm,fill=white!60] (parity) at (0,0.05) {$k$-parities};

        \node[ellipse,draw = black,text = black,minimum width = 9cm,minimum height = 3.75cm,label={[shift={(-2.7,-2.25)},text width=24mm]{degree-$k$\\ polynomials}}] (poly) at (-1.45,-.5) {};
        \node[ellipse,draw = black,text = black,minimum width = 9cm,minimum height = 3.75cm,label={[shift={(3.15,-2.0)},text width=20mm]{size-$2^k$ DTs}}] (size) at (1.45,-.5) {};
    \end{tikzpicture}
    \caption{Illustration of inclusions of basic function classes}
    \label{fig:inclusions}
\end{figure}

\paragraph{Step 1.} For the first step, we consider the linear span of $D$:
\[ \mathrm{Span}(D) \coloneqq \bigg\{ \sum_{i\in S} x^{(i)} \mid S \subseteq [m]\bigg\},
\] 
where we have assumed for simplicity that the vectors in $D$ are linearly independent. (Otherwise, the span is defined to be all possible linear combinations of the basis vectors of $D$.) We analogously consider $f$’s linear extension $f^{\mathrm{ext}} : \mathrm{Span}(D) \to \mathbb{F}_2$:  for all $S\sse [m]$, 
\[ f^{\mathrm{ext}}\bigg( \sum_{i\in S} x^{(i)}\bigg) = \sum_{i\in S} f\big(x^{(i)}\big)\]
and we prove the following ``boosting lemma": 

\begin{lemma}
\label{lem:boosting}
    For every set $D \sse \F_2^n$ and function $f: D\to\F_2$, we have: 
    \begin{itemize}
    \item[$\circ$] Preservation of the \textnormal{Yes} case: if $f$ is a parity $\chi_{S}$, then $f^{\mathrm{ext}}$ is also the parity $\chi_S$.
    \item[$\circ$] Amplification of the \textnormal{No} case: if $f$ disagrees with every $\violet{\alpha k}$-parity on at least one input in $D$, then $f^{\mathrm{ext}}$ disagrees with every $\violet{\alpha k}$-parity on exactly $\frac1{2}$ of the inputs in $\mathrm{Span}(D)$.
    \end{itemize}
\end{lemma}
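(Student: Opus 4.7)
The plan is to leverage the $\mathbb{F}_2$-linearity of $f^{\mathrm{ext}}$ and of parities $\chi_T$. Under the linear-independence assumption on $D = \{x^{(1)},\ldots,x^{(m)}\}$, every $y \in \mathrm{Span}(D)$ has a unique expression $y = \sum_{i\in S_y} x^{(i)}$, so $f^{\mathrm{ext}}(y) = \sum_{i\in S_y} f(x^{(i)})$ is well-defined and is a group homomorphism from $(\mathrm{Span}(D),+)$ to $(\mathbb{F}_2,+)$ by construction. The general (linearly dependent) case reduces to this one by replacing $D$ with a basis of $\mathrm{Span}(D)$, as noted in the statement.

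The Yes case is immediate from linearity. If $f \equiv \chi_S$ on $D$, then for any $y = \sum_{i\in S_y} x^{(i)} \in \mathrm{Span}(D)$,
\[
f^{\mathrm{ext}}(y) \;=\; \sum_{i\in S_y} \chi_S\bigl(x^{(i)}\bigr) \;=\; \chi_S\bigg(\sum_{i\in S_y} x^{(i)}\bigg) \;=\; \chi_S(y),
\]
so $f^{\mathrm{ext}} \equiv \chi_S$ on all of $\mathrm{Span}(D)$.

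For the No case, the plan is to fix an arbitrary $\alpha k$-parity $\chi_T$ and consider $g(y) \coloneqq f^{\mathrm{ext}}(y) + \chi_T(y)$, viewed as a function $\mathrm{Span}(D) \to \mathbb{F}_2$. The key observation is that $g$ is itself $\mathbb{F}_2$-linear: $f^{\mathrm{ext}}$ is linear by construction, $\chi_T$ is linear as a parity, and the sum of two $\mathbb{F}_2$-linear maps is linear. By hypothesis, there exists $x^{(i_0)} \in D$ with $f(x^{(i_0)}) \neq \chi_T(x^{(i_0)})$, i.e.\ $g(x^{(i_0)}) = 1$, so $g$ is not the zero homomorphism. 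Its image is then all of $\mathbb{F}_2$, so $\ker g$ has index $2$ in $\mathrm{Span}(D)$ and $g$ takes the value $1$ on exactly half of the inputs. Equivalently, $f^{\mathrm{ext}}$ disagrees with $\chi_T$ on exactly $\tfrac{1}{2}$ of $\mathrm{Span}(D)$.

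The main ``obstacle'' is really conceptual rather than technical: everything hinges on the well-definedness of $f^{\mathrm{ext}}$ (secured by linear independence) and the one-line fact that a nonzero $\mathbb{F}_2$-linear functional is perfectly balanced. In particular, no Fourier analysis or probabilistic estimate is needed; the amplification from ``disagrees on one point'' to ``disagrees on exactly half of the span'' falls out directly from viewing $f^{\mathrm{ext}} + \chi_T$ as an element of the vector space of $\mathbb{F}_2$-linear forms on $\mathrm{Span}(D)$.
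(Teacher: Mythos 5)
Your proof is correct and takes essentially the same approach as the paper: both arguments reduce the No case to the fact that a nonzero $\mathbb{F}_2$-linear functional on $\mathrm{Span}(D)$ is perfectly balanced. The paper phrases this combinatorially (a uniform random subset $\bI\sse[m]$ meets the nonempty disagreement set $A$ in an odd number of elements with probability exactly $\tfrac1{2}$), while you phrase it via the index-$2$ kernel of $f^{\mathrm{ext}}+\chi_T$; these are the same fact.
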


Note that the domain of our function has been increased exponentially in size, since $|\mathrm{Span}(D)| = 2^{|\mathrm{dim}(D)|}$. Thankfully, this is not an issue since we will still be able to efficiently provide the learner with random examples sampled from this exponentially large set. 


\paragraph{Step 2.}
The second step follows by Fourier analysis: if a function is uncorrelated with any small parity under $\mathcal{D}$, then by linearity of expectation, it is also uncorrelated with any low-degree Fourier polynomial under $\mathcal{D}$.

\paragraph{Step 3.} Finally, we give a generic way to lift lower bounds against low-degree polynomials to lower bounds against small-size decision trees. For intuition about this step, we briefly sketch an elementary proof for the case when $\mathcal{D}$ is the {\sl uniform} distribution. We claim that every small-size decision tree is well-approximated by a low-degree polynomial under the uniform distribution. To see this, note that truncating a size-$s$ tree $T$ at depth $d$ yields a tree $T_{\mathrm{trunc}}$ that is $(2^{-d}s)$-close to $T$ w.r.t.~the uniform distribution. This is because the fraction of inputs that follow any path of length $d$ is precisely $2^{-d}$ and we take a union bound over at most $s$ truncated paths. Finally, the fact that depth-$d$ decision trees have Fourier degree $d$ completes the proof. 

This proof fails for an arbitrary distribution $\mathcal{D}$ since the probability that a random $\bx\sim\mathcal{D}$ follows a path of length $d$ can now be much larger than $2^{-d}$. To overcome this, we show that by composing $\mathcal{D}$ with a parity gadget, it becomes “uniform enough” for this fact to hold. The parity gadget is defined as follows.


\paragraph{Notation.}  For a vector $y \in (\F_2^{\ell})^n$, we write $y^{(i)}\in \F_2^\ell$ to denote the $i$th block of $y$. We define the function $\BlockwisePar: (\F_2^\ell)^n \to \F_2^n$: 
\[ 
\BlockwisePar(y) \coloneqq (\oplus y^{(1)}, \ldots , \oplus y^{(n)}), 
\] 
where $\oplus y^{(i)}$ denotes the parity of the bits in $y^{(i)}$. 

\begin{definition}[Parity substitution in functions and distributions]
\label{def:parity-sub}
    For a function $g : \F_2^n\to\F_2$, the function $g_{\oplus \ell}:(\F_2^\ell)^n\to\F_2$ is defined as
    $$
    g_{\oplus \ell}(y)=g(\BlockwisePar(y)).
    $$
    For a distribution $\mathcal{D}$ over $\F_2^n$, the distribution $\mathcal{D}_{\oplus \ell}$ is defined via the following experiment:
    \begin{enumerate} 
    \item First sample $\bx \sim \mathcal{D}$. 
    \item For each $i\in [n]$, sample $\by^{(i)}\sim \F_2^\ell$ u.a.r.~among all strings satisfying $\oplus \by^{(i)} = \bx_i$. Equivalently, sample $\by\sim \mathcal{D}_{\oplus \ell}(\bx)$ where $\mathcal{D}_{\oplus \ell}(\bx)$ is the uniform distribution over all $y\in(\F_2^{\ell})^n$ satisfying $\BlockwisePar(y)=\bx$.
    \end{enumerate} 
\end{definition}

A key property of the parity substitution operation that for {\sl any} initial distribution $\mathcal{D}$, the parity-substituted distribution $\mathcal{D}_{\oplus \ell}$ becomes ``uniform-like'' in the sense that the probability a random $\by\sim \mathcal{D}_{\oplus \ell}$ is consistent with a fixed restriction decays exponentially in the length of the restriction.

\begin{proposition}[$\mathcal{D}_{\oplus \ell}$ is uniform-like]
    \label{prop:uniform-like-intro}
    For any $\ell \ge 2$, let $R\sse [n \ell]$ be a subset of coordinates and $r\in \F_2^{|R|}$. Then,
    $$
    \Prx_{\by\sim \mathcal{D}_{\oplus \ell}}[\by_R=r]\le 2^{-\Omega(|R|)}.
    $$
\end{proposition}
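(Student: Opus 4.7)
The plan is to compute $\Pr_{\by \sim \mathcal{D}_{\oplus \ell}}[\by_R = r]$ block by block, exploiting the fact that conditional on $\bx \sim \mathcal{D}$, the blocks $\by^{(1)}, \ldots, \by^{(n)}$ are independent. For each $i \in [n]$, let $R_i \subseteq [\ell]$ denote the coordinates of $R$ that lie in block $i$ and let $r_i \in \F_2^{|R_i|}$ be the corresponding bits of $r$. Partition the blocks into \emph{fully covered} ones, $F = \{i : |R_i| = \ell\}$, and \emph{partially covered} ones, $P = \{i : |R_i| < \ell\}$. Note $|R| = \ell |F| + \sum_{i \in P}|R_i|$, which in particular gives $|F| \le |R|/\ell$.

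The first step is to write
\[
\Prx_{\by \sim \mathcal{D}_{\oplus \ell}}[\by_R = r] \;=\; \Ex_{\bx \sim \mathcal{D}}\!\left[ \prod_{i=1}^n \Pr\!\big[\by^{(i)}_{R_i} = r_i \,\big|\, \bx_i\big]\right],
\]
using conditional independence of the blocks. The second step is a routine per-block calculation. Conditional on $\bx_i$, the string $\by^{(i)}$ is uniform over the $2^{\ell-1}$ strings in $\F_2^\ell$ of parity $\bx_i$. For $i \in P$, a counting argument shows that the marginal on $R_i$ is uniform on $\F_2^{|R_i|}$, independent of $\bx_i$; thus the factor equals $2^{-|R_i|}$. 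For $i \in F$, the factor equals $2^{-(\ell-1)} \cdot \mathbb{1}[\oplus r_i = \bx_i]$, since agreement on the entire block forces the parity.

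Plugging these in and upper-bounding the remaining $\mathcal{D}$-probability (over the parity constraints on the fully covered blocks) by $1$, we get
\[
\Prx_{\by \sim \mathcal{D}_{\oplus \ell}}[\by_R = r] \;\le\; \left(\prod_{i \in P} 2^{-|R_i|}\right) \cdot 2^{-|F|(\ell-1)} \;=\; 2^{-(|R| - \ell|F|)} \cdot 2^{-|F|(\ell-1)} \;=\; 2^{-|R| + |F|}.
\]
Finally, since $|F| \le |R|/\ell$, we conclude
\[
\Prx_{\by \sim \mathcal{D}_{\oplus \ell}}[\by_R = r] \;\le\; 2^{-|R|(1 - 1/\ell)} \;\le\; 2^{-|R|/2}
\]
for all $\ell \ge 2$, which is the desired $2^{-\Omega(|R|)}$ bound.

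I do not expect a genuine obstacle here; the slight subtlety is simply recognizing that the fully covered blocks are the ``bad'' ones (they contribute only a $2$-factor savings per coordinate on average, rather than a full factor of $2$), and that the condition $\ell \ge 2$ is precisely what ensures the $|F| \ell \le |R|$ bookkeeping gives a nontrivial exponent. The hypothesis $\ell \ge 2$ is essential: for $\ell = 1$, $\mathcal{D}_{\oplus 1} = \mathcal{D}$, which need not be uniform-like at all.
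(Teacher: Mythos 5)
Your proof is correct and follows essentially the same route as the paper's: condition on $\bx$, use independence of the blocks, and observe that a partially covered block contributes $2^{-|R_i|}$ while a fully covered block contributes $2^{-(\ell-1)}$, yielding the bound $2^{-|R|(1-1/\ell)}$. The only cosmetic difference is that you aggregate the loss via the count $|F|$ of full blocks whereas the paper bounds each block's factor by $2^{-|R^{(i)}|+|R^{(i)}|/\ell}$ and multiplies; these are equivalent.
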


\Cref{prop:uniform-like-intro} together with a couple of additional observations yields: 

\begin{lemma}[Degree-to-size lifting]
\label{lem:deg-to-size lifting}
Let $\mathcal{D}$ be any distribution over $\F_2^n$ and suppose $g : \F_2^n \to \F_2$ is \violet{$\tfrac{1}{2}$}-far from all polynomials of Fourier degree $\violet{\alpha k}$ under~$\mathcal{D}$. Then for all $\ell \ge 2$, we have that $g_{\oplus \ell}:(\F_2^\ell)^n\to\F_2$ is $(\violet{\tfrac{1}{2}} - 2^{-\Omega(\violet{\ell \alpha k})})$-far from all decision trees of size $2^{O(\violet{\ell \alpha k})}$ under $\mathcal{D}_{\oplus \ell}$. 
\end{lemma}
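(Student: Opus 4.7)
My plan is to argue the contrapositive: given any decision tree $T$ on $(\F_2^\ell)^n$ of size at most $s \le 2^{c_1 \ell \alpha k}$ for a sufficiently small constant $c_1 > 0$, I will show that $\Pr_{\by \sim \mathcal{D}_{\oplus \ell}}[T(\by) = g_{\oplus \ell}(\by)] \le \tfrac12 + 2^{-\Omega(\ell \alpha k)}$. The two-step strategy is to first replace $T$ by its depth-$d$ truncation via \Cref{prop:uniform-like-intro}, losing only a negligible fraction of inputs, and then to exploit that depth-$d$ trees are degree-$d$ real Fourier polynomials so that averaging out the parity gadget reduces to the correlation of $g$ with a polynomial of Fourier degree $d/\ell$ on $\F_2^n$, to which the hypothesis directly applies.

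For the truncation step, set $d = \ell \alpha k$ and let $T_{\mathrm{trunc}}$ be $T$ truncated at depth $d$ with arbitrary labels assigned to the newly created leaves. Every leaf of $T$ at depth $> d$ corresponds to a restriction of length $> d$ on the $n\ell$ coordinates, so by \Cref{prop:uniform-like-intro} a random $\by \sim \mathcal{D}_{\oplus \ell}$ reaches it with probability at most $2^{-\Omega(d)}$. A union bound over the at most $s$ leaves of $T$ yields $\Pr[T(\by) \ne T_{\mathrm{trunc}}(\by)] \le s \cdot 2^{-\Omega(d)} = 2^{c_1 \ell \alpha k - \Omega(\ell \alpha k)} \le 2^{-\Omega(\ell \alpha k)}$, provided $c_1$ is chosen smaller than the constant hidden in the exponent of \Cref{prop:uniform-like-intro}. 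It thus suffices to bound the agreement of the depth-$d$ tree $T_{\mathrm{trunc}}$ with $g_{\oplus \ell}$.

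For the averaging step, write $T_{\mathrm{trunc}}$ as its exact $\{-1,+1\}$ Fourier representation $P(y) = \sum_{|S| \le d} c_S \chi_S(y)$, which is a multilinear polynomial of degree at most $d$ because $T_{\mathrm{trunc}}$ is a depth-$d$ decision tree, and define $q(x) \coloneqq \E_{\by \sim \mathcal{D}_{\oplus \ell}(x)}[P(\by)]$, which is pointwise bounded by $1$. The key calculation is that conditional on $\bx = x$ the blocks $\by^{(1)},\ldots,\by^{(n)}$ are independent, each uniform on the coset $\{w \in \F_2^\ell : \oplus w = x_i\}$; writing $S_i$ for the intersection of $S$ with the $i$-th block, the factor $\E[\chi_{S_i}(\by^{(i)}) \mid \bx_i = x_i]$ equals $1$ when $S_i = \emptyset$, equals $(-1)^{x_i}$ when $S_i$ is the whole block, and vanishes otherwise. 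Hence only those $S$ that are unions of whole blocks survive, and each such $S$ contributes $c_S \chi_U(x)$ with $U = \{i : S \supseteq \text{$i$-th block}\}$ satisfying $|U| = |S|/\ell \le d/\ell = \alpha k$. So $q$ is a real polynomial on $\F_2^n$ of Fourier degree at most $\alpha k$, and the tower property gives $\E_{\by}[g_{\oplus \ell}(\by) P(\by)] = \E_{\bx}[g(\bx) q(\bx)]$. The hypothesis on $g$ (via \Cref{lem:parity-to-degree}: $\E_\mathcal{D}[g \cdot \chi_U] = 0$ for all $|U| \le \alpha k$, and by linearity $\E_\mathcal{D}[g \cdot q] = 0$) forces the right-hand side to vanish, which in the $\{-1,1\}$ picture means $T_{\mathrm{trunc}}$ has agreement exactly $\tfrac12$ with $g_{\oplus \ell}$; combining with the truncation step yields the claimed bound. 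I expect the main technical obstacle to be the character-vanishing calculation, which critically uses both the blockwise-independent structure of the parity gadget and the assumption $\ell \ge 2$, so that proper nonempty subsets of a block remain balanced and their characters average to zero.
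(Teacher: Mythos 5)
Your proof is correct and takes essentially the same route as the paper's: the paper likewise truncates the tree at depth $\Theta(\ell\alpha k)$ via the uniform-like property of $\mathcal{D}_{\oplus\ell}$ (\Cref{claim:pruning-depth}) and then kills the correlation of the resulting low-degree polynomial with $g_{\oplus\ell}$ using exactly your block-complete character-vanishing computation (\Cref{lem:new parity amplification} and \Cref{cor:parity-substitution}). The only cosmetic difference is that you average the gadget out to obtain a degree-$\alpha k$ polynomial $q$ on the base domain $\F_2^n$, whereas the paper states the equivalent fact in the other direction, namely that $g_{\oplus\ell}$ is $\tfrac12$-far from all degree-$\ell\alpha k$ polynomials on the lifted domain.
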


\subsection{Proof of~\Cref{thm:most general}: {\sc DT-Learn} solves the search version of $k$-NCP} 
\label{sec:search}

As in the proof of~\Cref{thm:decision version of main result}, we first move from the generator matrix formulation of $k$-NCP to the parity check formulation (\Cref{def:parity-check-view}). We therefore assume that our input is of the form $(H,t) \in \F^{m\times n}_2 \times \F_2^m$ where there is a $k$-sparse vector $x \in \F^n_2$ such that $Hx = t$. Our goal, in the search version of approximate $k$-NCP, is to {\sl find} a $k'$-sparse vector $x' \in F^n_2$ such that $Hx' = t$, where~$k'$ is as close to $k$ as possible. By the equivalence between~\Cref{def:parity-check-view,def:parity-consistent-view}, this instance $(H,t)$ can be viewed as a set $D\sse \F^n_2$ and a $k$-parity $f : D \to \F_2$, and our goal can be equivalently stated as that of finding a $k'$-parity $h : D \to \F_2$ that agrees with $f$, where $k'$ is as close to $k$ possible. 

Running through the 3-step transformation of the Yes case outlined in the previous section, we can efficiently provide the learner with random examples distributed according to $\mathrm{Unif}(\mathrm{Span}(D))_{\oplus \ell}$ and labeled by $(f^{\mathrm{ext}})_{\oplus \ell}$. Suppose the learner returns a size-$s'$ tree $T$ that is $\gamma$-correlated with $(f^{\mathrm{ext}})_{\oplus \ell}$ under $\mathrm{Unif}(\mathrm{Span}(D))_{\oplus \ell}$. We will show how the desired $k'$-parity $h : D \to \F_2$ can be extracted from $T$. Roughly speaking, this amounts to showing that the proof we sketched in the previous section can be ``unwound" to give an efficient algorithm for extracting such a parity. There are 4 steps to our analysis: 

\paragraph{Step 1.} By the contrapositive of~\Cref{claim:pruning-depth}, truncating $T$ at depth $\Theta(\log s') \eqqcolon k'$ yields a tree $T_{\mathrm{trunc}}$ that is 
$(\gamma - \Theta(\frac1{s'}))$-correlated with $(f^{\mathrm{ext}})_{\oplus \ell}$ under $\mathrm{Unif}(\mathrm{Span}(D))_{\oplus \ell}$. 


 \paragraph{Step 2.} Using basic Fourier-analytic properties of small-depth decision trees, we show that there exists a $k'$-parity $\chi_S$ in the Fourier support of $T_{\mathrm{trunc}}$ that is $((\gamma-\Theta(\frac1{s'}))4^{-k'})$-correlated with $(f^{\mathrm{ext}})_{\oplus \ell}$ under $\mathrm{Unif}(\mathrm{Span}(D))_{\oplus \ell}$.

\paragraph{Step 3.} Implicit in the proof of~\Cref{cor:parity-substitution} is that fact that we can undo the parity substitution operation and obtain from the aforementioned $k'$-parity $\chi_{S}$ a $(k'/\ell)$-parity $\chi_{S^\star}$ whose correlation with $f^{\mathrm{ext}}$ is the same as the correlation between $\chi_{S}$ and $(f^{\mathrm{ext}})_{\oplus \ell}$: 
\[ \Ex_{\bx\sim\mathcal{D}}\big[f^{\mathrm{ext}}(\bx)\chi_{S^\star}(\bx)\big] = \Ex_{\by\sim\mathcal{D}_{\oplus \ell}}\big[(f^{\mathrm{ext}})_{\oplus \ell}(\by)\chi_{S}(\by)\big] = \Big(\gamma-\Theta\Big(\frac1{s'}\Big)\Big)4^{-k'}.\]

 \paragraph{Step 4.}  Implicit in the proof of~\Cref{lem:boosting} is that fact that as long as the correlation between $\chi_{S^\star}$ and $f^{\mathrm{ext}}$ is positive, then $\chi_{S^\star}$ must in fact agree with $f^{\mathrm{ext}}$ on {\sl all} of $\mathrm{Span}(D)$, and hence with $f$ on all of $D$.

\section{Preliminaries} 

\paragraph{Notation and naming conventions.}{We write $[n]$ to denote the set $\{1,2,\ldots,n\}$. We use lower case letters to denote bitstrings e.g. $x,y\in\zo^n$ and subscripts to denote bit indices: $x_i$ for $i\in [n]$ is the $i$th index of $x$. For $R\sse [n]$, we write $x_R\in \zo^{|R|}$ to denote the substring of $x$ on the coordinates in $R$. A string $x\in \zo^n$ is $k$-sparse if it has at most $k$ nonzero entries. We use $\F_2$ to denote the finite field of order $2$. When dealing with finite fields, it will be convenient for us to identify a Boolean function on $n$ bits as a map $\F_2^n\to\F_2$.
}

\paragraph{Distributions.}{We use boldface letters e.g.~$\bx,\by$ to denote random variables. For a distribution $\mathcal{D}$, we write $\dist_{\mathcal{D}}(f,g)=\Pr_{\bx\sim \mathcal{D}}[f(\bx)\neq g(\bx)]$. A function $f$ is $\eps$-close to $g$ under $\mathcal{D}$ if $\dist_{\mathcal{D}}(f,g)\le \eps$. Similarly, $f$ is $\eps$-far from $g$ under $\mathcal{D}$ if $\dist_{\mathcal{D}}(f,g)\ge \eps$. { If $f$ is $0$-close under $\mathcal{D}$ to some $g$ having property $\mathcal{P}$, then we say that $f$ has property $\mathcal{P}$ under $\mathcal{D}$. For example, ``$f$ is a $k$-parity under $\mathcal{D}$'' means that there is a $k$-parity $g$ which is $0$-close to $f$ under $\mathcal{D}$.} For a set $S$, $\mathrm{Unif}(S)$ denotes the uniform distribution over that set. 
}

\paragraph{Parities and decision trees.}
For $S\sse[n]$, we write $\chi_S:\zo^n\to\zo$ to denote the parity of the coordinates in $S$. A $k$-parity function is a function $\chi_S$ for some $S\sse [n]$ with $|S|\le k$. A decision tree $T$ is a binary tree whose internal nodes query a coordinate and whose leaves are labeled by binary values. For a decision tree $T$, its size is the number of leaves in $T$ and is denoted $|T|$. 



\paragraph{Learning.}{In the PAC learning model, there is an unknown distribution $\mathcal{D}$ and some unknown \textit{target} function $f\in\mathcal{C}$ from a fixed \textit{concept} class $\mathcal{C}$ of functions over a fixed domain. An algorithm for learning $\mathcal{C}$ over $\mathcal{D}$ takes as input an error parameter $\eps\in (0,1)$ and has oracle access to an \textit{example oracle} $\textnormal{EX}(f,\mathcal{D})$. The algorithm can query the example oracle to receive a pair $(\bx,f(\bx))$ where $\bx\sim\mathcal{D}$ is drawn independently at random. The goal is to output a \textit{hypothesis} $h$ such that $\dist_{\mathcal{D}}(f,h)\le \eps$. Since the example oracle is inherently randomized, any learning algorithm is necessarily randomized. So we require the learner to succeed with some fixed probability e.g.~$2/3$. 
}

\subsection{Complexity-theoretic assumptions}
We list the hypotheses we use in order of strength of the hypothesis.
\begin{hypothesis}[{$\mathrm{W}[1]\neq \textnormal{FPT}$}, see \cite{DF13,CFKLMPPS15}]
For any computable function $\Phi:\N\to\N$, no algorithm can decide if a graph $G=(V,E)$ contains a $k$-clique in $\Phi(k)\cdot \poly(|V|)$ time. 
\end{hypothesis}

\begin{hypothesis}[Exponential time hypothesis (ETH) \cite{Tov84, IP01, IPZ01}]
There exists a constant $\delta>0$ such that $3$-SAT on $n$ variables cannot be solved in $O(2^{\delta n})$ time.
\end{hypothesis}



\begin{hypothesis}[Gap-ETH \cite{D16,MR16}]
    There exist constants $\lambda,\delta>0$ such that no algorithm running in time $O(2^{\delta m})$ can solve the following task. Given a 3-SAT instance $\varphi$ with $m$ clauses distinguish between
    \begin{itemize}
        \item[$\circ$] \textnormal{Yes}: there exists an assignment of $\varphi$ satisfying all $m$ clauses; and
        \item[$\circ$] \textnormal{No}: every assignment of $\varphi$ satisfies at most $(1-\lambda)m$ clauses.
    \end{itemize}
\end{hypothesis}


Our hardness results will be based on randomized versions of these hypotheses make the same runtime assumption but also against randomized algorithms. We remark that $\mathrm{W}[1]\neq \textnormal{FPT}$ is a weaker assumption than ETH which itself is weaker than Gap-ETH. 

\subsection{Parameterized complexity of $k$-NCP}
\label{sec:hardness of NCP}
Bonnet, Egri, Lin, and Marx in \cite{BELM16} (see also~\cite{BBEGKLMM21}) show that obtaining any constant approximation of $k$-NCP is W[1]-hard:
\begin{theorem}[W{$[1]$}-hardness of approximating $k$-NCP, follows from {\cite[Theorem 2]{BELM16}}]
\label{thm:w1-hardness}
    Assuming $\textnormal{W[1]} \ne \textnormal{FPT}$, for all constants $c>1$, there is no algorithm running in time $\Phi(k)\cdot \poly(n)$ for any computable function $\Phi:\N\to\N$ that solves $c$-approximate $k$-\textnormal{NCP}.
\end{theorem}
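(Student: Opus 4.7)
The plan is to deduce the statement as a direct consequence of the parameterized inapproximability result of Bonnet, Egri, Lin, and Marx, cited as [BELM16, Theorem 2]. Their theorem establishes the W[1]-hardness of a \emph{gap} version of $k$-NCP: for every constant $c > 1$, no algorithm running in time $\Phi(k) \cdot \poly(n)$ can distinguish (i) instances admitting a codeword at Hamming distance at most $k$ from the received word, from (ii) instances in which every codeword has Hamming distance greater than $ck$. Our desired statement is essentially a repackaging of this gap hardness as an approximation lower bound, so the core of the work is already done in [BELM16] and the proof here amounts to a short reduction and some bookkeeping.

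The key step would be the standard gap-to-approximation reduction. Suppose for contradiction that for some constant $c > 1$ there is an algorithm $\mathcal{A}$, running in time $\Phi(k) \cdot \poly(n)$, that solves $c$-approximate $k$-NCP, meaning that on any instance promised to have a codeword within distance $k$ it returns a codeword within distance $ck$. Given a gap instance as in [BELM16], I would simply run $\mathcal{A}$ on it, then compute in polynomial time the Hamming distance between $\mathcal{A}$'s output and the received word $z$; accept if and only if this distance is at most $ck$. On Yes instances the promise holds so $\mathcal{A}$ succeeds and we accept; on No instances no codeword lies within distance $ck$, so whatever $\mathcal{A}$ outputs we correctly reject. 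This decides the gap problem in $\Phi(k) \cdot \poly(n)$ time, contradicting [BELM16, Theorem 2] under W[1] $\ne$ FPT.

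The substantive content, namely the construction of gap instances for every constant $c > 1$, is entirely contained in [BELM16]. At a high level their reduction goes from a W[1]-hard parameterized CSP (such as Multicolored Clique) into $k$-NCP via algebraic gadgets that translate constraint violations into large Hamming distance, and amplifies the gap using coding-theoretic product/tensor constructions. Were one forced to reprove this from scratch, this gap-preserving parameterized reduction would be the main obstacle; since we are free to cite [BELM16] as a black box, the only step we need to carry out is the trivial translation in the previous paragraph, and the theorem follows.
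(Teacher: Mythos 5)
Your proposal is correct and matches the paper's treatment: the paper states this theorem as a direct consequence of \cite[Theorem 2]{BELM16} without further argument, and your gap-to-approximation translation is exactly the standard bookkeeping implicit in that citation.
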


Under ETH, a stronger hardness conjecture than $\mathrm{W}[1]\ne \mathrm{FPT}$, Li, Lin, and Liu \cite{LLL24} showed that a constant factor approximation is unattainable in time $n^{k^\delta}$ for constant $\delta>0$.

\begin{theorem}[ETH hardness of approximating $k$-NCP {\cite[Corollary 4]{LLL24}}]
\label{thm:eth-hardness}
    Assuming \textnormal{ETH}, for all constants $c>1$, there is no algorithm running in time $\Phi(k)\cdot n^{k^\delta}$ for any computable function $\Phi:\N\to\N$ and $\delta=\frac{1}{\polylog c}$ that solves $c$-approximate $k$-\textnormal{NCP}.
\end{theorem}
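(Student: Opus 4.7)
The plan is to reduce from ETH-hard 3-SAT to the gap version of $k$-NCP while carefully tracking the parameter blow-up, so that after amplifying the gap to an arbitrary constant $c>1$ the parameter $k$ only grows by a factor of $\polylog c$. First, I would invoke the Sparsification Lemma of Impagliazzo--Paturi--Zane to reduce to 3-SAT instances with a linear number of clauses, which under ETH have no $2^{o(m)}$-time algorithm. Composing with the standard textual PCP theorem (or rather its ETH-preserving variant due to Dinur) produces a Gap-3-SAT instance of size $O(m)$ for which distinguishing fully-satisfiable from $(1-\lambda)$-satisfiable instances already requires $2^{\Omega(m)}$ time; this gives the constant-gap starting point at essentially no cost in the hardness exponent.

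Next, I would adapt a parameterized reduction of the FGLSS type from Gap-3-SAT to $k'$-NCP for some $k'=\Theta(m)$, by encoding the PCP verifier's acceptance predicate as a system of linear equations over $\F_2$. The rows of the parity-check matrix $H$ correspond to the verifier's local tests, and $k'$ encodes the number of unsatisfied constraints. Because a constant-soundness PCP translates to only a constant-factor NCP gap $c_0>1$, the real work is in boosting $c_0$ up to an arbitrary target $c$. For this step I would apply a \emph{direct product / tensor amplification} on the code: taking $t$-fold tensor products of the instance (or using Lin-style distributed-PCP composition) amplifies the soundness gap from $c_0$ to $c_0^{\Theta(t)}$ while multiplying the parameter by roughly $t$ and the instance size by $n^{O(t)}$. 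Choosing $t=\Theta(\log c)$ gives the desired gap $c$ with a parameter blowup of $k=t\cdot k'$, and the final exponent in the runtime scales as $n^{k^\delta}$ with $\delta=1/\polylog c$, matching the statement.

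The main obstacle is the amplification step: naive parallel repetition destroys the linear structure of the code, and a bare tensor-product construction blows up the parameter too aggressively to achieve $\delta=1/\polylog c$ rather than a weaker $\delta=1/\poly c$. Overcoming this requires a gap-amplification tailored to linear codes---essentially a parameterized analogue of Dinur's gap amplification---that preserves $\F_2$-linearity while controlling how much the Hamming weight of a ``cheating'' vector can grow under the tensoring. I would model this step on the direct-product testers of Dinur--Kindler--Livni--Mossel and the recent parameterized PCP machinery of Guruswami--Lin--Ren--Sun--Wu, extracting the quantitatively sharp $\polylog c$ dependence by iterating the amplification $O(\log\log c)$ times with geometrically increasing block sizes instead of a single coarse repetition. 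Combining this with the earlier reduction then yields the claimed ETH lower bound.
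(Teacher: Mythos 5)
This statement is not proved in the paper at all: it is imported verbatim from prior work (\cite[Corollary 4]{LLL24}, building on \cite{LRSW22}), and the only thing the paper does with that machinery is restate the underlying reduction as its Theorem~\ref{thm:lll reduction} and plug it into Corollary~\ref{cor:explicit superpoly}. So what you have written is an attempt to reconstruct a research-level result from another paper, and as such it has two genuine gaps. First, your reduction is not actually parameterized: you set $k'=\Theta(m)$ while the NCP instance has size $\poly(m)$, so the parameter is comparable to the instance size. In that regime the statement ``no $\Phi(k)\cdot n^{k^\delta}$-time algorithm for any computable $\Phi$'' is vacuous, since $\Phi(k)=\Phi(\Theta(n))$ can absorb any running time whatsoever. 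The essential (and missing) ingredient is parameter compression: one must group the $m$ clauses/variables into $k$ blocks and encode each block's assignment over an alphabet of size roughly $2^{m/k}$, producing instances of size $\poly(m,2^{m/k})$ in which $k$ is a freely chosen, small parameter --- this is exactly the $\poly(m,2^{m/k})$ dependence visible in the paper's Theorem~\ref{thm:lll reduction}, and it is what makes a $\Phi(k)\cdot n^{k^\delta}$ algorithm collapse to a $2^{o(m)}$ algorithm for 3-SAT.

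Second, you correctly identify gap amplification with only $\polylog c$ loss in the parameter as the main obstacle, but you do not resolve it: the paragraph describing how to do it (``a parameterized analogue of Dinur's gap amplification that preserves $\F_2$-linearity \dots iterating with geometrically increasing block sizes'') is a research program, not a proof, and it is precisely the technical contribution of \cite{LLL24}. Naive tensoring multiplies the parameter as aggressively as it multiplies the gap (roughly squaring both per round), and showing that the completeness case stays exactly $k$-sparse while the soundness case degrades controllably is where all the work lies. I would not fault you for failing to reproduce \cite{LLL24} blind, but the proposal as written does not constitute a proof of the stated theorem; for the purposes of this paper the correct move is simply to cite it.
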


Under Gap-ETH, a stronger hardness conjecture than ETH, Manurangsi \cite{Man20} showed the same constant factor approximation is also unattainable even in time $n^{o(k)}$.
\begin{theorem}[Gap-ETH hardness of approximating $k$-NCP {\cite[Corollary 5]{Man20}}]
\label{thm:gap-eth-hardness}
    Assuming \textnormal{Gap-ETH}, for all constants $c>1$, there is no algorithm running in time $\Phi(k)\cdot n^{o(k)}$ for any computable function $\Phi:\N\to\N$ that solves $c$-approximate $k$-\textnormal{NCP}.
\end{theorem}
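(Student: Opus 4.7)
\medskip

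\noindent\emph{Proof plan.} The statement is quoted from Manurangsi~\cite{Man20}, and at the level of a citation no new argument is needed, but here is how I would approach reproving it. The plan is to chain two reductions: first a gap-preserving reduction from Gap-3-SAT (the problem ruled out by Gap-ETH) to a parameterized gap problem with a clean combinatorial structure such as Gap-MaxCover or Gap-Set-Cover, and then an encoding of that intermediate problem as an instance of $k$-NCP in which codeword weight corresponds to the covering/satisfaction measure.

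First I would invoke the standard PCP-style reduction (in the parameterized regime, essentially the one behind the W[1]-hardness chain of~\cite{BELM16,BBEGKLMM21} but instantiated with Gap-ETH rather than just W[1]) to turn a Gap-3-SAT instance on $m$ clauses into a parameterized gap problem on a ground set of size $N = \poly(m)$ with a solution parameter $k$. The key quantitative feature to maintain is that any $2^{o(m)}$-time algorithm for the original Gap-3-SAT would, through the reduction, translate into an $N^{o(k)}$-time algorithm for the gap problem; this is a standard ``sparsification + parameter blowup'' bookkeeping that goes through as long as $k$ scales appropriately with $m$.

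Second, I would encode the intermediate gap problem as a parity-check instance $(H,t)$ as in~\Cref{def:parity-check-view}: columns of $H \in \F_2^{m \times n}$ correspond to ground-set elements (or clauses) and rows correspond to constraints whose simultaneous satisfaction is encoded by $Hx = t$. The construction is arranged so that a ``yes'' instance admits a $k$-sparse $x$ with $Hx=t$, while a ``no'' instance forces every $x$ with $Hx=t$ to have Hamming weight at least $ck$, for the target constant $c > 1$. Combined with the runtime translation from the first step, any $\Phi(k)\cdot n^{o(k)}$-time algorithm for $c$-approximate $k$-NCP would yield a $2^{o(m)}$-time algorithm for Gap-3-SAT, contradicting Gap-ETH.

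The main obstacle I expect is the second step: designing the linear gadget so that (i) the constant gap $c$ is actually preserved when passing from combinatorial covers/assignments to $\F_2$-linear combinations of columns (linear cancellations could in principle produce low-weight solutions that do not correspond to genuine covers), and (ii) the parameter blowup is only $O(k)$ rather than $k \cdot \polylog$ or worse, which is essential for the lower bound to read $n^{o(k)}$ and not merely $n^{o(k/\polylog k)}$. Managing (i) typically requires a randomized or algebraic argument showing that spurious low-weight solutions occur with probability zero / can be ruled out by a union bound over codewords, and managing (ii) requires choosing the intermediate problem (and its parameters) so that the final $k$ in the NCP instance is linearly related to the original parameter $k$ of the gap problem.
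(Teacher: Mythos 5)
This theorem is stated as a black-box citation of \cite[Corollary 5]{Man20}; the paper supplies no proof of its own, so there is nothing to check your argument against line by line. Your plan does match the actual structure of Manurangsi's argument as the paper itself summarizes it in \Cref{thm:manurangsi reduction} --- a gap-preserving reduction from Gap-3-SAT to a parameterized covering-type gap problem (Label Cover/MaxCover, rather than Set Cover) followed by a linear-algebraic encoding into $k$-NCP --- and the two obstacles you flag (spurious low-weight solutions arising from $\F_2$-cancellations, and keeping the parameter blowup linear in $k$) are precisely the issues Manurangsi's gadget is designed to handle.
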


\section{{\sc DT-Learn} solves the decision version of $k$-NCP: Proof of \Cref{thm:decision version of main result}}

In this section, we prove the following from which \Cref{thm:decision version of main result} follows easily.

\begin{theorem}[Reducing decisional $k$-NCP to decision tree learning]
\label{thm:general main reduction}
    For all $\ell\ge 2$, the following holds. Given an instance $(G,z)$ of decisional $\alpha$-approximate $k$-\textnormal{NCP} over $\F_2^n$, there is function $g : (\F_2^{\ell})^n \to \F_2$ and a distribution $\mathcal{D}$ over $(\F_2^{\ell})^n$ such that the following holds.
\begin{enumerate}
 \item One can obtain random samples from $\mathcal{D}$ labeled by $g$ in $O(\ell n^2)$ time. 
\item If $(G,z)$ is a \textnormal{Yes} instance of decisional $\alpha$-approximate $k$-\textnormal{NCP} then $g$ is a $k\ell$-parity under $\mathcal{D}$. 
 \item If $(G,z)$ is a \textnormal{No} instance of decisional $\alpha$-approximate $k$-\textnormal{NCP} then $g$ is $(\frac1{2}-2^{-\Omega(\ell \alpha k)})$-far from every decision tree of size $2^{\Omega(\ell \alpha k)}$ under $\mathcal{D}$. 
\end{enumerate}
\end{theorem}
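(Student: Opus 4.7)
The plan is to define $g$ and $\mathcal{D}$ by composing the three gap-amplification operations laid out in the technical overview---\Cref{lem:boosting}, \Cref{lem:parity-to-degree}, and \Cref{lem:deg-to-size lifting}---on top of a preprocessing step that converts the generator-matrix view into the parity-consistency view of \Cref{def:parity-consistent-view}, and then to read off the three items of the theorem from the corresponding intermediate statements.

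First, in $O(n^3)$ time I would convert $(G,z)$ from the generator-matrix formulation into an equivalent parity-check instance $(H,t)$ as in \Cref{def:parity-check-view} by standard linear algebra, and then reinterpret this as an instance of \Cref{def:parity-consistent-view}: a dataset $D = \{x^{(1)},\ldots,x^{(m)}\} \sse \F_2^n$ given by the rows of $H$ together with the partial function $f(x^{(i)}) \coloneqq t_i$. Gaussian eliminating $H$ beforehand, I may assume the $x^{(i)}$'s are linearly independent with $m \le n$, which ensures that the linear extension $f^{\mathrm{ext}} : \mathrm{Span}(D) \to \F_2$ from \Cref{lem:boosting} is well-defined. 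I would then set
\[ g \coloneqq (f^{\mathrm{ext}})_{\oplus \ell} \quad\text{and}\quad \mathcal{D} \coloneqq \mathrm{Unif}(\mathrm{Span}(D))_{\oplus \ell}.\]
For item~1, a labeled sample is produced as follows: draw $\bS \sse [m]$ uniformly, compute $\bx \coloneqq \sum_{i \in \bS} x^{(i)}$ together with $f^{\mathrm{ext}}(\bx) = \sum_{i \in \bS} f(x^{(i)})$ in $O(nm) = O(n^2)$ time, and then for each block $i \in [n]$ independently draw $\by^{(i)} \in \F_2^\ell$ uniformly subject to $\oplus \by^{(i)} = \bx_i$ in $O(\ell)$ time. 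This yields $(\by,g(\by))$ with $\by \sim \mathcal{D}$ in $O(\ell n^2)$ total time.

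For item~2, in the Yes case $f = \chi_S$ on $D$ for some $|S| \le k$, so the first bullet of \Cref{lem:boosting} gives $f^{\mathrm{ext}} = \chi_S$ on $\mathrm{Span}(D)$, and then by \Cref{def:parity-sub}
\[ g(y) \;=\; \chi_S(\BlockwisePar(y)) \;=\; \bigoplus_{i \in S}\bigoplus_{j \in [\ell]} y^{(i)}_j,\]
which is a $k\ell$-parity on $(\F_2^\ell)^n$. For item~3, the second bullet of \Cref{lem:boosting} upgrades the No case to $f^{\mathrm{ext}}$ being exactly $\tfrac{1}{2}$-far from every $\alpha k$-parity under $\mathrm{Unif}(\mathrm{Span}(D))$; \Cref{lem:parity-to-degree} strengthens this to $\tfrac{1}{2}$-farness from every polynomial of Fourier degree at most $\alpha k$ under the same distribution; and \Cref{lem:deg-to-size lifting} then gives that $g$ is $(\tfrac{1}{2} - 2^{-\Omega(\ell \alpha k)})$-far from every decision tree of size $2^{\Omega(\ell \alpha k)}$ under $\mathcal{D}$, as required.

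I expect the main difficulty to be bookkeeping rather than any single substantive step, since the three amplification lemmas do all of the real work. In particular I need to verify that (i) when the rows of $H$ are linearly dependent, removing redundant rows leaves the Yes/No answer intact (if the $t_i$'s are inconsistent with the dependencies there is no codeword at all, which is a trivial No instance; otherwise every $\alpha k$-sparse solution of the reduced system automatically satisfies the discarded equations); and (ii) the sampler above produces $\by$ distributed exactly as in \Cref{def:parity-sub}, so that $(\by,g(\by))$ is a genuine labeled example from $(\mathcal{D},g)$. Both are routine once the bijection between $S \sse [m]$ and $\mathrm{Span}(D)$ is made explicit.
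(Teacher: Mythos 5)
Your proposal is correct and follows essentially the same route as the paper: convert to the parity-check/parity-consistency view, set $g = (f^{\mathrm{ext}})_{\oplus \ell}$ and $\mathcal{D} = \mathrm{Unif}(\mathrm{Span}(D))_{\oplus \ell}$, and chain \Cref{lem:boosting}, \Cref{lem:parity-to-degree}, and \Cref{lem:deg-to-size lifting} exactly as the paper does, with the same sampler for item~1. The linear-independence point you flag is handled identically in the paper (the rows of $H$ form a basis of the dual code), so there is no gap.
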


\subsection{Equivalent formulations of NCP}
\label{sec:equivalence of NCP variants}
In proving \Cref{thm:decision version of main result}, we will use the parity check view of NCP (\Cref{def:parity-check-view}). The fact that this formulation is equivalent to the generator view is standard and we include it here for completeness.

\begin{proposition}[Equivalence of the generator view and the parity check view of NCP]
\label{prop:equivalence of parity and generator views}
    The problem in \Cref{def:parity-check-view} is equivalent to $k$-\textnormal{NCP}. 
\end{proposition}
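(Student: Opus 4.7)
The plan is to establish both directions of the equivalence via elementary linear algebra, showing that one can translate an instance of generator-view $k$-NCP into an instance of parity-check-view $k$-NCP and vice versa in polynomial time, while preserving both the parameter $k$ and the Yes/No answer.

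For the forward direction, given a generator matrix $G \in \mathbb{F}_2^{n\times d}$ and a received word $z \in \mathbb{F}_2^n$, I would first compute via Gaussian elimination a parity check matrix $H \in \mathbb{F}_2^{m\times n}$ for the code $\mathcal{C}$ generated by the columns of $G$, i.e.~a matrix whose rows form a basis of the orthogonal complement of the column span of $G$, so that $HG = 0$ and $\ker(H) = \mathcal{C}$. Then I would set $t \coloneqq Hz$. The key identity is the bijection $y \mapsto x \coloneqq z + Gy$ between $\mathbb{F}_2^d$ and the coset $z + \mathcal{C}$: since $Hx = Hz + HGy = Hz = t$, any $x$ in this coset satisfies $Hx = t$, and conversely any $x$ with $Hx = t$ satisfies $H(x-z)=0$, hence $x - z \in \mathcal{C}$. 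Under this bijection, $\mathrm{dist}(Gy, z) = \|x\|_0$, so a codeword within Hamming distance $k$ of $z$ exists iff there is a $k$-sparse $x$ with $Hx = t$.

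For the reverse direction, given $(H, t) \in \mathbb{F}_2^{m\times n} \times \mathbb{F}_2^m$, I would again use Gaussian elimination to compute a matrix $G \in \mathbb{F}_2^{n \times d}$ whose columns form a basis of $\ker(H)$, and then find any particular solution $z \in \mathbb{F}_2^n$ to $Hz = t$ (if none exists we are immediately in the No case). Setting $\mathcal{C}$ to be the code generated by $G$, the same bijection $y \mapsto z + Gy$ now runs the other way: $k$-sparse solutions $x$ to $Hx = t$ correspond to codewords $Gy = z + x$ within distance $k$ of $z$.

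Both transformations preserve the sparsity/distance parameter exactly and run in time $\mathrm{poly}(m,n,d)$; the same argument works verbatim for the decisional approximate version (\Cref{def:decisional NCP} versus \Cref{def:parity-check-view}) since the bijection preserves weights of error vectors, hence also preserves the gap between $k$ and $\alpha k$. The only mildly subtle step is computing a parity check matrix from a generator matrix (or vice versa), which reduces to finding a basis for the kernel/cokernel of a matrix over $\mathbb{F}_2$; this is standard and poses no real obstacle, so I expect the proof to be essentially bookkeeping once the coset bijection $x = z + Gy$ is stated.
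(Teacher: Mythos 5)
Your proposal is correct and follows essentially the same route as the paper: compute the parity check matrix $H$ of $\mathcal{C}$ by Gaussian elimination, set $t\coloneqq Hz$, and observe that $k$-sparse solutions of $Hx=t$ correspond exactly to codewords within distance $k$ of $z$ via the coset identification $x=z+Gy$. The paper states this more tersely (leaving the coset bijection and the reverse direction as a ``readily verified'' step), so your write-up is just a more explicit version of the same argument.
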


\begin{proof}
    Let $G\in \F_2^{n\times d}$ be the generator matrix for a code $\mathcal{C}$ and $z\in\F_2^n$, a received message. Let $H\in \F_2^{(n-d)\times n}$ be such that $H^{\top}$ is the generator of the dual code $\mathcal{C}^{\perp}$. The matrix $H$ can be efficiently computed from a generator matrix for the code $\mathcal{C}$. Furthermore, $H$ is the parity-check matrix for $\mathcal{C}$ since $Hx=0$ if and only if $x\in\mathcal{C}$. One can readily verify that the distance from $z$ to $\mathcal{C}$ is $k$ if and only if there is a $k$-sparse $x\in\F_2^n$ satisfying $Hx=Hz$. 
\end{proof}

The parity check view also lends itself nicely to being formulated as a learning task (\Cref{def:parity-consistent-view}). This fact is also standard and we include the equivalence for completeness.

\begin{proposition}[Equivalence of parity consistency problem and NCP]
    \label{prop:equivalence of parity consistency problem and NCP}
    The problem in \Cref{def:parity-consistent-view} is equivalent to the problem in \Cref{def:parity-check-view}.
\end{proposition}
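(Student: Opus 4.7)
The plan is to exhibit a syntactic bijection between the two problems that preserves the Yes and No cases. Given an instance $(H,t)$ of \Cref{def:parity-check-view} with $H \in \F_2^{m\times n}$ and $t \in \F_2^m$, I would set $D = \{x^{(1)},\ldots,x^{(m)}\} \sse \F_2^n$ where $x^{(i)}$ is the $i$-th row of $H$ (viewed as a vector in $\F_2^n$), and define $f : D \to \F_2$ by $f(x^{(i)}) = t_i$. The reverse direction stacks the $x^{(i)}$ as rows of a matrix $H$ and sets $t_i = f(x^{(i)})$; both directions are clearly linear-time in the input size.

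The key identity driving the equivalence is: for any $x\in \F_2^n$ with support $S = \{j\in[n] : x_j = 1\}$, we have $(Hx)_i = \langle x^{(i)}, x\rangle = \chi_S(x^{(i)})$, and $x$ is $k$-sparse iff $|S|\le k$, i.e., iff $\chi_S$ is a $k$-parity. Consequently $Hx = t$ holds iff $\chi_S$ agrees with $f$ on every point of $D$. The existence of a $k$-sparse $x$ with $Hx=t$ is therefore equivalent to the existence of a $k$-parity that agrees with $f$ on $D$, which is exactly the Yes case of \Cref{def:parity-consistent-view}. Likewise, the non-existence of any $\alpha k$-sparse $x$ with $Hx=t$ corresponds to every $\alpha k$-parity disagreeing with $f$ on at least one input in $D$, matching the No case of \Cref{def:parity-consistent-view} on the nose.

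The only mild subtlety is when $H$ has two identical rows $H_i = H_j$ but $t_i \ne t_j$, in which case $f$ would be ill-defined as a function on the set $D$. Such instances are trivially No instances because no $x \in \F_2^n$ can satisfy $Hx = t$ at all; the reduction can detect and discard such duplicate-inconsistent rows in a linear preprocessing pass (and similarly collapse duplicate-consistent rows into a single element of $D$). Beyond this trivial edge case the correspondence is entirely syntactic, so I do not anticipate any real obstacle---the content of the proof lies solely in the identity $(Hx)_i = \chi_S(x^{(i)})$ that translates parity-check constraints into parity-agreement constraints.
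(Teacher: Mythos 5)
Your proposal is correct and matches the paper's proof: both use the rows of $H$ as the set $D$ with $f(x^{(i)})=t_i$, and both rest on the identity $(Hx)_i=\chi_{\mathrm{supp}(x)}(x^{(i)})$ translating $k$-sparse solutions into $k$-parities consistent with $f$. The duplicate-row edge case you flag does not arise in the paper's setting, since (as noted in \Cref{remark:linear-independence}) the rows of $H$ form a basis of the dual code and are therefore linearly independent.
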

\begin{proof}
    Let $H$ be the parity check matrix of a code $\mathcal{C}$ and $t\in\F_2^m$. The set $D=\{x^{(1)},\ldots,x^{(m)}\}$ consisting of the rows of $H$ and $f:D\to\F_2$ given by $f(x^{(i)})=t_i$ has the property that $Hx=t$ if and only if $x^{(i)}\cdot x=t_i$ for all $i=1,\ldots,m$. Therefore, if $x$ is $k$-sparse, then $f$ is a $k$-parity. Furthermore, if no $k'$-sparse $x$ satisfies $Hx=t$ then $f$ disagrees with every $k'$-parity on at least one point in $D$, and is therefore $\frac{1}{m}$-far from every such parity under $\mathrm{Unif}(D)$. 
\end{proof}

\begin{remark}[Linear independence of the vectors in $D$]
\label{remark:linear-independence}
    Implicit in the proof of \Cref{prop:equivalence of parity consistency problem and NCP} is the fact that the vectors in $D$ can be assumed to be linearly independent. This is because the parity check matrix $H$ is obtained by computing a basis (i.e. a set of linearly independent vectors) for the dual code $\mathcal{C}^{\top}$. This basis forms the rows of $H$ which are then used to form $D$. 
\end{remark}

With this view in hand, we proceed with the three main steps used to prove \Cref{thm:decision version of main result}.

\subsection{Step 1: The {Span} operation and its properties}
First, we show that we can efficiently generate random samples from the distribution $\mathrm{Unif}(\mathrm{Span}(D))$ labeled by $f^{\mathrm{ext}}$.
\begin{proposition}[Random samples from $\mathrm{Unif}(\mathrm{Span}(D))$ labeled by $f^{\mathrm{ext}}$]
\label{prop:sampling}
    Given a linearly independent set of vectors $D\sse \F_2^n$ and $f:D\to\F_2$, random examples from $\mathrm{Unif}(\mathrm{Span}(D))$ labeled by $f^{\mathrm{ext}}$ can be obtained in time $O(|D|n)$.
\end{proposition}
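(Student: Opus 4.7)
The plan is to describe a direct sampling procedure and verify that it produces the correct distribution and correct labels within the stated time budget.

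First, I would sample a uniformly random subset $\bS \subseteq [|D|]$ by flipping $|D|$ independent fair coins, one per element of $D = \{x^{(1)},\ldots,x^{(|D|)}\}$; this takes $O(|D|)$ time. Next, I would compute the example $\by = \sum_{i\in \bS} x^{(i)} \in \F_2^n$ by iterating over $i \in \bS$ and XORing $x^{(i)}$ into a running accumulator, which takes $O(|D|\cdot n)$ time since each XOR of two $n$-bit vectors costs $O(n)$. In parallel, I would maintain a running sum $\sum_{i \in \bS} f(x^{(i)}) \in \F_2$ to serve as the label, which adds only $O(|D|)$ additional time. The output is the pair $(\by,\sum_{i \in \bS} f(x^{(i)}))$, and the overall cost is $O(|D|\cdot n)$ as required.

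To justify correctness, I would invoke the linear independence of $D$ (noted in \Cref{remark:linear-independence}), which guarantees that the map $S \mapsto \sum_{i\in S} x^{(i)}$ is a bijection from $2^{[|D|]}$ onto $\mathrm{Span}(D)$; hence a uniformly random $\bS$ induces a uniformly random $\by \sim \mathrm{Unif}(\mathrm{Span}(D))$. The label is correct essentially by definition: the value $f^{\mathrm{ext}}(\by)$ is specified by $f^{\mathrm{ext}}\bigl(\sum_{i \in S} x^{(i)}\bigr) = \sum_{i\in S} f(x^{(i)})$, which is exactly what the procedure computes. Since the same subset $\bS$ determines both $\by$ and $f^{\mathrm{ext}}(\by)$, the pair is distributed as an example drawn from $\mathrm{Unif}(\mathrm{Span}(D))$ labeled by $f^{\mathrm{ext}}$.

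There is no real obstacle here beyond the bookkeeping; the only subtle point is ensuring that linear independence is used so that the subset-to-span map is a bijection (rather than merely a surjection, which would still give a well-defined distribution but potentially a non-uniform one). If one wished to drop the linear independence assumption, one would first precompute a basis of $\mathrm{Span}(D)$ in $O(|D|^2 n)$ time via Gaussian elimination and sample from that basis instead; but since the proposition assumes linear independence, no such preprocessing is needed.
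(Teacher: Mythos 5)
Your proposal is correct and follows exactly the paper's argument: sample a uniform random subset $\bI \subseteq [|D|]$, return $\bigl(\sum_{i\in\bI}x^{(i)},\sum_{i\in\bI}f(x^{(i)})\bigr)$, and use linear independence to conclude that the subset-to-span map is a bijection so the resulting example is uniform over $\mathrm{Span}(D)$. Your additional bookkeeping on the $O(|D|n)$ runtime and the remark about the non-independent case are fine but not needed.
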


\begin{proof}
    Let $D=\{x^{(1)},\ldots,x^{(m)}\}$. Each $x\in\mathrm{Span}(D)$ can be written as a \textit{unique} sum $x=\sum_{i\in I}x^{(i)}$ for $I\sse[m]$. Therefore, to sample a pair $(\bx,f^{\mathrm{ext}}(\bx))$ where $\bx\sim \mathrm{Unif}(\mathrm{Span}(D))$ is uniform random, it is sufficient to sample a uniform random subset $\bI\sse[m]$ and return $(\sum_{i\in \bI}x^{(i)},\sum_{i\in \bI}f(x^{(i)}))$.
\end{proof}

\subsubsection{Proof of \Cref{lem:boosting}}
\paragraph{{Preservation of the Yes case}.}
Suppose that $f$ is the parity $\chi_S$. That is, for every $x\in D$, we have $\chi_S(x)=f(x)$. Then by linearity, we have for all $I\sse [m]$:
\begin{align*}
    \chi_S\left(\sum_{i\in I}x^{(i)}\right)&=\sum_{i\in I}\chi_S(x^{(i)})=\sum_{i\in I}f(x^{(i)}).
\end{align*}
This shows that $f^{\mathrm{ext}}:\mathrm{Span}(D)\to\F_2$ is the parity $\chi_S$.

\paragraph{{Amplification of the No case}.}
For the second point, let $\chi_S$ be a $k'$-parity for $k'=\alpha k$. Let $A\sse [m]$ indicate the set of points which are misclassified by $\chi_S$. That is, $i\in A$ if and only if $\chi_S(x^{(i)})\neq f(x^{(i)})$. Then, $\chi_S\left(\sum_{i\in I}x^{(i)}\right)=\mathrm{Parity}(|I\cap A|)+\sum_{i\in I}f(x^{(i)})$ which shows that
\begin{align*}
    \Prx_{\bI}\left[\chi_S\left(\sum_{i\in \bI}x^{(i)}\right)\neq \sum_{i\in \bI}f(x^{(i)})\right]&=\Prx_{\bI}\biggl[|\bI\cap A|\text{ is odd}\biggr]
\end{align*}
where $\bI\sse [m]$ is a uniform random subset of $[m]$.
Since $A\neq \varnothing$ by our assumption that any $k'$-parity disagrees with $f$ on at least one point, we have that $\Prx_{\bI}\biggl[|\bI\cap A|\text{ is odd}\biggr]=1/2$. Indeed, $\bI$ can equivalently be viewed as a uniform random string in $\bI\in \zo^{m}$ denoting the characteristic vector of the set. In this case, $|\bI\cap A|$ is odd if and only if the parity of the bits in the substring $\bI_{A}\in \zo^{|A|}$ is $1$ which happens with probability $1/2$ for a uniform random $\bI$. This shows that $\chi_S$ disagrees with $f^{\mathrm{ext}}$ on $1/2$ of the points in $\mathrm{Span}(D)$ as desired.\hfill\qed

\subsection{Step 2: Zero correlation with low-degree polynomials}

\begin{lemma}
    \label{lem:parity-to-degree}
    Let $g:\F_2^n\to\F_2$ be a function and $\mathcal{D}$ be a distribution over $\F_2^n$. If
    $$
    \dist_{\mathcal{D}}(g,\chi_S)=\lfrac{1}{2}\qquad \text{ for every $k'$ parity }\chi_S
    $$
    then,
    $$
    \dist_{\mathcal{D}}(g,h)=\lfrac{1}{2}\qquad \text{ for every }h\text{ with Fourier degree }\le k'.
    $$
\end{lemma}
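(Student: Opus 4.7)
The plan is to translate distances into correlations via the $\pm 1$ encoding and then apply linearity of expectation after a Fourier expansion. First I would set $G(x) \coloneqq (-1)^{g(x)}$ and, for each $S \sse [n]$, let $X_S(x) \coloneqq (-1)^{\sum_{i\in S} x_i}$ denote the $\pm 1$-valued character associated to the parity $\chi_S$. For any function $u:\F_2^n\to\F_2$ with $\pm 1$-encoding $U \coloneqq (-1)^u$, one has the identity $\dist_{\mathcal{D}}(g,u) = \lfrac{1}{2} - \lfrac{1}{2}\,\E_{\bx\sim \mathcal{D}}[G(\bx) U(\bx)]$, so the hypothesis of the lemma rephrases cleanly as
\[ \E_{\bx\sim \mathcal{D}}[G(\bx)\,X_S(\bx)] = 0 \quad \text{for every } S \sse [n] \text{ with } |S|\le k'. \]

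Next, I would invoke the standard fact that $\{X_S\}_{S \sse [n]}$ is a basis for real-valued functions on $\F_2^n$: any $H: \F_2^n \to \mathbb{R}$ admits a pointwise identity $H(x) = \sum_{S\sse [n]} \hat{H}(S)\,X_S(x)$, where the Fourier coefficients $\hat{H}(S)$ are scalars (conventionally computed via the uniform inner product on $\F_2^n$). The hypothesis that $h$ has Fourier degree at most $k'$ is, by definition, the statement that $\hat{H}(S) = 0$ whenever $|S| > k'$, where $H \coloneqq (-1)^h$; hence the expansion of $H$ involves only characters $X_S$ with $|S| \le k'$.

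Finally, since the Fourier expansion of $H$ is a \emph{pointwise} identity on $\F_2^n$, it can be averaged against any distribution on $\F_2^n$. Applying linearity of expectation together with the rephrased hypothesis yields
\[ \E_{\bx\sim \mathcal{D}}[G(\bx)\,H(\bx)] \;=\; \sum_{|S|\le k'} \hat{H}(S)\,\E_{\bx\sim \mathcal{D}}[G(\bx)\,X_S(\bx)] \;=\; 0, \]
which translates back to $\dist_{\mathcal{D}}(g,h) = \lfrac{1}{2}$, as required. I don't foresee a genuine obstacle here: the only conceptual subtlety worth emphasizing is the separation between the \emph{definition} of the Fourier coefficients (performed relative to the uniform inner product, irrespective of $\mathcal{D}$) and their \emph{use} through a pointwise identity that is then averaged under the arbitrary distribution $\mathcal{D}$ appearing in the lemma.
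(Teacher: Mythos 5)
Your proposal is correct and follows essentially the same route as the paper's proof: pass to the $\pm 1$ encoding, observe that the hypothesis says $g$ has zero correlation with every $k'$-parity under $\mathcal{D}$, expand $h$ in its (uniform-inner-product) Fourier basis as a pointwise identity, and apply linearity of expectation under $\mathcal{D}$. The subtlety you flag—that the Fourier coefficients are defined relative to the uniform measure while the expansion is averaged against the arbitrary $\mathcal{D}$—is exactly the point that makes the argument work, and the paper uses it implicitly in the same way.
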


\begin{proof}
    This proof uses basic Fourier analysis. As such, it will be convenient for us to regard $g:\F_2^n\to\F_2$ as a function $g:\F_2^n\to\R$ (this is achieved by mapping $\F_2$ to $\R$ via $0\to 1$ and $1\to -1$). The correlation of $g$ with any $k'$-parity $\chi_S$ under $\mathcal{D}$ is $0$ since
    \begin{align*}
        \Ex_{\bx\sim\mathcal{D}}[g(x)\chi_S(x)]&=\Prx_{\bx\sim\mathcal{D}}[g(\bx)=\chi_S(\bx)]-\Prx_{\bx\sim\mathcal{D}}[g(\bx)\neq \chi_S(\bx)]\\
        &=1-2\cdot\dist_{\mathcal{D}}(g,\chi_S)\\
        &=0.\tag{$\dist_{\mathcal{D}}(g,\chi_S)=\frac{1}{2}$}
    \end{align*}
    Therefore, the correlation under $\mathcal{D}$ between $g$ and any $h:\F_2^n\to\R$ whose polynomial degree is at most $k'$ is:
    \begin{align*}
        \Ex_{\bx\sim \mathcal{D}}[g(\bx)h(\bx)]&=\Ex_{\bx\sim \mathcal{D}}\Bigg[\Bigg(\sum_{{|S|\le k'}}\hat{h}(S)\chi_S(\bx)\Bigg)g(\bx)\Bigg]\\
        &=\sum_{|S|\le k'}\hat{h}(S)\Ex_{\bx\sim \mathcal{D}}[\chi_S(\bx)g(\bx)]\\
        &=0.
    \end{align*}
    This shows that $ \dist_{\mathcal{D}}(g,h)=\tfrac{1}{2}$ as desired.
\end{proof}

\subsection{Step 3: Proof of \Cref{lem:deg-to-size lifting}}
In this section, we prove \Cref{lem:deg-to-size lifting}. First, we establish some key properties of $f_{\oplus \ell}$ and $\mathcal{D}_{\oplus\ell}$ (recalling the relevant definitions from \Cref{def:parity-sub}).

\subsubsection{Properties of blockwise parity distribution}

If the distribution $\mathcal{D}$ can be efficiently sampled from, then so can the distribution $\mathcal{D}_{\oplus \ell}$. Likewise, if random samples from $\mathcal{D}$ can be labeled by $f$, then random samples from $\mathcal{D}_{\oplus \ell}$ can be labeled by $f_{\oplus\ell}$. This follows directly from the definition of parity substitution \Cref{def:parity-sub}.

\begin{fact}[Random samples from $\mathcal{D}_{\oplus \ell}$ labeled by $f_{\oplus\ell}$]
\label{fact:sampling-d-parity-ell}
    If there is a time-$t$ algorithm generating random samples from $\mathcal{D}$ labeled by $f:\F_2^n\to\F_2$, then there is an algorithm running in time $t+O(\ell n)$ for generating random samples from $\mathcal{D}_{\oplus \ell}$ labeled by $f_{\oplus\ell}$. 
\end{fact}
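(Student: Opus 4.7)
The plan is to directly implement the two-stage sampling procedure that appears in \Cref{def:parity-sub}. First, I would invoke the assumed sampler for $\mathcal{D}$ once to obtain a labeled pair $(\bx, f(\bx))$ where $\bx \sim \mathcal{D}$; by hypothesis this takes time $t$.

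Next, I would extend $\bx$ to a sample $\by \sim \mathcal{D}_{\oplus \ell}$ by processing each of the $n$ coordinates independently. For each $i \in [n]$, I would draw the first $\ell - 1$ bits of $\by^{(i)}$ uniformly from $\F_2^{\ell-1}$ and then set the final bit so that $\oplus \by^{(i)} = \bx_i$. This yields a uniformly random string in $\F_2^\ell$ conditioned on having parity $\bx_i$, which is exactly the inner distribution $\mathcal{D}_{\oplus \ell}(\bx)$ prescribed in \Cref{def:parity-sub}. The cost per block is $O(\ell)$, for a total of $O(\ell n)$ over all $n$ blocks.

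Finally, I would return $(\by, f(\bx))$ as the labeled example. Correctness of the label is immediate from the definition $f_{\oplus \ell}(\by) = f(\mathrm{BlockwisePar}(\by))$: since $\mathrm{BlockwisePar}(\by) = (\oplus \by^{(1)}, \ldots, \oplus \by^{(n)}) = \bx$ by construction, we have $f_{\oplus \ell}(\by) = f(\bx)$, so the outputted label is indeed $f_{\oplus \ell}(\by)$. Correctness of the marginal distribution of $\by$ follows because the procedure exactly mirrors the two-stage experiment defining $\mathcal{D}_{\oplus \ell}$. Adding the costs gives the claimed runtime $t + O(\ell n)$.

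There is no real obstacle here; the statement is essentially a restatement of \Cref{def:parity-sub} in algorithmic terms. The only thing worth mentioning explicitly is the per-block conditional sampler (uniform first $\ell-1$ bits, then a parity-fixing bit), which is why the $O(\ell n)$ overhead is linear rather than requiring rejection sampling.
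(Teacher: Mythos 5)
Your proposal is correct and matches the paper's intent exactly: the paper states this Fact without proof, noting only that it follows directly from \Cref{def:parity-sub}, and your explicit two-stage sampler (invoke the $\mathcal{D}$-sampler, then per block draw $\ell-1$ uniform bits and fix the last bit to match the parity, outputting $(\by, f(\bx))$) is precisely the intended direct implementation. Nothing is missing.
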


As mentioned in the introduction, a key property of the distribution $\mathcal{D}_{\oplus \ell}$ is that it is ``uniform-like'' in the sense that the probability a random $\by\sim \mathcal{D}_{\oplus \ell}$ is consistent with a fixed restriction decays exponentially in the length of the restriction.

\begin{proposition}[Formal version of~\Cref{prop:uniform-like-intro}]
    \label{prop:uniform-like}
    Let $R\sse [n \ell]$ be a subset of coordinates and $r\in \F_2^{|R|}$. Then,
    $$
    \Prx_{\by\sim \mathcal{D}_{\oplus \ell}}[\by_R=r]\le 2^{-|R|(1-1/\ell)}.
    $$
\end{proposition}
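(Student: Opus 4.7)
The plan is to exploit the block structure of $\mathcal{D}_{\oplus\ell}$ by decomposing $R$ according to which of the $n$ blocks of $\ell$ coordinates it intersects, and then leveraging the conditional independence of the blocks given $\bx \sim \mathcal{D}$. Concretely, for each $i \in [n]$, let $R_i \subseteq \{(i-1)\ell+1,\ldots,i\ell\}$ be the restriction of $R$ to the $i$th block and $r^{(i)}$ the corresponding bits of $r$, so $|R| = \sum_i |R_i|$. By the two-step sampling definition, conditional on $\bx$ the blocks $\by^{(i)}$ are mutually independent, each uniform over the $2^{\ell-1}$ strings in $\F_2^\ell$ whose bits XOR to $\bx_i$.

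The next step is a case analysis per block. If $|R_i| < \ell$, there is at least one free coordinate outside $R_i$, so the parity constraint $\oplus \by^{(i)} = \bx_i$ can always be satisfied no matter what $r^{(i)}$ is; counting extensions gives exactly $2^{\ell - |R_i| - 1}$ consistent strings out of $2^{\ell-1}$, yielding $\Pr[\by^{(i)}_{R_i} = r^{(i)} \mid \bx] = 2^{-|R_i|}$ independently of $\bx_i$. If $|R_i| = \ell$, then $r^{(i)}$ forces the whole block, so the conditional probability equals $2^{-(\ell-1)} \cdot \mathbf{1}[\oplus r^{(i)} = \bx_i]$.

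Multiplying over blocks and taking the expectation over $\bx \sim \mathcal{D}$,
\[
\Prx_{\by \sim \mathcal{D}_{\oplus\ell}}[\by_R = r] \;=\; \Big(\!\!\prod_{i:|R_i|<\ell}\!\! 2^{-|R_i|}\Big)\Big(\!\!\prod_{i:|R_i|=\ell}\!\! 2^{-(\ell-1)}\Big)\cdot \Prx_{\bx\sim\mathcal{D}}\!\big[\bx_i = \oplus r^{(i)} \text{ for all } i \text{ with } |R_i| = \ell\big].
\]
Bounding the final probability by $1$ and writing $S_1 = \sum_{|R_i| < \ell}|R_i|$, $S_2 = \sum_{|R_i|=\ell}|R_i|$, the bound becomes $2^{-S_1 - S_2(1-1/\ell)}$. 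Since $1 \geq 1 - 1/\ell$, this is at most $2^{-(S_1+S_2)(1-1/\ell)} = 2^{-|R|(1-1/\ell)}$, which is exactly the claim.

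The only non-trivial point is how to handle full blocks ($|R_i|=\ell$), since there the conditional probability genuinely depends on $\bx_i$ and we cannot get independence from $\bx$ for free. The main obstacle is really just being careful that we crudely upper-bound the joint probability of the parity constraints on $\bx$ by $1$ rather than trying to analyze it under $\mathcal{D}$ (which is arbitrary and could be adversarial). The loss from doing so is precisely the $1/\ell$ factor in the exponent, which the statement permits; partial blocks contribute a tighter $2^{-|R_i|}$, which we discard by weakening to $2^{-|R_i|(1-1/\ell)}$.
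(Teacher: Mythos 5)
Your proof is correct and follows essentially the same route as the paper's: decompose $R$ into its intersections with the $n$ blocks, use conditional independence of the blocks given $\bx$, and bound each block's contribution by $2^{-|R_i|(1-1/\ell)}$ via the same two-case analysis (partial block exactly $2^{-|R_i|}$, full block $2^{-(\ell-1)}$ when consistent). Your treatment is if anything slightly more explicit than the paper's, since you carry the indicator $\mathbf{1}[\oplus r^{(i)}=\bx_i]$ for full blocks and bound its probability under $\mathcal{D}$ by $1$, whereas the paper simply restricts attention to $x$ for which the block probability is nonzero.
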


\begin{proof}
    For $i\in [n]$, let $R^{(i)}\sse [\ell]$ denote the $i$th block of $R$, that is the subset of coordinates of the $i$th block restricted by $R$. Let $r^{(i)}$ denote the corresponding substring of $r$ so that $r=(r^{(1)},\ldots,r^{(n)})$. We observe that for all $x\in \F_2^n$ for which $\Prx_{\by\sim \mathcal{D}_{\oplus \ell}(x)}[\by_{R^{(i)}}=r^{(i)}]$ is nonzero:
    $$
    \Prx_{\by\sim \mathcal{D}_{\oplus \ell}(x)}[\by_{R^{(i)}}=r^{(i)}]=\begin{cases}
        2^{-|R^{(i)}|} & |R^{(i)}|<\ell\\
        2^{-|R^{(i)}|+1} & |R^{(i)}|=\ell
    \end{cases}.
    $$
    If $|R^{(i)}|<\ell$, then the probability $\by_{R^{(i)}}=r^{(i)}$ is exactly $2^{-|R^{(i)}|}$: any subset of $\ell-1$ coordinates of the $i$th block of $\by$ is distributed uniformly at random. In the other case, $R^{(i)}$ consists of the entire $i$th block, in which case $\ell-1$ bits are distributed uniformly at random while the last bit is set according to $x$. In either case, we can write $\Prx_{\by\sim \mathcal{D}_{\oplus \ell}(x)}[\by_{R^{(i)}}=r^{(i)}]\le 2^{-|R^{(i)}|+|R^{(i)}|/\ell}$. Finally, we have
    \begin{align*}
        \Prx_{\by\sim \mathcal{D}_{\oplus \ell}}[\by_R=r]&=\Ex_{\bx\sim\mathcal{D}}\bigg[\Prx_{\by\sim \mathcal{D}_{\oplus \ell}(\bx)}[\by_R=r]\bigg]\\
        &=\Ex_{\bx\sim\mathcal{D}}\bigg[\prod_{i\in [n]}\Prx_{\by\sim \mathcal{D}_{\oplus \ell}(\bx)}[\by_{R^{(i)}}=r^{(i)}]\bigg]\tag{Independence of the blocks of $\by$ conditioned on $\bx$}\\
        &\le\prod_{i\in [n]} 2^{-|R^{(i)}|+|R^{(i)}|/\ell}\\
        &=2^{-|R|(1-1/\ell)}\tag{Definition of $R^{(i)}$}
    \end{align*}
    which completes the proof. 
\end{proof}

\subsubsection{A simple lemma about parity substitution}
For the next lemma, we switch to viewing a Boolean function as a mapping $g:\F_2^n\to\bits$.

\begin{lemma}
\label{lem:new parity amplification}
Let $g : \F_2^n \to \bits$ and $\mathcal{D}$ be a distribution over $\F_2^n$. Consider $g_{\oplus \ell} : (\F_2^\ell)^n\to\bits$ and $\mathcal{D}_{\oplus \ell}$.  We say that $S \sse [\ell n]$ is {\sl block-complete} if there is a set $S^\star\sse [n]$ such that $S$ contains all the coordinates in the blocks specified by $S^\star$ and no more. (This in particular implies that $|S^\star| = |S|/\ell$.) Then 
\[ \Prx_{\by\sim\mathcal{D}_{\oplus \ell}}[g_{\oplus\ell}(\by) = \chi_S(\by)] = 
\begin{cases}
\ds \Prx_{\bx\sim\mathcal{D}}[g(\bx) = \chi_{S^\star}(\bx)] & \text{if $S$ is block-complete} \\
\frac1{2} & \text{otherwise}.
\end{cases}
\]
\end{lemma}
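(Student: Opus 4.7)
The plan is to analyze the two cases separately, switching to the $\pm 1$ convention for Boolean values (so that parities multiply). Under this convention, for any $y \in (\F_2^\ell)^n$, the identity $\oplus y^{(i)} = \bx_i$ used in the definition of $\mathcal{D}_{\oplus \ell}$ becomes $\chi_{[\ell]}(\by^{(i)}) = \bx_i$ (with $\F_2 \to \{\pm 1\}$ in the usual way), and $g_{\oplus \ell}(\by) = g(\BlockwisePar(\by)) = g(\bx)$ deterministically once $\bx$ is fixed.

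\textbf{Block-complete case.} If $S$ is block-complete with witness $S^\star$, I would write
\[
\chi_S(\by) \;=\; \prod_{i \in S^\star} \chi_{[\ell]}(\by^{(i)}) \;=\; \prod_{i \in S^\star} \bx_i \;=\; \chi_{S^\star}(\bx),
\]
using the definition of $\mathcal{D}_{\oplus \ell}$. Combined with $g_{\oplus \ell}(\by) = g(\bx)$, the event $\{g_{\oplus \ell}(\by) = \chi_S(\by)\}$ is identical to the event $\{g(\bx) = \chi_{S^\star}(\bx)\}$ under the coupling, giving the stated probability.

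\textbf{Non-block-complete case.} Otherwise there is an index $i^\star \in [n]$ whose block contains a ``partial'' coordinate set $R \subseteq [\ell]$ with $1 \le |R| \le \ell - 1$. I would factor
\[
\chi_S(\by) \;=\; \chi_{S \setminus R}(\by) \cdot \chi_R(\by^{(i^\star)}),
\]
and condition on $\bx$ together with all blocks $\by^{(j)}$ for $j \ne i^\star$. Under this conditioning, $g_{\oplus \ell}(\by) = g(\bx)$ and $\chi_{S \setminus R}(\by)$ are both fixed, so the event $\{g_{\oplus \ell}(\by) = \chi_S(\by)\}$ reduces to $\chi_R(\by^{(i^\star)})$ taking a specific $\pm 1$ value. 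The conditional distribution of $\by^{(i^\star)}$ is uniform on the affine subspace $A_{\bx_{i^\star}} = \{z \in \F_2^\ell : \oplus z = \bx_{i^\star}\}$, which is a coset of $V = \ker(\chi_{[\ell]})$.

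The heart of the argument, and the only place that requires a moment of thought, is showing that $\chi_R$ is balanced on $A_{\bx_{i^\star}}$. Since $1 \le |R| \le \ell - 1$, the character $\chi_R$ is neither trivial nor equal to $\chi_{[\ell]}$, so $\chi_R \notin V^\perp$; hence $\chi_R$ is non-constant on $V$ and therefore balanced on every coset of $V$. This yields conditional probability exactly $\tfrac{1}{2}$, and averaging over the conditioning gives the claimed $\tfrac{1}{2}$. The two cases together establish the lemma.
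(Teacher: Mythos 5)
Your proof is correct and follows essentially the same route as the paper's: the block-complete case is handled by unpacking the definitions, and the non-block-complete case isolates a partially-intersected block and shows that, conditioned on $\bx$ and the remaining blocks, the parity over that partial block is a uniformly random bit. The only cosmetic difference is that you justify this balancedness via the character/coset argument ($\chi_R \notin V^\perp$ for $V = \ker(\chi_{[\ell]})$), whereas the paper observes directly that any proper subset of a block's coordinates is uniformly distributed under $\mathcal{D}_{\oplus\ell}(\bx)$ and phrases the conclusion as a vanishing correlation; these are the same fact.
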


\begin{proof}
 
First, suppose $S$ is block-complete. Then, the lemma follows simply by unpacking the definitions of $D_{\oplus \ell}$ and $g_{\oplus \ell}$. We will therefore assume that $S$ is not block-complete. 

Let $S^{(i)}$ be the intersection of $S$ and the $i$th block. Note that $S=\cup_{i=1}^n S^{(i)}$. For $i\in [n]$ and $x\in\F_2^n$, let  $\mathcal{D}_{\oplus\ell}^{(i)}(x)$ denote the distribution of $\by^{(i)}$ when $\by\sim\mathcal{D}_{\oplus\ell}(x)$. We make the following key observation: if there is an $i^*\in [n]$ such that $|S^{(i^*)}|<\ell$, then for every fixed $x$, $$\Ex_{\by^{(i^*)} \sim \mathcal{D}^{(i^*)}_{\oplus\ell}(x)}[ \chi_{S^{(i^*)}}(\by^{(i^*)})]=0.$$ This follows from the fact that the subset of $\by^{(i^*)}$ with indices in $S^{(i^*)}$ is a uniform random string, so its parity will be a uniform random bit. Note that such an $i^*$ exists if and only if $S$ is not block-complete. 
We will now show that $g_{\oplus \ell}(\by)$ and $ \chi_S(\by)$ have 0 correlation:
\begin{align*}
     \Ex_{\by \sim \mathcal{D}_{\oplus\ell}}[g_{\oplus \ell}(\by) \chi_S(\by)] &= \Ex_{\bx \sim \mathcal{D}}\bigg[\Ex_{\by \sim \mathcal{D}_{\oplus\ell}(\bx)}[g_{\oplus \ell}(\by) \chi_S(\by)]\bigg]\tag{Definition of $\mathcal{D}_{\oplus \ell}$}\\
     &=\Ex_{\bx \sim \mathcal{D}}\left[g(\bx) \Ex_{\by \sim \mathcal{D}_{\oplus\ell}(\bx)}[\chi_S(\by)]\right]\tag{Definition of $g_{\oplus \ell}$}\\
     &=\Ex_{\bx \sim \mathcal{D}}\left[g(\bx) \Ex_{\by \sim \mathcal{D}_{\oplus\ell}(\bx) }\left[\prod_{i=1}^n \chi_{S^{(i)}}(\by^{(i)})\right]\right]\tag{Definition of $S^{(i)}$}\\
      &=\Ex_{\bx \sim \mathcal{D}}\left[g(\bx) \prod_{i=1}^n \Ex_{ \by^{(i)} \sim \mathcal{D}^{(i)}_{\oplus\ell}(\bx)  }[\chi_{S^{(i)}}(\by^{(i)})]\right] \tag{Independence of $\by^{(i)}$ conditioned on $\bx$}\\
      &=0.\tag{Assumption that $S$ is not block-complete}
\end{align*}
The last equality follows from our key observation because $S$ is not block-complete, there is some $i^*\in [n]$ such that $|S^{(i^*)}|<\ell$. This shows that $\Prx_{\by\sim\mathcal{D}_{\oplus \ell}}[g_{\oplus\ell}(\by) = \chi_S(\by)]=\tfrac{1}{2}$ as desired.
\end{proof}

\begin{corollary}
\label{cor:parity-substitution}
     If $g:\F_2^n\to\bits$ is $\frac1{2}$-far under $\mathcal{D}$ from all $k'$-parities, then for all $\ell\ge 1$, $g_{\oplus\ell}:(\F_2^\ell)^n\to\bits$ is $\frac1{2}$-far under $\mathcal{D}_{\oplus \ell}$ from every function of Fourier degree $\ell k'$.
\end{corollary}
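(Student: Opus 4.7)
The plan is to combine \Cref{lem:new parity amplification} and \Cref{lem:parity-to-degree} with essentially no new work required. First, I would invoke \Cref{lem:parity-to-degree} applied to the function $g_{\oplus\ell}$ under the distribution $\mathcal{D}_{\oplus\ell}$ with sparsity parameter $\ell k'$. This reduces the corollary to the statement that $g_{\oplus\ell}$ is $\frac{1}{2}$-far under $\mathcal{D}_{\oplus\ell}$ from every $(\ell k')$-parity $\chi_S$ with $S \subseteq [\ell n]$ and $|S| \le \ell k'$.

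Next, I would fix such an $S$ and case split on whether $S$ is block-complete in the sense of \Cref{lem:new parity amplification}. If $S$ is not block-complete, then that lemma immediately gives $\Prx_{\by\sim\mathcal{D}_{\oplus\ell}}[g_{\oplus\ell}(\by)=\chi_S(\by)]=\frac{1}{2}$, so $\dist_{\mathcal{D}_{\oplus\ell}}(g_{\oplus\ell},\chi_S)=\frac{1}{2}$ for free. If instead $S$ is block-complete with associated $S^\star \subseteq [n]$, then $|S^\star| = |S|/\ell \le k'$, so $\chi_{S^\star}$ is a $k'$-parity, and the hypothesis of the corollary gives $\dist_{\mathcal{D}}(g,\chi_{S^\star}) = \frac{1}{2}$. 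The block-complete branch of \Cref{lem:new parity amplification} then transports this equality to $\dist_{\mathcal{D}_{\oplus\ell}}(g_{\oplus\ell},\chi_S)=\frac{1}{2}$ as well.

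Having established the per-parity bound for all $|S|\le \ell k'$, I would conclude by a second application of \Cref{lem:parity-to-degree} to upgrade from parities to arbitrary functions of Fourier degree at most $\ell k'$. I do not expect a genuine obstacle: all of the amplification is already packaged inside \Cref{lem:new parity amplification}, which cleanly separates the non-block-complete $S$ (which are automatically $\frac{1}{2}$-far because a parity over an incomplete block averages to $0$ conditionally on $\bx$) from the block-complete $S$ (which inherit the $\frac{1}{2}$-farness of $g$ after contracting each block). The one bookkeeping point worth double-checking is that the sparsity bound $|S|\le \ell k'$ on the lifted side matches $|S^\star|\le k'$ on the contracted side, which is immediate from $|S^\star| = |S|/\ell$ in the block-complete case.
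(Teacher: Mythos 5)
Your proposal is correct and follows the paper's own proof essentially verbatim: the paper likewise uses \Cref{lem:new parity amplification} with the block-complete/non-block-complete dichotomy to show $g_{\oplus\ell}$ is $\tfrac{1}{2}$-far from all $\ell k'$-parities, and then invokes \Cref{lem:parity-to-degree} once to pass to all degree-$\ell k'$ functions. (Your "two applications" of \Cref{lem:parity-to-degree} are really the single final application, but the logic is the same.)
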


\begin{proof}
    We observe that $g_{\oplus\ell}:(\F_2^\ell)^n\to\bits$ is $\frac1{2}$-far under $\mathcal{D}_{\oplus \ell}$ from $\ell k'$-parities. This is because, by \Cref{lem:new parity amplification}, for every $\ell k'$ parity $\chi_S$ there is a set $S^\star$ of size $\le k'$ such that: 
    \[ \Prx_{\by\sim\mathcal{D}_{\oplus \ell}}[g_{\oplus\ell}(\by) = \chi_S(\by)] = 
    \begin{cases}
    \ds \Prx_{\bx\sim\mathcal{D}}[g(\bx) = \chi_{S^\star}(\bx)]=\tfrac{1}{2} & \text{if $S$ is block-complete} \\
    \frac1{2} & \text{otherwise}
    \end{cases}
    \]
    where we used the assumption that $g$ is $1/2$-far under $\mathcal{D}$ from all $k'$-parities. The corollary then follows directly from \Cref{lem:parity-to-degree}.
\end{proof}

\subsubsection{Proof of~\Cref{lem:deg-to-size lifting}}

We now prove the main lemma showing that parity substitution lifts decision tree depth lower bounds to size lower bounds. 

\begin{lemma}[Generalization of~\Cref{lem:deg-to-size lifting}]
\label{lem:parity-substitution-technical}
    Let $\mathcal{D}$ be a distribution over $\F_2^n$ and $g:\F_2^n\to\F_2$. For every $\ell\ge 2$, the distribution $\mathcal{D}_{\oplus \ell}$ and the function $g_{\oplus \ell}:(\F_2^\ell)^n\to \F_2$ satisfy the following:
    \begin{enumerate}
        \item If $g$ is a $k$-parity under $\mathcal{D}$, then $g_{\oplus \ell}$ is a $k\ell$-parity under $\mathcal{D}_{\oplus \ell}$
        \item If $g$ is $\tfrac{1}{2}$-far under $\mathcal{D}$ from every degree-$k'$ polynomial, then $g_{\oplus \ell}$ is $(\frac{1}{2}-2^{-\ell k'/6})$-far under $\mathcal{D}_{\oplus \ell}$ from every decision tree of size $2^{\ell k'/3}$.
    \end{enumerate}
\end{lemma}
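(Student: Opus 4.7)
I will handle the two parts separately. Part 1 reduces to unpacking the definition of the parity substitution, while Part 2 is the main content and proceeds by a truncation argument that pits Proposition~\ref{prop:uniform-like} (uniform-likeness of $\mathcal{D}_{\oplus \ell}$) against Corollary~\ref{cor:parity-substitution} (farness of $g_{\oplus \ell}$ from low-degree functions).

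\medskip

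For Part 1, suppose $g$ is $0$-close under $\mathcal{D}$ to some $k$-parity $\chi_S$ with $|S|\le k$. By direct computation,
\[
(\chi_S)_{\oplus\ell}(y) \;=\; \chi_S(\BlockwisePar(y)) \;=\; \bigoplus_{i\in S}\oplus y^{(i)} \;=\; \chi_{S'}(y),
\]
where $S' \sse [n\ell]$ is the union of the $|S|$ blocks indexed by $S$, so $|S'| = \ell|S| \le k\ell$. Since $\BlockwisePar(\by)\sim \mathcal{D}$ when $\by\sim \mathcal{D}_{\oplus\ell}$, the $0$-closeness of $g$ to $\chi_S$ under $\mathcal{D}$ lifts verbatim to the $0$-closeness of $g_{\oplus\ell}$ to $\chi_{S'}$ under $\mathcal{D}_{\oplus\ell}$, establishing that $g_{\oplus\ell}$ is a $k\ell$-parity under $\mathcal{D}_{\oplus\ell}$.

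\medskip

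For Part 2, fix any decision tree $T$ of size $s := 2^{\ell k'/3}$ and let $T_{\mathrm{trunc}}$ be its truncation at depth $d := \ell k'$ (replacing the subtree rooted at each depth-$d$ internal node with a constant leaf). The plan combines three ingredients. \textbf{(i)}~The truncation loss: $T$ and $T_{\mathrm{trunc}}$ can only disagree on inputs that reach one of the at most $s$ internal nodes at depth~$d$. Reaching a fixed such node amounts to satisfying a restriction on $d$ specific coordinates of $\by$, so Proposition~\ref{prop:uniform-like} bounds the probability of this event under $\mathcal{D}_{\oplus\ell}$ by $2^{-d(1-1/\ell)}$, and a union bound gives
\[
\Prx_{\by\sim \mathcal{D}_{\oplus\ell}}\!\big[T(\by)\neq T_{\mathrm{trunc}}(\by)\big] \;\le\; s\cdot 2^{-d(1-1/\ell)}.
\]
\textbf{(ii)}~As a depth-$d$ decision tree, $T_{\mathrm{trunc}}$ has Fourier degree at most $d=\ell k'$. \textbf{(iii)}~Since every $k'$-parity is a degree-$k'$ polynomial, the hypothesis implies $g$ is $\tfrac{1}{2}$-far under $\mathcal{D}$ from all $k'$-parities, so Corollary~\ref{cor:parity-substitution} yields that $g_{\oplus\ell}$ is $\tfrac{1}{2}$-far under $\mathcal{D}_{\oplus\ell}$ from every function of Fourier degree $\ell k'$---in particular from $T_{\mathrm{trunc}}$.

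\medskip

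Combining the three pieces via the triangle inequality,
\[
\dist_{\mathcal{D}_{\oplus\ell}}(g_{\oplus\ell}, T) \;\ge\; \dist_{\mathcal{D}_{\oplus\ell}}(g_{\oplus\ell}, T_{\mathrm{trunc}}) \;-\; s\cdot 2^{-d(1-1/\ell)} \;\ge\; \tfrac{1}{2} - s\cdot 2^{-d(1-1/\ell)}.
\]
Substituting $s = 2^{\ell k'/3}$ and $d = \ell k'$, the subtracted term becomes $2^{k'(1-2\ell/3)}$, which is bounded above by $2^{-\ell k'/6}$ iff $1-2\ell/3 \le -\ell/6$, i.e.\ iff $\ell\ge 2$; this is exactly the regime of the lemma, and $\ell=2$ is the tight case.

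\medskip

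The main obstacle is the parameter-balancing in the last display: the depth $d=\ell k'$ is forced so that $T_{\mathrm{trunc}}$ fits inside the Fourier-degree bound supplied by Corollary~\ref{cor:parity-substitution}, and the size ceiling $s=2^{\ell k'/3}$ is precisely the largest size that Proposition~\ref{prop:uniform-like} can absorb while leaving the stated $2^{-\ell k'/6}$ slack. Everything else is bookkeeping; no new ideas beyond the three ingredients above are needed.
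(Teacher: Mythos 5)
Your proof is correct and follows essentially the same route as the paper's: truncate the tree at depth $\ell k'$, bound the truncation loss via the uniform-likeness of $\mathcal{D}_{\oplus\ell}$ (\Cref{prop:uniform-like}, which the paper packages as \Cref{claim:pruning-depth}), and then invoke \Cref{cor:parity-substitution} to conclude that the resulting degree-$\ell k'$ function cannot approximate $g_{\oplus\ell}$. The only difference is that you argue directly via the triangle inequality where the paper argues by contradiction; the parameter bookkeeping (including the $\ell\ge 2$ threshold) matches exactly.
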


The proof of \Cref{lem:parity-substitution-technical} uses the following claim.

\begin{claim}[Pruning the depth of a decision tree]
    \label{claim:pruning-depth}
    Let $T$ be a size-$s$ decision tree and $c\in\N$ a parameter. Let $T'$ be the decision tree obtained from $T$ by pruning each path at depth $c\log (s)$. Then, for all $\ell\ge 1$, $\dist_{\mathcal{D}_{\oplus \ell}}(T',g_{\oplus \ell})\le\dist_{\mathcal{D}_{\oplus \ell}}(T,g_{\oplus \ell})+s^{1-c(1-1/\ell)}$. 
\end{claim}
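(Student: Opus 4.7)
The plan is to bound $\dist_{\mathcal{D}_{\oplus \ell}}(T,T')$ and then invoke the triangle inequality with $g_{\oplus \ell}$. The key observation is that $T$ and $T'$ agree on every input $\by$ whose root-to-leaf path in $T$ has length at most $c\log s$, so they only disagree on inputs that traverse a path in $T$ of length strictly greater than $c\log s$. Hence it suffices to show
\[ \Prx_{\by \sim \mathcal{D}_{\oplus \ell}}\bigl[\by \text{ reaches depth } > c\log s \text{ in } T\bigr] \;\le\; s^{1 - c(1-1/\ell)}. \]

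To prove this bound, I would enumerate the leaves of $T$ at depth exactly $\lceil c\log s\rceil$ in the pruned tree (equivalently, the internal nodes of $T$ at that depth from which the pruning is performed). There are at most $s$ such nodes because $T$ has at most $s$ leaves in total, and every node at depth $c\log s$ either is a leaf of $T$ or has at least one descendant leaf in $T$, so the total number of such ``cut points'' is at most $s$. Each such cut point $v$ corresponds to a restriction $\rho_v$ on exactly $\lceil c\log s\rceil$ coordinates of $[\ell n]$ (those queried along the root-to-$v$ path), and $\by$ reaches $v$ if and only if $\by$ is consistent with $\rho_v$.

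Now I apply the ``uniform-like'' property of $\mathcal{D}_{\oplus\ell}$ from \Cref{prop:uniform-like}: for any restriction on $|R|$ coordinates,
\[ \Prx_{\by\sim\mathcal{D}_{\oplus \ell}}[\by_R = r] \;\le\; 2^{-|R|(1-1/\ell)}. \]
Applied with $|R| = \lceil c\log s\rceil \ge c\log s$, each cut point is reached with probability at most $2^{-c\log(s)(1-1/\ell)} = s^{-c(1-1/\ell)}$. A union bound over the at-most-$s$ cut points yields
\[ \Prx_{\by \sim \mathcal{D}_{\oplus \ell}}\bigl[\by \text{ reaches a pruned subtree}\bigr] \;\le\; s \cdot s^{-c(1-1/\ell)} \;=\; s^{1-c(1-1/\ell)}, \]
which implies $\dist_{\mathcal{D}_{\oplus \ell}}(T,T') \le s^{1-c(1-1/\ell)}$.

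Finally, by the triangle inequality for Hamming distance under $\mathcal{D}_{\oplus\ell}$,
\[ \dist_{\mathcal{D}_{\oplus \ell}}(T',g_{\oplus \ell}) \;\le\; \dist_{\mathcal{D}_{\oplus \ell}}(T',T) + \dist_{\mathcal{D}_{\oplus \ell}}(T,g_{\oplus \ell}) \;\le\; \dist_{\mathcal{D}_{\oplus \ell}}(T,g_{\oplus \ell}) + s^{1-c(1-1/\ell)}, \]
which is the desired bound. There is no serious obstacle here; the only point requiring a touch of care is counting cut points correctly (bounded by the number of leaves of $T$, hence by $s$) and ensuring that \Cref{prop:uniform-like} applies to the restriction induced by a root-to-node path, which it does since that path fixes an arbitrary subset of coordinates to fixed values.
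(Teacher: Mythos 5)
Your proof is correct and follows essentially the same argument as the paper: both bound the probability that $\by\sim\mathcal{D}_{\oplus\ell}$ reaches a pruned path by a union bound over the at most $s$ pruned paths (your ``cut points''), apply \Cref{prop:uniform-like} to the length-$c\log s$ restriction each such path induces, and conclude with a triangle-inequality step. No gaps.
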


\begin{proof}
    Let $\Pi$ denote the set of paths in $T$ which have been pruned. The size of $\Pi$ is at most $s$. First, we bound the probability that a random input follows a path in $\Pi$:
    \begin{align*}
        \Prx_{\by \sim \mathcal{D}_{\oplus \ell}}[\by\text{ follows a path in }\Pi]&\le \sum_{\pi \in \Pi}\Prx_{\by \sim \mathcal{D}_{\oplus \ell}}[\by\text{ follows }\pi]\tag{Union bound}\\
        &\le \sum_{\pi\in \Pi}2^{-c\log(s)(1-1/\ell)}\tag{\Cref{prop:uniform-like} and $|\pi|\ge c\log(s)$}\\
        &\le s^{1-c(1-1/\ell)}\tag{$|\Pi|\le s$}.
    \end{align*}
    Therefore:
    \begin{align*}
        \dist_{\mathcal{D}_{\oplus \ell}}(T',g_{\oplus \ell})&\le\dist_{\mathcal{D}_{\oplus \ell}}(T,g_{\oplus \ell})+\Prx_{\by \sim \mathcal{D}_{\oplus \ell}}[\by\text{ follows a path in }\Pi]\tag{Union bound}\\
        &\le \dist_{\mathcal{D}_{\oplus \ell}}(T,g_{\oplus \ell})+s^{1-c(1-1/\ell)}
    \end{align*}
    which completes the proof.
\end{proof}

\begin{proof}[Proof of \Cref{lem:parity-substitution-technical}]
    We prove each point separately.\bigskip

    \noindent
    1. Let $S\sse [n]$ denote the $k$ indices of the parity consistent with $g$ under $\mathcal{D}$. Then,
        \begin{align*}
            g_{\oplus \ell}(y)=g(\mathrm{BlockwisePar}(y))=\bigoplus_{i\in S}\oplus y^{(i)}
        \end{align*}
        is a $k\ell$-parity under $\mathcal{D}_{\oplus \ell}$.\bigskip

    \noindent
    2. We prove this statement by contradiction. Let $T$ be a decision tree of size $2^{\ell k'/3}$ achieving small error: $\dist_{\mathcal{D}_{\oplus \ell}}(T,g_{\oplus\ell})<\frac{1}{2}-2^{-\ell k'/6}$. Let $T'$ be the decision tree obtained by pruning each path of $T$ at depth $\ell k'$. Then,
        \begin{align*}
            \dist_{\mathcal{D}_{\oplus \ell}}(T',g_{\oplus \ell})&\le \dist_{\mathcal{D}_{\oplus\ell}}(T,g_{\oplus\ell})+(2^{\ell k'/3})^{1-3(1-1/\ell)}\tag{\Cref{claim:pruning-depth}}\\
            &\le\dist_{\mathcal{D}_{\oplus\ell}}(T,g_{\oplus\ell})+2^{-\ell k'/6}\tag{$\ell\ge 2$}\\
            &<\lfrac1{2}.\tag{$\dist_{\mathcal{D}_{\oplus \ell}}(T,g_{\oplus\ell})< \frac{1}{2}-2^{-\ell k'/6}$}
        \end{align*}
        Since $T'$ is a decision tree of depth $\ell k'$, it is a polynomial of degree $\ell k'$. However, since $g$ is $\frac{1}{2}$-far from polynomials of degree $k'$, we know that $g_{\oplus \ell}$ is $\frac{1}{2}$-far from polynomials of degree $\ell k'$ by \Cref{cor:parity-substitution}. Therefore, we have reached a contradiction and conclude that $g_{\oplus \ell}$ must be $(\frac{1}{2}-2^{-\ell k'/6})$-far from decision trees of size $2^{\ell k'/3}$. 
\end{proof}

\subsection{Putting things together: Proof of \Cref{thm:general main reduction}}
Let $(H,t)\in\F_2^{m\times n}\times\F_2^m$ be an instance of decisional $\alpha$-approximate $k$-NCP where $H$ is the parity check matrix for the code $\mathcal{C}$. Let $D=\{x^{(1)},\ldots,x^{(m)}\}\sse \F_2^n$ be the set corresponding to the rows of the parity check matrix $H$ and $f:D\to\F_2$ be the function labeling the set according to $t$, $f(x^{(i)})=t_i$. Let $\mathcal{D}$ be the distribution $\mathrm{Unif}(\mathrm{Span}(D))_{\oplus \ell}$. That is, $\mathcal{D}$ is the distribution obtained by substituting a parity of size $\ell$ into $\mathrm{Unif}(\mathrm{Span}(D))$. Let $f^{\mathrm{ext}}:\mathrm{Span}(D)\to\F_2$ be the linear extension of $f$ to $\mathrm{Span}(D)$. We prove the theorem for the function $(f^{\mathrm{ext}})_{\oplus \ell}$. We split into cases. 

\paragraph{Yes case: there is a $k$-sparse vector $x$ such that $Hx=t$.}{
    We obtain the desired result from the following chain of observations
    \begin{enumerate}[itemsep=-0ex]
        \item $f:D\to \F_2$ is a $k$-parity (assumption of the $\textsc{Yes}$ case and \Cref{def:parity-consistent-view})
        \item ...which implies $f^{\mathrm{ext}}$ is a $k$-parity under $\mathrm{Unif}(\mathrm{Span}(D))$ (\Cref{lem:boosting})
        \item ...which implies $(f^{\mathrm{ext}})_{\oplus \ell}$ is a $\ell k$-parity under $\mathrm{Unif}(\mathrm{Span}(D))_{\oplus \ell}$ (\Cref{lem:parity-substitution-technical}).
    \end{enumerate}
}

\paragraph{No case: $Hx\neq t$ for all vectors $x$ of sparsity at most $\alpha k$.}{
    In this case, we make the following observations
    \begin{enumerate}[itemsep=-0ex]
        \item $f:D\to \F_2$ is disagrees with every $\alpha k$-parity on some $x\in D$ (assumption of the $\textsc{No}$ case and \Cref{def:parity-consistent-view})
        \item ...which implies that $\dist_{\mathrm{Unif}(\mathrm{Span}(D))}(f^{\mathrm{ext}},\chi_S)=\frac{1}{2}$ with every $\alpha k$-parity (\Cref{lem:boosting})
        \item ...which implies $(f^{\mathrm{ext}})_{\oplus\ell}$ is $\frac{1}{2}$-far from every function of Fourier degree at most $\ell \alpha k$ under $\mathrm{Unif}(\mathrm{Span}(D))_{\oplus \ell}$ (\Cref{cor:parity-substitution})
        \item ...which implies $(f^{\mathrm{ext}})_{\oplus \ell}$ is $(1/2-2^{-\Omega(\alpha\ell k)})$-far under $\mathrm{Unif}(\mathrm{Span}(D))_{\oplus \ell}$ from every decision tree of size $2^{O(\alpha \ell k)}$. (\Cref{lem:parity-substitution-technical})
    \end{enumerate}
}
Finally, we remark that by \Cref{prop:sampling}, random samples from $\mathrm{Unif}(\mathrm{Span}(D))$ labeled by $f^{\mathrm{ext}}$ can be efficiently generated and therefore by \Cref{fact:sampling-d-parity-ell}, so can random samples from $\mathrm{Unif}(\mathrm{Span}(D))_{\oplus \ell}$ labeled by $(f^{\mathrm{ext}})_{\oplus \ell}$.\hfill $\qed$

\subsection{Proof of \Cref{thm:decision version of main result}}
Let $\mathcal{A}$ be the decision tree learning algorithm from the theorem statement.

\paragraph{The reduction.}{
        Let $(H,t)\in\F_2^{m\times n}\times\F_2^m$ be an instance of decisional $\alpha$-approximate $k$-NCP where $H$ is the parity check matrix for the code $\mathcal{C}$. Using \Cref{thm:general main reduction}, we obtain a function $g : (\F_2^{\ell})^n \to \F_2$ and a distribution $\mathcal{D}$ over $(\F_2^{\ell})^n$. We run the algorithm $\mathcal{A}$ on $g$ and $\mathcal{D}$ for $t(\ell n, 2^{\ell k}, 2^{O(\alpha\ell k)}, \eps)$ time steps for $\eps=\tfrac{1}{2}-2^{-\Omega(\alpha \ell k)}$. Let $T_{\mathrm{hyp}}$ be the decision tree learned by $\mathcal{A}$. We compute an estimate, $\overline{\varepsilon}$, of the quantity $\dist_{\mathcal{D}}(g,T_{\mathrm{hyp}})$ to accuracy $\pm 2^{-\Omega(\alpha \ell k)}$ using an additional $2^{O(\alpha \ell k)}$ samples from $\mathcal{D}$ labeled by $g$. We return ``Yes'' if $\overline{\varepsilon}\le \frac{1}{2}-2^{-\Omega(\alpha \ell k)}$ and $|T_{\mathrm{hyp}}|\le 2^{O(\alpha\ell k)}$, and ``No'' otherwise.\footnote{Concretely, the constants hidden by the big-O notation are the following. If $\beta=1/2-2^{-\Omega(\alpha k \ell)}$ is the the error in the No case of \Cref{thm:general main reduction}, we require the learner output a hypothesis with error $\eps=\tfrac{1}{2}-2^{-c\alpha \ell k}$ where $c$ is a constant chosen so that $\eps<\beta$. Then, we estimate $\dist_{\mathcal{D}}(g,T_{\mathrm{hyp}})$ to accuracy $\pm 2^{-C\alpha \ell k}$ where $C$ is a large enough constant such that $\eps+2^{-C\alpha \ell k}<\beta$. Finally, we ``Yes'' if and only if $\overline{\eps}\le \eps+2^{-C\alpha \ell k}$ and $|T_{\mathrm{hyp}}|\le 2^{O(\alpha\ell k)}$.}
    }

\paragraph{Runtime.}{
    Random samples from $\mathcal{D}$ labeled by $g$ can be obtained in $O(\ell n^2)$-time. We simulate $\mathcal{A}$ for $t(\ell n, 2^{\ell k}, 2^{O(\alpha\ell k)}, \eps)$ time steps and estimating $\overline{\varepsilon}$ takes time $\poly(n,\ell,2^{\alpha \ell k})$. So the overall runtime of the reduction is $O(\ell n^2)\cdot t(\ell n, 2^{\ell k}, 2^{O(\alpha\ell k)}, \eps)+\poly(n,\ell,2^{\alpha \ell k})$.
}

\paragraph{Correctness.}{
        To prove correctness, we show that if $(H,t)$ is a Yes instance of decisional $\alpha$-approximate $k$-NCP, then we output Yes with high probability, and otherwise if $(H,t)$ is a No instance then our algorithm outputs No with high probability.\bigskip

{\bf Yes case: there is a $k$ sparse vector $x$ such that $Hx=t$.} In this case, $g$ is a parity of at most $k\ell$ variables under $\mathcal{D}$ by \Cref{thm:general main reduction}. Therefore, $g$ is a decision tree of size $2^{k\ell}$ under $\mathcal{D}$. By running $\mathcal{A}$ for $t(\ell n, 2^{\ell k}, 2^{O(\alpha\ell k)}, \eps)$ time steps, we obtain a decision tree $T_{\mathrm{hyp}}$ of size $|T_{\mathrm{hyp}}|\le 2^{O(\alpha\ell k)}$ which satisfies
$$
\dist_{\mathcal{D}}(g,T_{\mathrm{hyp}})\le\eps=\frac{1}{2}-2^{-\Omega(\alpha \ell k)}
$$
and therefore our estimate $\overline{\varepsilon}$ satisfies
$$
\overline{\varepsilon}\le \dist_{\mathcal{D}}(g,T_{\mathrm{hyp}})+2^{-\Omega(\alpha \ell k)}\le \frac{1}{2}-2^{-\Omega(\alpha \ell k)}
$$
with high probability which ensures that our algorithm correctly outputs ``Yes.''\bigskip 
        
 {\bf No case: $Hx\neq t$ for all vectors $x$ of sparsity at most $\alpha k$.} 
First, if $T_{\mathrm{hyp}}$ does not satisfy $|T_{\mathrm{hyp}}|\le 2^{O(\alpha\ell k)}$ then our algorithm correctly outputs ``No''. Otherwise, assume that $|T_{\mathrm{hyp}}|\le 2^{O(\alpha\ell k)}$. We will show that $T_{\mathrm{hyp}}$ must have large error so that in this case our algorithm also correctly outputs ``No''. 

\Cref{thm:general main reduction} implies that $g$ is $\frac{1}{2}-2^{-\Omega(\alpha k\ell)}$ far under $\mathcal{D}$ from every decision tree of size $2^{O(\alpha k\ell)}$. This implies that $\dist_{\mathcal{D}}(g,T_{\mathrm{hyp}})>\frac{1}{2}-2^{-\Omega(\alpha \ell k)}$. Therefore, our estimate $\overline{\eps}$ satisfies
$$
\overline{\eps}\ge \dist_{\mathcal{D}}(g,T_{\mathrm{hyp}})-2^{-\Omega(\alpha \ell k)}>\frac{1}{2}-2^{-\Omega(\alpha \ell k)}
$$
with high probability. This ensures that our algorithm correctly outputs ``No''.}\hfill\qed
\section{{\sc DT-Learn} solves the search version of $k$-NCP: Proof of~\Cref{thm:most general}}

\begin{claim}[Solving the search version of $k$-NCP given a decision tree]
\label{claim:search}
    Let $(H,t)\in \F_2^{m\times n}\times\F_2^m$ be an instance of \textnormal{NCP} where $H$ is the parity check matrix for the linear code, and let $D=\{y^{(1)},\ldots,y^{(m)}\}$ be the set corresponding to the rows of the parity check matrix $H$. 
    
    Let $f:D\to\F_2$ be the function satisfying $f(y^{(i)})=t_i$ for $i\in [m]$, $\mathcal{D}$ be the distribution $\mathrm{Unif}(\mathrm{Span}({D}))_{\oplus \ell}$, and $T$ be a size-$s$ decision tree satisfying $\dist_{\mathcal{D}}(T,(f^{\mathrm{ext}})_{\oplus \ell})\le \frac{1}{2}-\gamma$ where $\gamma\ge\Omega(s^{1-c(1-1/\ell)})$ for some $c\in \N$. 
    
    There is an algorithm running in time $\poly(n,\ell,1/\gamma^2,s)$ which outputs with high probability a set of coordinates $S\sse [n]$ such that $|S|\le \tfrac{c\log s}{\ell}$ and $\chi_S(y)=f(y)$ for all $y\in D$.
\end{claim}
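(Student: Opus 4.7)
The plan is to realize the four-step strategy outlined in Section~\ref{sec:search} as an efficient algorithm that needs no sampling whatsoever: since $D$ and $t$ are given explicitly, candidate parities can be verified by direct substitution against $f$.

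First, I truncate $T$ at depth $d := c\log s$ via Claim~\ref{claim:pruning-depth}. The hypothesis $\gamma \ge \Omega(s^{1-c(1-1/\ell)})$ is exactly calibrated so that the truncated tree $T'$ retains strictly positive correlation with $g := (f^{\mathrm{ext}})_{\oplus\ell}$ under $\mathcal{D}$: viewing everything in the $\{\pm 1\}$ basis,
\[\Ex_{\by\sim\mathcal{D}}[T'(\by)\,g(\by)] \;\ge\; 2\gamma - 2s^{1-c(1-1/\ell)} \;>\; 0.\]
Second, I expand $T'$ in the standard uniform Fourier basis over $\F_2^{n\ell}$. Its Fourier support consists only of subsets of root-to-leaf paths, so every $S$ in the support has $|S|\le d$, and the support size is at most $s\cdot 2^d = \poly(s)$. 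Decomposing by linearity,
\[\Ex_{\by\sim\mathcal{D}}[T'(\by)g(\by)] \;=\; \sum_{S}\widehat{T'}(S)\,\Ex_{\by\sim\mathcal{D}}[\chi_S(\by)g(\by)].\]
Lemma~\ref{lem:new parity amplification} kills every non-block-complete term; for a block-complete $S$ with pull-back $S^\star\sse[n]$ the same lemma further gives $\Ex_{\by\sim\mathcal{D}}[\chi_S(\by)g(\by)] = \Ex_{\bx\sim\mathrm{Unif}(\mathrm{Span}(D))}[\chi_{S^\star}(\bx) f^{\mathrm{ext}}(\bx)]$, and the amplification argument from the proof of Lemma~\ref{lem:boosting} pins this last quantity to $\{0,+1\}$: it equals $+1$ precisely when $\chi_{S^\star}$ agrees with $f$ on every point of $D$ (so, by linearity, on all of $\mathrm{Span}(D)$), and $0$ otherwise. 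Since the total sum is positive, at least one block-complete $S$ in the Fourier support of $T'$ must sit in the ``$+1$'' case, and its pull-back $S^\star$ then satisfies $|S^\star|=|S|/\ell\le c\log s/\ell$ and agrees with $f$ on all of $D$---exactly the desired output.

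The algorithm therefore walks the root-to-leaf paths of $T'$ to enumerate the (at most $s\cdot 2^d=\poly(s)$) sets $S$ in its Fourier support, keeps only the block-complete ones, forms $S^\star$, and directly tests whether $\chi_{S^\star}(y^{(i)})=t_i$ for every $i\in[m]$, outputting the first $S^\star$ that passes. Total runtime is $\poly(n,\ell,s)$; the $1/\gamma^2$ slack allowed by the claim is unused along this route. The step I expect to require the most care in writing up is the Fourier decomposition: I need to be explicit that expanding $T'$ in the \emph{uniform} basis over $\F_2^{n\ell}$ behaves correctly when paired against the \emph{non-uniform} measure $\mathcal{D}$, so that Lemma~\ref{lem:new parity amplification} can be invoked term-by-term. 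The payoff is conceptually clean: Lemmas~\ref{lem:new parity amplification} and~\ref{lem:boosting} together ``unwind'' the gap-amplification chain of Section~\ref{sec:warmup} in reverse, letting me read off a certificate $S^\star$ from an otherwise opaque decision-tree hypothesis.
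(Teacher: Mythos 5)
Your proposal is correct, and it diverges from the paper's proof in one genuine way. The paper follows the same skeleton---prune to depth $c\log s$ via \Cref{claim:pruning-depth}, locate a correlated parity in the Fourier support of the truncated tree, then pull back through \Cref{lem:new parity amplification} and the contrapositive of \Cref{lem:boosting}---but its middle step (\Cref{lem:extraction}) is a generic, sampling-based procedure: it draws $2^{O(d)}/\gamma^2$ labeled examples, empirically estimates $\Ex_{\by\sim\mathcal{D}}[\chi_S(\by)g(\by)]$ for each candidate $S$, and keeps the best one, losing a $4^{-d}$ factor in correlation along the way. You instead observe that in this specific instance every candidate correlation is exactly computable---it is $0$ for non-block-complete $S$, and for block-complete $S$ it collapses to $\{0,+1\}$ according to whether $\chi_{S^\star}$ agrees with $f$ on all of $D$---so positivity of $\Ex[T'g]$ already certifies existence of a good candidate, and direct verification of $\chi_{S^\star}(y^{(i)})=t_i$ replaces estimation. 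This makes your algorithm deterministic (stronger than the "with high probability" in the claim), drops the $1/\gamma^2$ dependence, and avoids the quantitative bookkeeping in \Cref{eq:par-error}; what the paper's route buys is modularity, since \Cref{lem:extraction} works for any target $g$ accessible only through random examples. Two points to be explicit about in a write-up: the pointwise Fourier identity $T'(y)=\sum_S\widehat{T'}(S)\chi_S(y)$ holds for every $y$ and hence may be integrated against the non-uniform $\mathcal{D}$ (you already flag this), and your enumeration of subsets of root-to-leaf paths may include sets outside the true Fourier support, which is harmless because each candidate is verified directly against $(D,t)$.
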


Before proving the claim, we prove two helpful lemmas.

\begin{lemma}[Extracting a well-correlated parity from a decision tree]
\label{lem:extraction}
    Let $T$ be a depth-$d$ decision tree satisfying $\dist_{\mathcal{D}}(T,g)\le \frac{1}{2}-\gamma$ for some $\gamma>0$, distribution $\mathcal{D}$ over $\F_2^n$, and $g:\F_2^n\to\F_2$. Then, there is a $\poly(n,1/\gamma^2,2^d)$-time algorithm which uses $2^{O(d)}/\gamma^2$ random samples from $\mathcal{D}$ labeled by $g$ and with high probability outputs set of coordinates $S\sse [n]$ such that $|S|\le d$ and $\dist_{\mathcal{D}}(\chi_S,g)\le \frac{1}{2}-\Theta(\gamma 4^{-d})$.
\end{lemma}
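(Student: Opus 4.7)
My plan is to prove the lemma via elementary Fourier analysis of depth-$d$ decision trees. Identifying $\F_2$ with $\bits$ via $b\mapsto(-1)^b$, any depth-$d$ tree admits a Fourier expansion
\[
    T(x) \;=\; \sum_{S\,:\,|S|\le d} c_S\, \chi_S(x),
\]
obtained by expanding each leaf $\ell$'s path-indicator as $\prod_{i\in P_\ell}\tfrac{1+(-1)^{v_i^{\ell}}x_i}{2}$. Since each of the at most $2^d$ leaves contributes exactly unit $L_1$-mass to the coefficients, $\|c\|_1\le |T|\le 2^d$, and the Fourier support $\mathrm{supp}(\hat T) := \{S:c_S\ne 0\}$ has size at most $\sum_{\ell}2^{|P_\ell|}\le 4^d$. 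Both the support and the coefficients $c_S$ can therefore be computed explicitly in time $2^{O(d)}$.

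From $\dist_{\mathcal D}(T,g) \le \tfrac12-\gamma$ we obtain $\E_{\mathcal D}[Tg] \ge 2\gamma$, and expanding gives
\[
    2\gamma \;\le\; \sum_{S\in\mathrm{supp}(\hat T)} c_S\,\hat g_{\mathcal D}(S), \qquad \hat g_{\mathcal D}(S) := \E_{\mathcal D}[\chi_S(\bx)g(\bx)].
\]
Pigeonholing over the at most $4^d$ terms, some $S^\star\in\mathrm{supp}(\hat T)$ satisfies $c_{S^\star}\hat g_{\mathcal D}(S^\star)\ge 2\gamma/4^d$; since $|c_{S^\star}|\le 1$, this forces $|\hat g_{\mathcal D}(S^\star)|\ge \Omega(\gamma\cdot 4^{-d})$, with the sign of $\hat g_{\mathcal D}(S^\star)$ matching that of the (known) coefficient $c_{S^\star}$.

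The algorithm itself draws $m = 2^{O(d)}/\gamma^2$ samples $(\bx_i,g(\bx_i))$ from $\mathcal D$, enumerates $\mathrm{supp}(\hat T)$, and computes the empirical estimate $\widetilde{\hat g}(S) = \tfrac1m\sum_i \chi_S(\bx_i)g(\bx_i)$ for every $S$ in the support. A Hoeffding bound combined with a union bound over the $\le 4^d$ sets guarantees that all the $\widetilde{\hat g}(S)$ are simultaneously within additive $\Theta(\gamma\cdot 4^{-d})$ of their true values with high probability; it then outputs the $S$ that maximizes $c_S\cdot\widetilde{\hat g}(S)$. When $c_{S^\star}>0$ the resulting $\chi_{S^\star}$ is positively correlated with $g$ and yields $\dist_{\mathcal D}(\chi_{S^\star},g)\le \tfrac12-\Omega(\gamma\cdot 4^{-d})$ directly, giving the lemma.

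The main obstacle I anticipate is the case $c_{S^\star}<0$, in which the pigeonhole only certifies an \emph{anti}-correlated parity $\chi_{S^\star}$ against $g$, while a pure parity on $\F_2^n$ cannot be negated while remaining a parity on the same coordinates. The cleanest workaround is to pre-augment the ambient domain with a single constant-$1$ coordinate, so that the affine function $\chi_{S^\star}\oplus 1$ becomes a genuine parity of size at most $d+1$---an inflation that is harmlessly absorbed into the $O(\log s)$ depth budget of the calling \Cref{claim:search}. Alternatively, one can appeal to the fact that the lemma is ultimately invoked on $g=(f^{\mathrm{ext}})_{\oplus\ell}$, which in the Yes case of $k$-NCP is itself a parity, forcing the positive-sign branch of the pigeonhole to be the witnessing one. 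I expect this sign-handling step to be the most delicate part of the formal write-up, while the Fourier + empirical estimation skeleton above is essentially mechanical.
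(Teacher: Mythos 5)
Your proposal follows the paper's proof essentially verbatim: expand $T$ over its at most $4^d$ nonzero Fourier coefficients, pigeonhole the correlation $\E_{\bx\sim\mathcal{D}}[T(\bx)g(\bx)]\ge 2\gamma$ to find a single $S$ in the support with $|\E_{\bx\sim\mathcal{D}}[\chi_S(\bx)g(\bx)]|\ge\Omega(\gamma 4^{-d})$, and then locate it by empirically estimating all candidate correlations to accuracy $\Theta(\gamma 4^{-d})$ with a Hoeffding-plus-union-bound argument over $2^{O(d)}/\gamma^2$ samples.

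The one place you go beyond the paper is the sign issue, and you are right to flag it: the pigeonhole only certifies $|\hat{g}_{\mathcal{D}}(S^\star)|$ large, and an anti-correlated parity does not satisfy the conclusion $\dist_{\mathcal{D}}(\chi_{S^\star},g)\le\frac12-\Theta(\gamma 4^{-d})$ (indeed, taking $T=g=\neg x_1$ under the uniform distribution shows the lemma is false for arbitrary $g$). The paper's own write-up silently elides this in the step ``$\hat{T}(S)\E[\chi_S(\bx)g(\bx)]\le\E[\chi_S(\bx)g(\bx)]$,'' which presumes the correlation is nonnegative. Of your two proposed fixes, only the second is viable here: augmenting the domain with a constant coordinate turns the output into an affine function, and the downstream \Cref{lem:zero-error} needs a genuine linear parity so that $\chi_{S^\star}=f$ on $D$ translates back into a sparse solution of $Hx=t$. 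The correct observation is that for $g=(f^{\mathrm{ext}})_{\oplus\ell}$ under $\mathrm{Unif}(\mathrm{Span}(D))_{\oplus\ell}$, \Cref{lem:boosting} and \Cref{lem:new parity amplification} force every parity to have distance exactly $0$ or exactly $\frac12$ from $g$, so anti-correlation simply cannot occur in the only setting where the lemma is invoked. With that caveat recorded (either by restricting the lemma's hypotheses or by noting the property of $g$ in \Cref{claim:search}), your argument is complete.
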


The proof of \Cref{lem:extraction} relies on the following properties of the Fourier spectrum of decision trees.

\begin{fact}[Fourier spectrum of decision trees]
\label{fact:dt-fourier}
    Let $T$ be a depth-$d$ decision tree on $n$ variables. Then, the following properties hold.
    \begin{enumerate}
        \item If a Fourier coefficient of $T$, $\hat{T}(S)$, for $S\sse [n]$ is nonzero then $S$ consists of coordinates queried along some path in $T$.
        \item The number of nonzero Fourier coefficients is at most $4^d$.
    \end{enumerate}
\end{fact}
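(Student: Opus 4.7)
The plan is to derive both parts simultaneously from the standard ``path-indicator'' Fourier expansion of a decision tree. First, I would index the leaves of $T$ and, for each leaf $\ell$, let $P_\ell \subseteq [n]$ denote the set of coordinates queried along the root-to-$\ell$ path $\pi_\ell$ and let $v_i^\ell \in \F_2$ denote the value required of $x_i$ along $\pi_\ell$. Writing the tree's output (in the $\pm 1$ encoding) as a sum over leaves,
\[
T(x) \;=\; \sum_{\ell} \mathrm{lab}(\ell)\cdot \mathbb{1}\bigl[x_{P_\ell} = (v_i^\ell)_{i\in P_\ell}\bigr],
\]
where $\mathrm{lab}(\ell)\in\{\pm 1\}$.

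Next I would Fourier-expand each path indicator. Using the identity $\mathbb{1}[x_i = b] = \tfrac{1}{2}\bigl(1 + (-1)^{b}\chi_{\{i\}}(x)\bigr)$ and multiplying over $i\in P_\ell$,
\[
\mathbb{1}\bigl[x_{P_\ell} = (v_i^\ell)_{i\in P_\ell}\bigr]
\;=\; 2^{-|P_\ell|}\sum_{S\subseteq P_\ell} (-1)^{\sum_{i\in S} v_i^\ell}\, \chi_S(x).
\]
Substituting this into the sum over leaves expresses $T$ as a linear combination of characters $\chi_S$ in which every contributing $S$ is contained in $P_\ell$ for some leaf $\ell$. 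Since the characters form an orthonormal basis, $\hat T(S)$ can only be nonzero if $S$ appears in this expansion, i.e.\ $S\subseteq P_\ell$ for some leaf $\ell$. This is exactly Part~1: $S$ is a subset of the coordinates queried along some root-to-leaf path.

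For Part~2 I would just count. The expansion contributes to $\hat T(S)$ only when $S\subseteq P_\ell$ for some leaf $\ell$, so the total number of $S$ with $\hat T(S)\ne 0$ is at most $\sum_\ell 2^{|P_\ell|}$. Since $T$ has depth $d$, every path satisfies $|P_\ell|\le d$, and the number of leaves is at most $2^d$; together these give $\sum_\ell 2^{|P_\ell|}\le 2^d\cdot 2^d = 4^d$, as claimed. There is no real obstacle here since both statements are bookkeeping facts about the path-indicator expansion; the only minor care needed is in Part~2, where the bound counts potentially nonzero Fourier coefficients (coefficients from different leaves could in principle cancel, but such cancellations only decrease the number of nonzero $\hat T(S)$, so the stated upper bound still holds).
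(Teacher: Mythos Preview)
Your proposal is correct and is essentially the standard path-indicator argument. The paper does not actually supply a proof of this fact: it simply cites \cite[Section~3.2]{ODBook} and remarks that Property~2 follows immediately from Property~1 (i.e., once you know every nonzero $\hat T(S)$ has $S$ contained in some root-to-leaf path, the count $\sum_\ell 2^{|P_\ell|}\le 2^d\cdot 2^d$ is automatic). Your write-up is precisely the proof one finds in that reference, so there is nothing to compare.
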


Property 2 in \Cref{fact:dt-fourier} follows immediately from property 1. A good reference for these properties is \cite[Section 3.2]{ODBook}.

\begin{proof}[Proof of \Cref{lem:extraction}]
    We show the following algorithm proves the lemma:
    \begin{enumerate}
        \item Draw $2^{O(d)}/\gamma^2$ random samples from $\mathcal{D}$ labeled by $g$.
        \item For every $S\sse[n]$ consisting of coordinates queried along some path in $T$, use the random samples to estimate $\Ex_{\bx\sim\mathcal{D}}[\chi_S(\bx)g(\bx)]$.
        \item Output the subset $S$ corresponding to the parity $\chi_S$ which is most well-correlated with $g$ over $\mathcal{D}$.
    \end{enumerate}
    There are at most $4^d$ subsets $S\sse[n]$ to check in step (2). Therefore, the runtime of this algorithm is $\poly(n,1/\gamma^2,2^d)$. It remains to prove correctness.
    
    Rewriting the assumption that $\Prx_{\bx\sim \mathcal{D}}[T(\bx)\neq g(\bx)]\le \frac{1}{2}-\gamma$ in terms of correlation, we have
    \begin{align*}
        \gamma&\le\Ex_{\bx\sim\mathcal{D}}[T(\bx)g(\bx)]\\
        &\le \sum_{S\sse [n]}\Ex_{\bx\sim\mathcal{D}}[\hat{T}(S)\chi_S(\bx)g(\bx)].\tag{Fourier expansion of $T$}
    \end{align*}
    By \Cref{fact:dt-fourier}, the number of nonzero Fourier coefficients $\hat{T}(S)$ is at most $4^d$ and therefore, there is some $S\sse [n]$ such that
    \begin{align*}
        \frac{\gamma}{4^d}&\le \Ex_{\bx\sim\mathcal{D}}[\hat{T}(S)\chi_S(\bx)g(\bx)]\\
        &\le \Ex_{\bx\sim\mathcal{D}}[\chi_S(\bx)g(\bx)].\tag{$\hat{T}(S)\le 1$}
    \end{align*}
    Moreover, this $S$ consists of coordinates queried along some path in $T$ by \Cref{fact:dt-fourier}. Using $2^{O(d)}/\gamma^2$ random samples from $\mathcal{D}$ labeled by $g$, the correlation $\Ex_{\bx\sim\mathcal{D}}[\chi_S(\bx)g(\bx)]$ can be estimated to within an additive accuracy of $\Theta(\frac{\gamma}{4^d})$ with a failure probability of $2^{-\Theta(d)}$. By a union bound over all $4^d$ subsets $S\sse[n]$ that the algorithm checks, all correlation estimates are within the desired accuracy bounds, and the algorithm successfully outputs a parity which achieves accuracy $1/2+\Theta(\gamma 4^{-d})$ in approximating $g$ over $\mathcal{D}$.\end{proof}
    


\begin{lemma}[Obtaining a zero-error parity for $f$ from a well-correlated parity for $(f^{\mathrm{ext}})_{\oplus \ell}$]
\label{lem:zero-error}
    Let $\mathcal{D}$ and $(f^{\mathrm{ext}})_{\oplus \ell}:(\F_2^\ell)^n\to\F_2$ be as in the statement of \Cref{claim:search}. If there is a parity $\chi_S$ for $S\sse [\ell n]$ such that $\dist_{\mathcal{D}}(\chi_S,(f^{\mathrm{ext}})_{\oplus \ell})\le \frac{1}{2}-\gamma$ for $\gamma>0$, then $\chi_{S^\star}(y)=f(y)$ for all $y\in D$ where $|S^\star|\le |S|/\ell$ and $S^\star$ consists of the coordinates $i\in [n]$ such that the $i$th block in $S$ is nonempty.
\end{lemma}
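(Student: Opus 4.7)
The plan is to chain together two earlier lemmas: \Cref{lem:new parity amplification} (to move from a well-correlated parity $\chi_S$ on the lifted domain to a well-correlated parity $\chi_{S^\star}$ on the original domain) and \Cref{lem:boosting} (to upgrade ``well-correlated with $f^{\mathrm{ext}}$ on $\mathrm{Span}(D)$'' to ``perfectly agreeing with $f$ on $D$''). No new technical ingredients are needed; the work is to unwind the reduction.

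First, I would apply \Cref{lem:new parity amplification} with $g = f^{\mathrm{ext}}$ and underlying distribution $\mathrm{Unif}(\mathrm{Span}(D))$, so that $g_{\oplus\ell} = (f^{\mathrm{ext}})_{\oplus \ell}$ and the lifted distribution is exactly $\mathcal{D} = \mathrm{Unif}(\mathrm{Span}(D))_{\oplus \ell}$. That lemma gives a dichotomy: either $S$ is block-complete, in which case the $\chi_S$-correlation on the lifted side equals the $\chi_{S^\star}$-correlation on the base side; or $S$ is not block-complete, in which case $\dist_{\mathcal{D}}(\chi_S, (f^{\mathrm{ext}})_{\oplus \ell}) = \tfrac{1}{2}$. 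The hypothesis $\dist_{\mathcal{D}}(\chi_S,(f^{\mathrm{ext}})_{\oplus \ell}) \le \tfrac{1}{2} - \gamma$ with $\gamma > 0$ rules out the second case, so $S$ is block-complete. This justifies the definition of $S^\star$ as the set of blocks $i \in [n]$ with the $i$th block of $S$ nonempty, and gives $|S^\star| = |S|/\ell$.

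Next, the block-complete case of \Cref{lem:new parity amplification} yields the transfer of correlation
\[
\dist_{\mathrm{Unif}(\mathrm{Span}(D))}\bigl(f^{\mathrm{ext}}, \chi_{S^\star}\bigr) \;=\; \dist_{\mathcal{D}}\bigl((f^{\mathrm{ext}})_{\oplus \ell}, \chi_S\bigr) \;\le\; \tfrac{1}{2} - \gamma \;<\; \tfrac{1}{2}.
\]
Now I would invoke the contrapositive of the ``amplification of the No case'' in \Cref{lem:boosting}. Inspecting that proof shows that it argues pointwise for a given parity: letting $A \subseteq [m]$ be the set of indices on which $\chi_{S^\star}$ disagrees with $f$, the disagreement probability between $\chi_{S^\star}$ and $f^{\mathrm{ext}}$ under $\mathrm{Unif}(\mathrm{Span}(D))$ is exactly $\tfrac{1}{2}$ if $A \ne \emptyset$ and $0$ if $A = \emptyset$. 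Since we just showed this disagreement probability is strictly less than $\tfrac{1}{2}$, we must have $A = \emptyset$, i.e., $\chi_{S^\star}(y) = f(y)$ for every $y \in D$, as desired.

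There is no real obstacle: both lemmas are ``plug and play,'' and the only thing to check is that the hypothesis $\dist < \tfrac{1}{2}$ is strict enough to exclude both the ``$S$ not block-complete'' case and the ``$A \ne \emptyset$'' case, which it is since $\gamma > 0$. The mild subtlety worth highlighting in the writeup is that linear independence of $D$ (\Cref{remark:linear-independence}) ensures that each element of $\mathrm{Span}(D)$ has a unique expansion $\sum_{i \in I} y^{(i)}$, which is what makes the $A = \emptyset$ conclusion on $\mathrm{Span}(D)$ translate cleanly to $\chi_{S^\star} \equiv f$ on $D$ itself.
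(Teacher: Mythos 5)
Your proof is correct and follows essentially the same route as the paper's: apply \Cref{lem:new parity amplification} to transfer the correlation from $\chi_S$ on the lifted domain to $\chi_{S^\star}$ on $\mathrm{Span}(D)$, then invoke the contrapositive of the No-case of \Cref{lem:boosting} to conclude $\chi_{S^\star}$ agrees with $f$ on all of $D$. Your explicit use of the dichotomy in \Cref{lem:new parity amplification} to rule out the non-block-complete case is a slightly more careful rendering of a step the paper states only implicitly, but it is the same argument.
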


\begin{proof}
    \Cref{lem:new parity amplification} states that there is a parity $S^\star$ of size $|S|/\ell$ satisfying $\dist_{\mathrm{Unif}(\mathrm{Span}({D}))}(\chi_{S^\star},f^{\mathrm{ext}})\le \frac{1}{2}-\gamma$. Further, $S^\star$ consists of the coordinates $i\in [n]$ such that the $i$th block in $S$ is nonempty. Finally, the contrapositive of the no case in \Cref{lem:boosting} implies that $\chi_{S^\star}(y)=f(y)$ for all $y\in D$. Indeed, if it were the case that $\chi_{S^\star}$ disagrees with $f$ on some input $y\in D$, then \Cref{lem:boosting} shows that $\dist_{\mathrm{Unif}(\mathrm{Span}({D}))}(\chi_{S^\star},f^{\mathrm{ext}})= \tfrac{1}{2}$ which contradicts our assumption on the error of $\chi_{S^\star}$.\footnote{We are using the fact that implicit in the proof of \Cref{lem:boosting} is the following: for any parity $\chi_S$, if $\chi_S$ disagrees with $f$ on at least one $x\in D$, then $\chi_S$ disagrees with $f^{\mathrm{ext}}$ on exactly half of the inputs from $\mathrm{Span}(D)$.  
    }
\end{proof}

\subsection{Proof of \Cref{claim:search}}

First, we prune all paths in $T$ at depth $c\log s$ to obtain a tree $T'$. \Cref{claim:pruning-depth} ensures that this process doesn't increase the error of $T'$ too much:
\begin{align*}
    \dist_{\mathcal{D}}(T',(f^{\mathrm{ext}})_{\oplus \ell})&\le \dist_{\mathcal{D}}(T,(f^{\mathrm{ext}})_{\oplus \ell}))+s^{1-c(1-1/\ell)}\tag{\Cref{claim:pruning-depth}}\\
    &\le \frac{1}{2}-\gamma+s^{1-c(1-1/\ell)}.\tag{Assumption on $T$}
\end{align*}
After pruning, $T'$ has depth small enough that, in polynomial time, we can apply \Cref{lem:extraction} to obtain a well-correlated parity $\chi_S$ of size $\le c\log s$. The error of this parity is bounded:
\begin{equation}
\label{eq:par-error}
\dist_{\mathcal{D}}(\chi_S,(f^{\mathrm{ext}})_{\oplus \ell})\leq \frac{1}{2}-\Theta\bigg(\frac{\gamma-s^{1-c(1-1/\ell)}}{s^{2c}}\bigg).
\end{equation}
By our assumption that $\gamma\ge\Omega(s^{1-c(1-1/\ell)})$, \Cref{eq:par-error} can be rewritten as $\dist_{\mathcal{D}}(\chi_S,(f^{\mathrm{ext}})_{\oplus \ell})\le \frac{1}{2}-\gamma'$ for some $\gamma'>0$. Therefore, \Cref{lem:zero-error} implies that we can find a parity $S^\star$ of size $\le \frac{c\log s}{\ell}$ such that $\chi_{S^\star}(y)=f(y)$ for all $y\in D$ as desired.\hfill\qed

\subsection{Proof of \Cref{thm:most general}}
By \Cref{thm:general main reduction}, for any $\alpha>1$, given an NCP instance where the nearest codeword is within distance $k$ of the received word, there is an algorithm running in time $O(\ell n)\cdot t(\ell n,2^{\ell k},2^{O(\alpha \ell k)},\eps)$ for $\eps=\frac{1}{2}-2^{-\Omega(\alpha \ell k)}$ which outputs a decision tree of size $2^{O(\alpha \ell k)}$ for $(f^{\mathrm{ext}})_{\oplus \ell}$ and has error $\eps$ in computing $(f^{\mathrm{ext}})_{\oplus \ell}$ over $\mathcal{D}=\mathrm{Unif}(\mathrm{Span}({D}))_{\oplus \ell}$. Therefore, by \Cref{claim:search} we can extract a parity of size $|S|\le \alpha k$ which is consistent with $f$ over $D$. Equivalently, we have found a codeword within distance $\alpha k$ of the received word as desired. Since this extraction step requires an additional $\poly(n,\ell,2^{\alpha \ell k})$ time, the proof is completed.\hfill\qed



\section{Proofs of corollaries}

\subsection{Proof of \Cref{cor:improving EH gives new NCP algorithms}}
Let $\mathcal{A}$ be the learner from the corollary statement. Using \Cref{thm:most general} with $\ell=2$, we show that $\mathcal{A}$ solves $O(\log n)$-approximate $k$-NCP. Given a decision tree target of size $2^{2k}$ and random labeled examples from $\mathcal{D}$, $\mathcal{A}$ runs in time $n^{o(k)}$ and outputs a decision tree hypothesis with accuracy $\frac{1}{2}+\frac{1}{\poly(n)}$. If $\alpha=O(\log n)$, then the size of the decision tree hypothesis is at most $n^{o(k)}\le 2^{O(\alpha k)}$ and the error of the hypothesis satisfies $\eps=\frac{1}{2}-\frac{1}{\poly(n)}\le \frac{1}{2}-2^{-\Omega(\alpha k)}$. Therefore, \Cref{thm:most general} shows that $\mathcal{A}$ solves $O(\log n)$-approximate $k$-NCP for $k=\Theta(\log s)$.

\subsection{Proof of \Cref{cor:W1 hardness of polytime}}
Suppose for contradiction there is a learner $\mathcal{A}$ satisfying the constraints of the corollary statement. We will use $\mathcal{A}$ to solve $c$-approximate $k$-NCP for some constant $c>1$ in randomized polynomial-time. By \Cref{thm:w1-hardness}, this implies that there is a randomized FPT algorithm for all of $\mathrm{W}[1]$. 

Let $c'$ be a constant so that $\mathcal{A}$ runs in time $n^{c'}$ when given random examples from $\mathcal{D}$ labeled by a size-$n$ decision tree and outputs a hypothesis with error $\frac{1}{2}-\frac{1}{n^{c'}}$ under $\mathcal{D}$. Let $c$ be a sufficiently large constant relative to $c'$ (to be chosen later). We will use $\mathcal{A}$ to solve $c$-approximate $k$-NCP over $\F_2^n$. We assume that $n$ is large enough so that $\log n\ge k$. Let $\ell=(\log n)/k$. Given a decision tree target of size $2^{\ell k}=n$, $\mathcal{A}$ runs in time $n^{c'}$ and outputs a decision tree hypothesis of size at most $n^{c'}\le 2^{O(c \ell k)}=n^{O(c)}$, assuming $c$ is a large enough. Likewise, the error of the hypothesis is at most $\frac{1}{2}-n^{-c'}\le \frac{1}{2}-2^{-\Omega(c\ell k)}=\frac{1}{2}-n^{-\Omega(c)}$, again assuming that $c$ is large enough. By \Cref{thm:most general}, this shows that $\mathcal{A}$ solves $c$-approximate $k$-NCP in $\poly(n)$ time as desired.

\subsection{Proof of \Cref{cor:explicit superpoly}}
Let $\mathcal{A}$ be the learner from the corollary statement. Let $c'>1$ be a constant such that $\mathcal{A}$ learns decision tree targets of size $n$ with decision tree hypotheses of size $n^{c'}$. We start by proving ETH hardness.

\paragraph{ETH hardness.}
Combining \cite[Theorem 1]{LLL24} and \cite[Theorem 11]{LRSW22} yields the following reduction from 3-SAT to $k$-NCP.
\begin{theorem}[Reduction from solving 3-SAT exactly to approximating $k$-NCP]
\label{thm:lll reduction}
    For all constant $c>1$, there is a constant $q>1$ such that for all $k\in\N$ the following holds. There is a reduction running in time $\poly(m,2^{m/k})+\poly(m,2^{k})$ which maps $3$-\textnormal{SAT} instances $\varphi$ consisting of $m$ clauses to \textnormal{NCP} instances $(G,z)$ of size $\poly(m,2^{m/k})$ such that
    \begin{itemize}
        \item[$\circ$] \textnormal{Yes} case: if $\varphi$ is satisfiable then $(G,z)$ is a \textnormal{Yes} instance of $c$-approximate $k^q$-\textnormal{NCP};
        \item[$\circ$] \textnormal{No} case: if $\varphi$ is not satisfiable, then $(G,z)$ is a \textnormal{No} instance of $c$-approximate $k^q$-\textnormal{NCP}.
    \end{itemize}
\end{theorem}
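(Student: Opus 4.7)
My plan is to derive \Cref{thm:lll reduction} not from scratch but by carefully composing the two cited ingredients: the ETH-to-sparse-linear-system reduction of Li--Lin--Liu~\cite{LLL24} and the gap-producing encoding of Lin--Ren--Sun--Wu~\cite{LRSW22}. The template is the standard trick for parameterizing SAT-style problems: break the $n = O(m)$ variables of a (sparsified) $3$-\textsc{SAT} instance $\varphi$ into $k$ blocks of size $n/k$, enumerate all $2^{n/k}$ partial assignments per block, and treat ``choose one partial assignment per block'' as the search for a $k$-sparse indicator vector over a ground set of $k \cdot 2^{n/k}$ items. The $3$-\textsc{SAT} clauses translate into $\F_2$-linear constraints on these indicators, producing a parity-check instance $(H,t) \in \F_2^{\poly(m,2^{m/k})\times k 2^{n/k}} \times \F_2^{\poly(m,2^{m/k})}$ whose $k$-sparse solutions are exactly the satisfying assignments of $\varphi$. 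This already delivers the size and runtime bound $\poly(m,2^{m/k})$ and yields an exact ($c=1$) correspondence.

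Next I would invoke \cite{LRSW22} to turn this exact reduction into a gap reduction. Their construction essentially tensors / composes the parity-check matrix with an inner code that amplifies the minimum-distance trade-off, so that if $\varphi$ is unsatisfiable then every vector of sparsity $c \cdot k^q$ fails to satisfy the syndrome equation $Hx = t$, for a constant $q = q(c)$ depending only on the desired approximation ratio $c$. The extra cost of this amplification is $\poly(m,2^{k})$ time (from handling the local gadget on each block), which is precisely the additive $\poly(m,2^{k})$ term in the claimed runtime. Finally I would apply the standard parity-check-to-generator conversion (\Cref{prop:equivalence of parity and generator views}) to obtain the instance $(G,z)$ in the form required by the theorem statement.

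Putting the pieces together: in the \textnormal{Yes} case, satisfying assignments of $\varphi$ give $k$-sparse vectors $x$ with $Hx=t$, which under the gap composition remain valid codewords at distance $\le k^q$ from $z$; in the \textnormal{No} case, no $k^q c$-sparse vector can satisfy the system, so every codeword is at distance $> c k^q$ from $z$. Verifying the parameter accounting---that $q$ depends only on $c$, that the size blowup is $\poly(m,2^{m/k})$, and that the runtimes of the two components add rather than multiply---is a matter of chasing definitions through the two cited constructions.

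The main obstacle, and the reason I would treat the two cited theorems as black boxes rather than reprove them, is the gap-amplification step: ensuring that $q$ is an absolute constant independent of $k$ and $m$ requires a PCP-style tensoring argument whose full analysis is substantial and is precisely the content of \cite{LRSW22}. Consequently the real work in my ``proof'' is just bookkeeping: instantiating \cite[Theorem 11]{LRSW22} with the parameters coming out of \cite[Theorem 1]{LLL24}, and confirming that the block-wise enumeration cost ($\poly(m,2^{m/k})$) and the inner-gadget cost ($\poly(m,2^{k})$) compose additively to the stated bound.
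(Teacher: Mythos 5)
Your proposal takes essentially the same route as the paper, which establishes this theorem only by the one-line observation that it follows from combining \cite[Theorem 1]{LLL24} with \cite[Theorem 11]{LRSW22}; your composition of a block-wise enumeration stage (cost $\poly(m,2^{m/k})$) with a gap-amplifying stage (cost $\poly(m,2^{k})$), treating both cited results as black boxes, is exactly the intended bookkeeping. The one caution is that your attribution of the two roles (base reduction to a sparse linear system versus gap amplification) to the two cited theorems may be swapped relative to the sources, but since the paper itself does not unpack them, this does not change the approach.
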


Using \Cref{thm:lll reduction}, we show how to refute randomized ETH if $\mathcal{A}$ runs in time $n^{(\log n)^{\delta}}$ for sufficiently small $\delta>0$.  Let $\varphi$ be a 3-SAT instance on $n$ variables with $m$ clauses. By \Cref{thm:lll reduction}, for a constant $c>1$ (which is sufficiently larger than $c'$ and is chosen later), there is a constant $q>1$ such that the reduction holds for all $k\in\N$. Let $k=m^{\lambda}$ for any $0<\lambda<1/(2q)$ and let $(H,t)\in\F_2^{M\times N}\times \F_2^M$ for $M+N=\poly(m,2^{m/k})= 2^{O(m^{1-\lambda})}$ be the $k^q$-NCP instance from \Cref{thm:lll reduction}. To refute randomized ETH, it is sufficient to solve $c$-approximate $k^q$-NCP with respect to $(H,t)$ in randomized time $2^{o(m)}$.
Assume that $\delta$ is small enough so that $(1-\lambda)(1+\delta)<1$. We claim that by \Cref{thm:most general} with $\ell=2$, the learner $\mathcal{A}$ solves the $k^q$-NCP instance in the desired amount of time. 

Given a decision tree target of size $2^{2k^q}$ over $2N$ variables, $\mathcal{A}$ runs in time $(2N)^{(\log 2N)^\delta}= 2^{O(m^{(1-\lambda)(1+\delta)})}=2^{o(m)}$ by our assumption on $\delta$. We use here the fact that the size of the target satisfies $2^{2k^q}=2^{2m^{\lambda q}}\le 2N$ by our choice of $\lambda$. Moreover, $\mathcal{A}$ outputs a decision tree hypothesis of size $(2N)^{c'}\le 2^{O(ck^q)}$ with error $\frac{1}{2}-\frac{1}{N^{c'}}\le \frac{1}{2}-2^{-\Omega(ck^q)}$ for sufficiently large $c$. By \Cref{thm:most general}, this shows that $\mathcal{A}$ solves $c$-approximate $k$-NCP with high probability in $2^{o(m)}$ time as desired.

\paragraph{Gap-ETH hardness.}
The following reduction is implicit in \cite{Man20} by stringing together the reduction from 3-SAT to {\sc Label Cover} (\cite[Theorem 9]{Man20}), and from {\sc Label Cover} to NCP (\cite[Corollary 5]{Man20}).

\begin{theorem}[Reduction from gapped 3-SAT to approximating $k$-NCP]
\label{thm:manurangsi reduction}
    For all constants $c>1$ and $\lambda>0$, and for every $k\in\N$, there is a reduction running in time $\poly(k,m,2^{m/k})$ which maps 3-\textnormal{SAT} instances $\varphi$ consisting of $m$ clauses to \textnormal{NCP} instances $(G,z)$ of size $\poly(k,m,2^{m/k})$ such that 
    \begin{itemize}
        \item[$\circ$] \textnormal{Yes} case: if $\varphi$ is satisfiable then $(G,z)$ is a \textnormal{Yes} instance of $c$-approximate $k$-\textnormal{NCP};
        \item[$\circ$] \textnormal{No} case: if every assignment to $\varphi$ satisfies at most $(1-\lambda)m$ clauses, then $(G,z)$ is a \textnormal{No} instance of $c$-approximate $k$-\textnormal{NCP}.
    \end{itemize}
\end{theorem}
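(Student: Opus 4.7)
The plan is to derive this theorem as a direct composition of two reductions already contained in \cite{Man20}, chaining
\[
\text{gapped 3-SAT} \;\longrightarrow\; \text{gap Label Cover} \;\longrightarrow\; c\text{-gap }k\text{-NCP},
\]
and carefully tracking parameters so that the overall size and runtime stay within $\poly(k,m,2^{m/k})$ while the final multiplicative gap is at least the target constant $c>1$.

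The first step is to apply \cite[Theorem 9]{Man20} to the 3-SAT instance $\varphi$. By grouping the $m$ clauses into $\Theta(k)$ blocks of size $\Theta(m/k)$ and identifying each block's satisfying partial assignments with labels, this produces a gap Label Cover instance $L$ of alphabet size $2^{O(m/k)}$ and total size $\poly(m,2^{m/k})$. Completeness is immediate: a satisfying assignment of $\varphi$ induces a fully satisfying labeling of $L$. Soundness is the standard block-reduction analysis: if no assignment satisfies more than $(1-\lambda)m$ clauses, then the value of $L$ is at most $1-\Omega(\lambda)$, and this gap can be amplified as needed by choosing the block parameter appropriately.

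The second step is to feed $L$ into the Label-Cover-to-NCP reduction of \cite[Corollary 5]{Man20}. This encoding is designed so that satisfying labelings correspond bijectively to low-weight codewords near $z$: in the Yes case, a fully satisfying labeling yields a codeword within Hamming distance $k$ of $z$; in the No case, any codeword within distance $ck$ would induce, by the same correspondence, a Label Cover labeling exceeding the soundness bound, contradicting the soundness we obtained in step one. The blowup from $L$ to $(G,z)$ is polynomial in the Label Cover size and in $k$, so the composed instance remains of size $\poly(k,m,2^{m/k})$.

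The final step is a parameter-tracking exercise: select the block size in step one and the gap-amplification parameter in step two so that the soundness gap propagated through the second reduction is at least $c$, and verify that no intermediate object exceeds $\poly(k,m,2^{m/k})$ in size or runtime. The main obstacle is exactly this bookkeeping---matching the Label Cover alphabet to the code parameters needed for a $c$-factor NCP gap without inflating the $2^{m/k}$ term---but since the two component reductions in \cite{Man20} were written with this composition in mind, the parameters line up essentially by design, and the theorem follows.
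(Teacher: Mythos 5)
Your proposal matches the paper's approach exactly: the paper also obtains this theorem by composing \cite[Theorem 9]{Man20} (gapped 3-SAT to Label Cover, via the standard clause-blocking with blocks of size $\Theta(m/k)$) with \cite[Corollary 5]{Man20} (Label Cover to $k$-NCP), noting that the composed reduction and instance size stay within $\poly(k,m,2^{m/k})$. The paper gives no further detail beyond this citation chain, so your parameter-tracking elaboration is, if anything, more explicit than the original.
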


Using \Cref{thm:manurangsi reduction}, we show how to refute randomized Gap-ETH if $\mathcal{A}$ runs in time $n^{o(\log n)}$. Let $\varphi$ be a 3-SAT instance on $n$ variables and with $m$ clauses and let $\lambda>0$ be given. Using \Cref{thm:manurangsi reduction} with $k=\sqrt{m}$ and for $c$ larger than $c'$ (to be specified later), we obtain a $c$-approximate $k$-NCP instance $(H,t)\in\F_2^{M\times N}\times\F_2^M$ where $H$ is the parity check matrix for a linear code and $M+N=\poly(k,m,2^{O(m/k)})=2^{O(\sqrt{m})}$. Note in particular we can assume $2^{2k}\le 2N$ (this will satisfy our assumption on the size of the target decision tree). By \Cref{thm:manurangsi reduction}, to approximate the number of satisfiable clauses of $\varphi$, it is sufficient to solve $c$-approximate $k$-NCP on $(H,t)$ in randomized time $2^{o(m)}$. We claim that by \Cref{thm:most general} with $\ell=2$, the learner $\mathcal{A}$ solves the $k$-NCP instance in the desired amount of time. 

Given a decision tree target of size $2^{2k}$ over $2N$ variables, $\mathcal{A}$ runs in time $(2N)^{o(\log 2N)}=(2^{O(\sqrt{m})})^{o(\sqrt{m})}=2^{o(m)}$. Moreover, $\mathcal{A}$ outputs a decision tree hypothesis of size $(2N)^{c'}\le 2^{O(ck)}$ with error $\frac{1}{2}-\frac{1}{N^{c'}}\le \frac{1}{2}-2^{-\Omega(ck)}$ for sufficiently large $c$. By \Cref{thm:most general}, this shows that $\mathcal{A}$ solves $c$-approximate $k$-NCP with high probability in $2^{o(m)}$ time as desired.\hfill\qed

\section*{Acknowledgments}

%


We thank the FOCS reviewers for their helpful feedback and suggestions.

The authors are supported by NSF awards 1942123, 2211237, 2224246, a Sloan Research Fellowship, and a Google Research Scholar Award. Caleb is also supported by an NDSEG Fellowship, and Carmen by a Stanford Computer Science Distinguished Fellowship and an NSF Graduate Research Fellowship.



%


 \bibliography{ref}
 \bibliographystyle{alpha}

\end{document}